\newtheorem{theorem}[subsubsection]{Theorem}
\newtheorem{corollary}[subsubsection]{Corollary}
\newtheorem{lemma}[subsubsection]{Lemma}
\newtheorem{proposition}[subsubsection]{Proposition}
\newtheorem*{Proposition}{Proposition A}
\theoremstyle{definition}
\newtheorem{example}[subsubsection]{Example}
\newtheorem{definition}[subsubsection]{Definition}
\theoremstyle{remark}
\newtheorem{remark}[subsubsection]{Remark}
\newcommand{\Lk}{\mathop{\rm Lk}}
\newcommand{\Hom}{\operatorname{Hom}}
\newcommand{\gen}[1]{\langle #1\rangle}
\newcommand{\cls}[1]{\overline{#1}}
\newcommand{\ins}{\subseteq}
\newcommand{\di}{\partial}
\tikzset{
  big arrow/.style={
    decoration={markings,mark=at position 1 with {\arrow[scale=2.5]{>}}},
    postaction={decorate},
    shorten >=0.4pt}}
\tikzset{middlearrow/.style={
        decoration={markings,
            mark= at position 0.5 with {\arrow{#1}} ,
        },
        postaction={decorate}
    }
}
\title[Geometrization of Adinkras II]{Geometrization of $N$-extended 1-Dimensional Supersymmetry Algebras, II.}
\author{Charles Doran}
\address[Charles Doran, Jordan Kostiuk]{Department of Mathematical and Statistical Scineces, University of Alberta, Edmonton, AB T6G 2G1, Canada}
\email{charles.doran@ualberta.ca, jkostiuk@ualberta.ca}
\author{Kevin Iga}
\address[Kevin Iga]{Natural Science Division, Pepperdine University, Malibu, CA 90263, USA}
\email{kiga@pepperdine.edu}
\author{Jordan Kostiuk}
\author{Stefan M\'{e}ndez-Diez}
\address[Stefan M\'{e}ndez-Diez]{Mathematics Program, Bard College, Annandale-on-Hudson, NY 12504-5000, USA}
\email{smendezdiez@bard.edu}
\begin{document}

\begin{abstract}
The problem of classifying off-shell representations of the $N$-extended one-dimensional super Poincar\'{e} algebra is closely related to the study of a class of decorated $N$-regular, $N$-edge colored bipartite graphs known as Adinkras. In previous work we canonically embedded these graphs into explicitly uniformized Riemann surfaces via the ``dessins d'enfants'' construction of Grothendieck.  The Adinkra graphs carry two additional structures: a selection of dashed edges and an assignment of integral heights to the vertices.  In this paper, we complete the passage from algebra, through discrete structures, to geometry.  We show that the dashings correspond to special spin structures on the Riemann surface, defining thereby super Riemann surfaces.  Height assignments determine discrete Morse functions, from which we produce a set of {\em Morse divisors} which capture the topological properties of the height assignments.  
\end{abstract}

\maketitle

\tableofcontents

\section{Introduction}

Graphs known as Adinkras were proposed by Faux and Gates in \cite{Faux:2004} as a fruitful way to investigate off-shell representations of the super Poincar\'{e} algebra. Adinkras are graphs whose vertices represent the particles in a supermultiplet and whose edges correspond to the supersymmetry generators. In combinatorial terms, Adinkras are $N$-regular, edge $N$-colored bipartite graphs with signs assigned to the edges, and heights assigned to the vertices, subject to certain conditions. Details can be found in Section 2 of \cite{Doran:2015}.
\footnote{See \cite{Doran2017} for the algebraic formulation of this classification problem in terms of filtered Clifford supermodules.}

It is useful to think of an Adinkra as consisting of a {\em chromotopology}, which captures the underlying bipartite graph with its $N$-coloring, together with two more compatible structures: an {\em odd dashing}, which marks each edge with a sign, and a {\em height assignment}, which labels each of the vertices with an integer. A complete characterization of chromotopologies was achieved in \cite{Doran:2011}. For each $N$, there is a natural chromotopology on the Hamming cube $[0,1]^N$, with vertices labeled by elements of $\mathbf{F}_2^N$. The $1$-skeleton of the Hamming cube serves as a ``universal cover'' for arbitrary chromotopologies, the covering map being realized by taking cosets with respect to doubly even binary linear error correcting codes $C \subseteq \mathbf{F}_2^N$. In particular, to each Adinkra chromotopology, there is an associated doubly-even code $C$ realizing it as the quotient of the Hamming cube. 

In Part I of this paper, we constructed a Riemann surface out of the underlying chromotopology of an Adinkra $A$ in such a way that the Adinkra graph is the $1$-skeleton of the Riemann surface. 
This construction was described explicitly in three different ways. 
The first description used Grothendieck's theory of  \emph{dessins d'enfants}, which associates a canonical Riemann surface, presented as a cover of the Riemann sphere branched over three points, to a ribbon graph structure---a bipartite graph, together with a cyclic ordering of the edges incident to each vertex. 
In order to place a ribbon graph structure on an Adinkra, we fixed a \emph{rainbow}, which is a cyclic ordering of the $N$ edge-colors.
With this choice made, we oriented the edges incident to a white vertex in agreement with the rainbow, while orienting the edges incident to a black vertex in the opposite order of the rainbow.  
This choice of orientations gave rise to a Riemann surface that \emph{minimally} embedds the Adinkra (with respect to genus). 
Indeed, for the $N$-cube, this is classical and follows from  Euler's classical polyhedron formula \cite{Beinekea}, and the result follows in general with a slight modification of this classical argument. 
By design, the structure map factored through a degree-$N$ mapping of the Riemann sphere to itself; the image of the Adinkra on this sphere is a graph with one white vertex, one black vertex, and one edge of each color joining the two. 
This allowed us to consider all of the Riemann surfaces constructed from Adinkras as branched covers of this ``beachball'', which we denoted by $B_N$. 
One advantage of this approach was the following: the deck-transformation group from the hypercube surface to the beach ball is the maximally even code of length $N$. 
Adinkras corresponding to doubly-even codes are then the free quotients by the doubly-even code viewed as a subgroup of deck transformations. 
Not only is this  an attractive feature from the point of view of covering space (i.e., the covers are unbranched), but this also reflects the importance/elegance of the connection between Adinkras and doubly-even codes.

The second description of our construction is given in terms of  Fuchsian uniformizations.
All of the Riemann surfaces described above correspond, via the uniformization theorem, to torsion-free subgroups of the $(2,N,N)$-triangle group. 
Thus, for $N>4$, our Riemann surfaces can be presented as quotients of the upper half-plane by a subgroup of M\"{o}bius transformations. 
In our work, we \emph{explicitly} determined the Fuchsian subgroups that gave rise to our surfaces in terms of the doubly-even code that is associated to the Adinkra.

Finally, we described the hypercube Riemann surface as a complete intersection of quadrics in projective space, and the intermediate surfaces as quotients by deck transformations. 
This presentation yielded a description of the deck transformations over the beach ball as sign changes on coordinates with the centers of the quadrilateral faces corresponding to the coordinate hyperplane sections.
This model is visibly defined over a (real subfield of a) cyclotomic field, but it was shown using the theory of Galois descent that the curves are in fact definable over the rational numbers. 
A priori, the curves were definable over $\cls{\mathbf{Q}}$ because of the Belyi map, but in fact these curves are much more special --- this is a striking reflection of the symmetry appearing in supersymmetry. 

It is interesting to note that the Riemann surfaces constructed from Adinkra graphs share many of the properties of the most special Riemann surfaces studied in arithmetic geometry.  For example, the famous Fricke-Macbeath curve of genus seven shares much in common with with the curves associated to the $N=7,k=3$ codes.  Such curves are genus seven covers of the seven-pointed sphere, are defined over $\mathbf{Q}$, have monodromy group isomorphic to $(\mathbf{Z}/2\mathbf{Z})^3$, and have Jacobian abelian varieties that split completely up to isogeny into elliptic curve factors.  The methods of computing the isogenous decomposition for Adinkra Riemann surfaces developed in this paper, however, permit one to show that these two Riemann surfaces are nevertheless {\em not} isomorphic.

Denote by $\mathcal{R}(A)$ the set of all pairs $(A,r)$ where $A$ is an Adinkra and $r$ is a rainbow. 
If $\mathcal{A}=(A,r)\in\mathcal{R}(A)$, the Riemann surface constructed in \cite{Doran:2015} will be denoted by $X_\mathcal{A}$; we often denote the Riemann surface associated to the $N$-cube simply by $X_N$, and we will make the rainbow explicit if needed. 
One clarification also needs to be made.  
We had claimed that $R$-symmetry produces isomorphic Riemann surfaces; that is, changing the fixed rainbow would produce isomorphic surfaces. 
Unfortunately, this is not true in general, although it \emph{is} true for the $N$-cube, and the proof found in \cite{Doran:2015} is correct. 
Certain $R$-symmetries will yield isomorphic Riemann surfaces. 
One obvious such $R$-symmetry is the rotation $R$-symmetry, where each edge is given the next color in the rainbow. 
That this produces an isomorphic Riemann surface can best be seen by observing that such a rainbow change is induced by any deck transformation that lifts the rotation morphism on the beachball $B_N$, as described in \cite{Doran:2015}. 
In addition to the rotation symmetry, it is safe to act on the rainbow via permutations that stabilize the associated doubly-even code. 
That is, such permutations \emph{will} give rise to isomorphic Riemann surfaces. 
A proof of this fact is in Appendix A. 
We propose that at this point in the construction, the invariant object in question should be a set of Belyi curves $X_{\mathcal{A}}$, indexed by the set of possible rainbows.

In the present paper we complete the passage from supersymmetry algebras, through discrete structures (Adinkras), to {\em geometry} by naturally interpreting both of the remaining Adinkra structures in geometric terms.  

The Adinkra graph on the Riemann surface defines a {\em dimer model}, and using work of Cimasoni-Reshetikhin \cite{Cimasoni:2007} and the classification of odd-dashings via cubical cohomology in \cite{Doran2012b} we show that the odd-dashings correspond to special spin structures on the Riemann surface, and hence to {\em super Riemann surfaces}.  

From the height assignments on the Adinkra graphs we construct {\em discrete Morse functions} in the sense of Banchoff \cite{Banchoff:1970} on the topological surface.  
Using this description, we produce a set of {\em Morse divisors} on the Riemann surface which capture the topological properties of the height assignments.  
In Theorem \ref{jacobiandecom} we show that the Jacobians of these Riemann surfaces are always isogenous to the product of Jacobians of all possible intermediate hyperelliptic subcovers of $B_N$.
This partial decomposition allows us to see the non-trivial action of $R$-symmetry.

We conclude with some first comments on the lessons learned from considering the full geometric package coming from supersymmetry via Adinkras.

We would like to thank S.J. Gates, Jr., P. Green, T. H\"{u}bsch, and G. Landweber for extended discussions and useful suggestions while writing this paper. The argument in Appendix B is a modification of P. Green's proof recast in the language of Kani-Rosen needed for our purposes. CD and JK acknowledge support from the Natural Sciences and Engineering Research Council of Canada, the Pacific Institute for the Mathematical Sciences, and the Campobassi  Professorship at the University of Maryland.  All of the authors acknowledge the support of the National Science Foundation Grant \#1602991, which provided for a workshop in supersymmetry that allowed us to meet and work on this project together.

\section{From Odd Dashings to Spin Structure}

In this section, we will show how odd dashings on an Adinkra correspond to spin structures on the associated Riemann surfaces. 
Whereas in the first part of this paper our focus was on the construction of a Riemann surface $X_\mathcal{A}$ from the chromotopology of the Adinkra graph $A$, we now make critical use of both the embedding of $A$ in $X_\mathcal{A}$ and of the odd dashing structure on $A$. 
Cimasoni and Reshetikhin have shown that Kasteleyn orientations on a graph embedded in a Riemann surface are in one-to-one correspondence with spin structures once a perfect matching, or \emph{dimer configuration}, of the edges has been fixed \cite{Cimasoni:2007}. 
The Adinkra $A$ embedded in $X_\mathcal{A}$ comes with $N$ natural choices for dimer configurations by considering the subset of edges of a fixed color, thereby allowing us to use the theory of Kasteleyn orientations on graphs embedded in a Riemann surface developed in \cite{Cimasoni:2007}.

Before constructing spin structures from odd dashings, we review the cubical complex associated to an Adinkra that was used to classify the set of odd dashings on an Adinkra  in \cite{Doran2012b}. 
There are striking similarities to the structure of the set of all such odd dashing, as well as the obstruction to their existence when considering graphs corresponding to even codes, not just Adinkras.
These similarities --- together with the fact that the Riemann surface with embedded graph from the first part of this paper provides exactly what Cimasoni and Reshetikhin use for their construction --- inspired us to associate spin structures to odd dashings in the first place.

\subsection{Review of the Cubical Complex}
 Let $I$ be the unit interval, and $I^N$ the $N$-dimensional hypercube.
 The hypercube $I^N$ has a natural CW-complex structure with $3^N$ cells corresponding to the elements of $\{0,1,*\}^N$, where $*$ is a formal symbol for the unit interval. The dimension of the cell corresponding to $(x_1,\dots, x_N)\in\{0,1,*\}^N$ is the number of $x_j$ for which $x_j=*$.  
The Adinkra $A_N$ corresponding to the hypercube is the $1$-skeleton of $I^N$ in this cell complex. The white and black vertices correspond to the $0$-cells with an even, respectively, odd number of $1$'s. An edge is a $1$-cell $(x_1,\dots, x_N)$, and is colored $j$ for the unique entry for which $x_j=*$.

We will be interested in cubical homology with both $\mathbf{Z}$ and $\mathbf{Z}/2\mathbf{Z}$ as coefficient rings.  
Let $C_k(I^N)$ be the free abelian group (or $\mathbf{Z}/2\mathbf{Z}$-module) generated by the $k$-dimensional cells.
Given a cell $(x_1,\cdots,x_N) \in\{0,1,*\}^N$, an integer $j \in\{0,\cdots,N\}$, and $\alpha \in
\{0,1,*\}$,  the replacement operator  $\textrm{Repl}_{j,\alpha}$ is defined as $$\textrm{Repl}_{j,\alpha}(x_1,\cdots,x_N) = (x_1,\cdots,x_{j-1},\alpha,x_{j+1},\cdots,x_N).$$
 These replacement operators are used in \cite{Doran2012b} to define boundary operators on the associated chain complex.

For each $1\leq j\leq N$, the reflection map $\rho_j\colon\{0,1,*\}^N\to \{0,1,*\}^N$ is defined by 
$$\rho_j(x_1,\dots, x_N)=\left\{\begin{array}{ll}(x_1,\dots, x_N)&x_j=*\\
\textrm{Repl}_{j,1-x_j}(x_1,\dots, x_N)&x_j\neq *.	
\end{array}\right.$$
Now let $C$ be a binary code of length $N$, and for each $(t_1,\dots, t_N)\in C$, set
$$\rho_{(t_1,\dots, t_N)}=\rho_1^{t_1}\cdots\rho_N^{t_N}.$$
In this way, the code $C$ acts naturally on the cells of $I^N$ and, in particular, on its $1$-skeleton. Taking for $C$ the doubly-even code associated to $A$, the quotient of the $1$-skeleton by $C$ is identified with $A$. Since the boundary operators commute with the above reflection operators, the complex of cells in the quotient has the structure of a CW-complex, which we call the \emph{cubical complex} of $A$.  The homology groups of the cubical complex will be denoted by $H_k(A,-)$ and called the cubical homology groups; the cubical cohomology groups are defined dually.  
\begin{proposition}
	Let $A$ be an Adinkra, and let $C$ be the associated doubly even binary linear code. Then we have:
	\begin{enumerate}
		\item $H_0(A,\mathbf{Z})\cong\mathbf{Z}$;
		\item $H_1(A,\mathbf{Z})\cong C$;
		\item $H_0(A,\mathbf{Z}/2\mathbf{Z})\cong\mathbf{Z}/2\mathbf{Z}$;
		\item $H_1(A,\mathbf{Z}/2\mathbf{Z})\cong H^1(A,\mathbf{Z}/2\mathbf{Z})\cong C$.
	\end{enumerate}
\end{proposition}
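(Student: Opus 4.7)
The plan is to identify the cubical complex of $A$, through dimension two, with a free quotient of the $2$-skeleton of the contractible hypercube $I^N$, so that $H_0$ and $H_1$ can be read off from the resulting covering space.

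First I would verify that $C$ acts freely on cells of dimension at most two. Given a cell $(x_1,\dots,x_N) \in \{0,1,*\}^N$ with star-support $T = \{j : x_j = *\}$, the reflection $\rho_c$ sends this cell to itself exactly when $\mathrm{supp}(c) \subseteq T$. Since every nonzero codeword of a doubly-even code has weight at least $4$, no nontrivial $c \in C$ stabilizes any cell with $|T| \leq 3$. The standard fact that every point of a CW-complex lies in the interior of a unique cell upgrades this to freeness of the topological action of $C$ on the $2$-skeleton $X^2$ of $I^N$. Hence $X^2 \to X^2/C$ is a regular covering map with deck group $C$, and the quotient is naturally identified with the $2$-skeleton of the cubical complex of $A$.

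Next, I would observe that $X^2$ is simply connected. The fundamental group of a CW-complex depends only on its $2$-skeleton, and $I^N$ is contractible, so $\pi_1(X^2) = \pi_1(I^N) = 0$. Thus $X^2 \to X^2/C$ is the universal cover and $\pi_1(A) \cong \pi_1(X^2/C) \cong C$. Attaching the higher-dimensional cells in the full cubical complex of $A$ affects neither $H_0$ nor $H_1$, so these homology groups of $A$ coincide with those of $X^2/C$.

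To conclude, I would assemble the four identifications. Connectedness of $A$ gives $H_0(A,\mathbf{Z}) \cong \mathbf{Z}$ and $H_0(A,\mathbf{Z}/2\mathbf{Z}) \cong \mathbf{Z}/2\mathbf{Z}$. The Hurewicz theorem gives $H_1(A,\mathbf{Z}) \cong \pi_1(A)^{\mathrm{ab}} \cong C$, where $C^{\mathrm{ab}} = C$ because $C$ is an elementary abelian $2$-group. The universal coefficient theorem then yields $H_1(A,\mathbf{Z}/2\mathbf{Z}) \cong C \otimes_{\mathbf{Z}} \mathbf{Z}/2\mathbf{Z} \cong C$ and $H^1(A,\mathbf{Z}/2\mathbf{Z}) \cong \Hom(C,\mathbf{Z}/2\mathbf{Z})$, which is abstractly isomorphic to $C$ as an $\mathbf{F}_2$-vector space of the same dimension. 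The main obstacle is the stabilizer analysis underlying the freeness claim: the doubly-even hypothesis is used here in essential form, since for a merely even (but not doubly-even) code, nontrivial codewords of weight two can stabilize $2$-cells setwise, and the simple covering-space argument would collapse. A minor subtlety is that the identification $H^1(A,\mathbf{Z}/2\mathbf{Z}) \cong C$ is not canonical, since the standard pairing on $\mathbf{F}_2^N$ vanishes on $C$ when $C$ is self-orthogonal.
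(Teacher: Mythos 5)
Your proof is correct and takes essentially the same route as the paper, which establishes $H_1(A,\mathbf{Z})\cong C$ via the covering-space/deck-group picture (connectedness for $H_0$, the hypercube as universal cover with covering group $C$) and defers the details and the cohomology statement to \cite[Theorem 4.1]{Doran2012b}. Your restriction to the $2$-skeleton, where the doubly-even (minimum weight $4$) hypothesis makes the $C$-action genuinely free, is a sensible sharpening of that argument---$C$ does fix points of higher-dimensional cells of $I^N$---and the remaining Hurewicz and universal-coefficient steps match the intended reasoning.
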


\begin{proof}
	The first and third isomorphisms following from the connectedness of the cubical complex. The calculation of the first homology group follows from the fact that $I^N$ is the universal cover of the cubical complex of $A$ and that the covering group is naturally isomorphic to the code $C$. 
	The details of this argument, as well as the computation of cohomology group can be found in \cite[Theorem 4.1]{Doran2012b}.
\end{proof}

\subsection{Embedding into the Cubical Complex}
While the cubical complexes described above are interesting objects to study, it is not natural to give them a manifold structure. Our construction in \cite{Doran:2015} of course \emph{does} yield a manifold and, in this section, we compare these two objects by defining a topological embedding of the Riemann surface inside the cubical complex.

\begin{proposition}
Let $\mathcal{A}\in\mathcal{R}(A)$, $X_\mathcal{A}$ be the associated Riemann surface and $C(A)$ the cubical complex.  
There is an inclusion of CW-complexes $i\colon X_\mathcal{A}\to C(A)$, identifying $X_\mathcal{A}$ with the subset of the $2$-skeleton of $C(A)$ consisting of all of the $2$-cells associated to adjacent colors in the rainbow associated to $\mathcal{A}$.  
\end{proposition}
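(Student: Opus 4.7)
The plan is to construct $i$ cell-by-cell, exploiting the fact that both $X_\mathcal{A}$ and $C(A)$ share the Adinkra $A$ as their 1-skeleton, and then to match each 2-cell of $X_\mathcal{A}$ with a specific 2-cell of $C(A)$. The content of the statement is really a combinatorial identification of the face set of the ribbon graph underlying $X_\mathcal{A}$ with the set of 2-cells of $C(A)$ supported on pairs of rainbow-adjacent coordinates.

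First I would recall from Part I that the CW-structure on $X_\mathcal{A}$ comes from the ribbon graph structure determined by the rainbow: the cyclic order of edges at each white vertex is the rainbow $r = (c_1, \ldots, c_N)$, and the reverse at each black vertex. The faces produced by face-tracing are quadrilaterals whose boundaries are 4-cycles alternating between two colors $c_i, c_{i+1}$ adjacent in the rainbow; for the $N$-cube this follows from Euler's formula (invoked in the introduction), and for general $X_\mathcal{A}$ it descends from the $N$-cube through the free quotient by the doubly-even code $C$.

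Next I would describe the 2-cells of $C(A)$: modulo the $C$-action, they are indexed by elements of $\{0,1,*\}^N$ with exactly two $*$-entries in some positions $j_1 < j_2$, and the attaching map sends $\partial D^2$ to the 4-cycle obtained by toggling coordinates $j_1$ and $j_2$ at the corresponding corner. This is exactly a $(j_1, j_2)$-colored 4-cycle of $A$. I would then define $i$ to be the identity on the 1-skeleton, and on each quadrilateral face of $X_\mathcal{A}$ of type $(c_i, c_{i+1})$ to be the canonical homeomorphism onto the 2-cell of $C(A)$ with $*$'s in positions $c_i, c_{i+1}$ and the remaining coordinates dictated by the face. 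Injectivity on 2-cells is immediate because distinct faces of $X_\mathcal{A}$ correspond to distinct classes in $\{0,1,*\}^N / C$, and compatibility of attaching maps is automatic since both send $\partial D^2$ to the same 4-cycle in $A$. The image is then, tautologically, the union of all 2-cells of $C(A)$ whose $*$'s occupy rainbow-adjacent positions.

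The step I expect to be most delicate is the face-tracing bookkeeping: one must verify that the interplay between the reversed cyclic order at black vertices and the rainbow convention really picks out the pairs $(c_i, c_{i+1})$, and that the resulting quadrilateral in the $N$-cube descends faithfully through the code quotient, so that the assignment sending a rainbow-adjacent pair and a corner to a 2-cell of $C(A)$ is well-defined modulo $C$. This is a finite check that relies on the free action of $C$ on the 2-cells (which in turn uses double-evenness of $C$), after which the conclusion that $i$ is a CW-embedding is formal.
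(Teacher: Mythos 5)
Your proposal is correct and follows essentially the same route as the paper: identify the $1$-skeleton of $X_\mathcal{A}$ with that of $C(A)$ (the paper does this slightly more explicitly, mapping a chosen white vertex to $(0,\dots,0)$ and propagating via the $(\mathbf{Z}/2\mathbf{Z})^N/C$-action), and then send each $2$-colored quadrilateral face of $X_\mathcal{A}$, whose colors are rainbow-adjacent by the Part I face-tracing, to the unique $2$-cell of $C(A)$ with the same boundary. Your added attention to well-definedness modulo $C$ (freeness of the action on $2$-cells via the minimum weight of a doubly even code) is a point the paper leaves implicit, but it is not a different method.
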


\begin{proof}
 Choose a white vertex $w\in A\ins X_\mathcal{A}$ and map it to $(0,\dots, 0)$ in the cubical complex. If $w'$ is another vertex, then $w'=(t_1,\dots, t_N)w$ for some uniquely determined $(t_1,\dots, t_N)\in (\mathbf{Z}/2\mathbf{Z})^N/C$; map $w'$ to $\rho_{(t_1,\dots, t_N)}(0,\dots, 0)$ in the cubical complex. This defines $i$ on the $0$-skeleton. By construction, two vertices that are adjacent by an edge of a given color in $A\ins X_\mathcal{A}$ will map to two vertices that are adjacent via an edge of the same color. This shows how to extend $i$ to the $1$-skeleton. It is injective because there are no double edges, so $i$ is an isomorphism on the $1$-skeleton. Finally, $i$ is extended to the $2$-skeleton by mapping each $j/j+1$ colored face to the corresponding $j/j+1$ colored face in the cubical complex that has the corresponding boundary edges. This defines the inclusion on all of $X_\mathcal{A}$. 	
\end{proof}

The above proposition allows us to identify $X_\mathcal{A}$ (as a topological space) as a subset of the cubical complex. It follows that $X_\mathcal{A}$ inherits the structure of cubical set from the cubical complex for $A$, and we may therefore compute the homology groups of $X_\mathcal{A}$ via the associated chain complex.

\begin{corollary}
The inclusion $i\colon X_\mathcal{A}\to C(A)$ induces an isomorphism of homology in degree $0$, and a surjection in degree $1$. The kernel of
$$H_1(X_\mathcal{A},\mathbf{Z})\to H_1(A,\mathbf{Z})$$ is the subspace $L$ generated by the images of all $2$-colored loops on the Adinkra $A\ins X_\mathcal{A}$.	In particular, using the isomorphism $H_1(A,\mathbf{Z})\cong C$, we have an exact sequence
$$\xymatrix{0\ar[r]&L\ar[r]&H_1(X_\mathcal{A},\mathbf{Z})\ar[r]&C\ar[r]&0}.$$
\end{corollary}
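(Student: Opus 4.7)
The plan is to work at the level of cellular chain complexes. The preceding proposition embeds $X_\mathcal{A}$ in $C(A)$ so that the two complexes share an identical $0$- and $1$-skeleton; the only difference is in dimension two, where $X_\mathcal{A}$ contains exactly the square faces indexed by pairs of colors \emph{adjacent} in the rainbow, while $C(A)$ contains a square face for \emph{every} unordered pair of distinct colors. Higher-dimensional cells of $C(A)$ play no role in what follows. The degree-zero claim is then immediate from connectedness of both spaces, giving the identity $\mathbf{Z}\to\mathbf{Z}$ on $H_0$.

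For degree one, writing $Z_1$ for the common group of cellular $1$-cycles and $B_1(X_\mathcal{A})\ins B_1(C(A))$ for the respective subgroups of $1$-boundaries, we have
$$H_1(X_\mathcal{A},\mathbf{Z}) = Z_1 / B_1(X_\mathcal{A}), \qquad H_1(A,\mathbf{Z}) = Z_1 / B_1(C(A)).$$
The surjectivity of $i_*$ is then clear, and its kernel is $B_1(C(A))/B_1(X_\mathcal{A})$, which is generated by the cellular boundaries of the $2$-cells of $C(A)$ that are absent from $X_\mathcal{A}$, namely the square faces indexed by \emph{non-adjacent} color pairs. Each such boundary is a $4$-gon alternating between two colors---that is, a $2$-colored loop on $A\ins X_\mathcal{A}$.

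Finally I would observe that a $2$-colored loop whose colors \emph{are} adjacent in the rainbow already bounds a $2$-cell of $X_\mathcal{A}$ and therefore vanishes in $H_1(X_\mathcal{A},\mathbf{Z})$. Consequently, the subspace $L$ generated by the images of all $2$-colored loops coincides with the subspace generated only by the non-adjacent ones, which is precisely $\ker i_*$. Substituting $H_1(A,\mathbf{Z})\cong C$ from the earlier proposition then produces the desired short exact sequence. There is no genuine obstacle here; the only care required is the bookkeeping that matches the missing $2$-cells of $C(A)$ with the non-adjacent $2$-colored loops, after which the conclusion is a formal consequence of $i$ being a cellular inclusion with common $1$-skeleton.
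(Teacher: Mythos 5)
Your proof is correct, and its kernel computation is essentially the paper's: both compare the cellular chain complexes, note that $i$ is the identity on the common $1$-skeleton, and identify the kernel with the image of $\partial_2$ on the $2$-cells of $C(A)$, i.e.\ with the subspace generated by $2$-colored loops (your extra remark that the adjacent-color loops already bound in $X_\mathcal{A}$, so that $L$ is generated by the non-adjacent ones, is a correct refinement the paper leaves implicit). Where you genuinely differ is surjectivity in degree $1$: the paper invokes the result of \cite{Doran2012b} that $H_1(A,\mathbf{Z})$ is generated by the explicit cycles $P^c$, $c\in C$, and observes these live in $X_\mathcal{A}$, whereas you get surjectivity for free from the observation that both $H_1$'s are quotients of the same group $Z_1$ of $1$-cycles by nested boundary subgroups $B_1(X_\mathcal{A})\ins B_1(C(A))$. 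Your route is more elementary and self-contained; the paper's choice of the $P^c$ generators has the side benefit of making the right-hand map of the sequence $0\to L\to H_1(X_\mathcal{A},\mathbf{Z})\to C\to 0$ concrete, since $P^c\mapsto c$ realizes the identification $H_1(A,\mathbf{Z})\cong C$ used in the statement. Either argument is complete; the only bookkeeping you rely on (and state correctly) is that $H_1$ of the cubical complex is unaffected by cells of dimension $\geq 3$ and that the boundaries of the $2$-cells of $C(A)$ are exactly the $2$-colored $4$-gons.
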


\begin{proof}
The inclusion is an isomorphism on the $1$-skeleton, whence the isomorphisms in degree $0$. Now consider the map $H_1(X_\mathcal{A},\mathbf{Z})\to H_1(A,\mathbf{Z})$. 
		 To see that $i_*$ is surjective in degree $1$, we use the fact that $H_1(A,\mathbf{Z})$ is generated by the cycles $P^c$ for $c\in C$, where $C$ is the code associated to $A$, and $P^c$ is the cycle that starts at some fixed vertex $w$ and traverses the edges of each color corresponding the non-zero entries of $c$ in sequence \cite{Doran2012b}.
	 These cycles lie in $X_\mathcal{A}$, showing that $i$ induces a surjection in homology. To determine the kernel, consider the following commutative diagram coming form the chain complexes. 
	 
	 $$\xymatrix{C_2(A)\ar^{\di_A}[r]&C_1(A)\\
C_2(X_\mathcal{A})\ar[u]\ar^{\di_X}[r]&C_1(X_\mathcal{A})\ar[u]}.$$
Suppose $\xi\in C_1(X_\mathcal{A})$ represents the class $[\xi]\in H_1(X_\mathcal{A},\mathbf{Z})$, 
and that $i_*([\xi])=0\in H_1(A,\mathbf{Z})$. It follows that
 $\xi\in\textrm{image}(\di_A)$. 

On the other hand, the image of $\di_I$ is, by definition, the subspace generated by the images of all $2$-colored loops on $A$, whence the result.
\end{proof}

\subsection{Spin Structures via Cimasoni-Reshetikhin}
Our ultimate goal for the rest of this section is to show that the odd dashing on an Adinkra gives rise to a spin structure on $X_\mathcal{A}$. In order to do this, we use the work of Cimasoni and Reshetikhin \cite{Cimasoni:2007}; in their geometric proof of the Pfaffian formula for closed Riemann surfaces, they obtain a precise relationship between Kasteleyn orientations on a graph with dimer configuration and (discrete) spin structures. We review the necessary details below.  

Let $X$ be a Riemann surface and $\Gamma\ins X$ an embedded graph that induces a cellular decomposition of $X$.

\begin{definition}
	An \emph{orientation} of the edges of  $\Gamma$ is an assignment of a direction to each edge. The orientation $K$ is called \emph{Kasteleyn} if the following parity condition holds: for each face $f$ of $\Gamma$, oriented counterclockwise, there is an odd number of edges of $f$ whose orientation does not agree with that of $K$.
\end{definition}

\begin{definition} Given an orientation $K$ on $\Gamma$, and a vertex $v$, the \emph{vertex switch at $v$ of $K$} is the orientation $K'$ obtained from $K$ by reversing the orientations at each edge incident to $v$. 
	Two Kasteleyn orientations are said to be equivalent if one is obtained from the other by a sequence of vertex switches. The set of Kasteleyn orientations is denoted by $\mathcal{K}$, and  the set of equivalence classes of Kasteleyn orientation is denoted by $\widetilde{\mathcal{K}}$. 
\end{definition}

If we assume that $\Gamma$ is bipartite  the parity condition for a Kasteleyn orientation can be reformulated in terms of the bipartite orientation of $\Gamma$, defined, say, by orienting all edges from the incident white vertex to the incident black vertex. 
Given a face $f$ of $X$, let $\ell$ denote the number of times the orientation $K$ disagrees with the bipartite orientation on the face $f$ and let $2k$ denote the number of edges that bound $f$; the orientation $K$ is Kasteleyn if and only if $k$ and $\ell$ have the same parity. 
In particular, if each face of $X$ is a $4$-gon, then the orientation is Kasteleyn if and only if $K$ disagrees with the bipartite orientation an odd number number of times for each face. 
From here on out we will assume that $\Gamma$ is bipartite and that each face is a $4$-gon. 

Kasteleyn orientations can be characterized algebro-topologically.  Indeed, first note that since the graph $\Gamma$ induces a cellular decomposition of $X$, we can compute its homology groups with $\mathbf{Z}/2\mathbf{Z}$ coefficients using the corresponding CW-complex. Let $C_k(X)$ denote the group of $k$-chains of $X$, and $C^k(X)$ the group of $k$-cochains. A Kasteleyn orientation $K$ of $\Gamma$ can then be thought of as the $1$-cochain $\epsilon_K$ that sends an edge $e$ to $1$, if the orientation of $e$ in $K$ disagrees with the bipartite orientation, and to $0$ otherwise. Let $\omega_2$ denote the $2$-cochain that sends each face to $1$; the parity condition implies that $d\epsilon_K=\omega_2$. Conversely, a $1$-cochain $\epsilon$ such that $d\epsilon=\omega_2$ defines a Kasteleyn orientation by starting with the bipartite orientation, and then reversing the orientation of the edges that map to $1$. 

\begin{proposition}
	 Two Kasteleyn orientations $K$ and $K'$ are related by a series of vertex switches if and only if $\epsilon_K-\epsilon_{K'}=d\zeta$ for some $0$-cochain $\zeta$. If $K$ and $K'$ are any Kasteleyn orientations, then $\epsilon_K-\epsilon_{K'}$ is in fact a cocycle. 
\end{proposition}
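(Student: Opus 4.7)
The plan is to translate the geometric operation of a vertex switch into the cochain language established just before the proposition, and then to recognize its cumulative effect as the coboundary of a $0$-cochain.

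First I would unpack what a vertex switch does to the cochain $\epsilon_K$. Fix a vertex $v$ and let $\zeta_v \in C^0(X,\mathbf{Z}/2\mathbf{Z})$ be its characteristic function, i.e.\ $\zeta_v(v)=1$ and $\zeta_v(v')=0$ for $v'\neq v$. An edge $e$ of the bipartite graph $\Gamma$ is incident to $v$ if and only if exactly one endpoint of $e$ equals $v$, so $(d\zeta_v)(e)=\zeta_v(\partial e)=1$ exactly on the edges incident to $v$ and $0$ elsewhere. On the other hand, performing the vertex switch at $v$ reverses precisely the orientations of the edges incident to $v$; comparing with the bipartite orientation, this flips the value of $\epsilon_K$ at each such edge and leaves all other values alone. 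Therefore the vertex switch at $v$ sends $\epsilon_K$ to $\epsilon_K + d\zeta_v$.

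Next I would iterate: a sequence of vertex switches at vertices $v_1,\dots,v_m$ sends $\epsilon_K$ to $\epsilon_K + d(\zeta_{v_1}+\cdots+\zeta_{v_m})$, so if $K'$ is obtained from $K$ by such a sequence then $\epsilon_K - \epsilon_{K'} = d\zeta$ with $\zeta = \zeta_{v_1}+\cdots+\zeta_{v_m}$. Conversely, any $0$-cochain $\zeta\in C^0(X,\mathbf{Z}/2\mathbf{Z})$ is the sum of $\zeta_v$ over the vertices $v$ where $\zeta$ takes the value $1$, so if $\epsilon_K - \epsilon_{K'} = d\zeta$ then performing the vertex switch at each such $v$ (in any order, since over $\mathbf{Z}/2\mathbf{Z}$ they commute in their effect on the cochain) transforms $\epsilon_K$ into $\epsilon_{K'}$, hence $K$ into $K'$.

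Finally, for the cocycle statement: the parity condition recorded earlier in the section is precisely $d\epsilon_K = \omega_2$, and the same equation holds for $K'$. Subtracting gives $d(\epsilon_K - \epsilon_{K'}) = 0$, so $\epsilon_K - \epsilon_{K'} \in Z^1(X,\mathbf{Z}/2\mathbf{Z})$. There isn't really a hard step here; the only mild subtlety is checking that a vertex switch really corresponds exactly to adding $d\zeta_v$, which requires remembering that $\epsilon_K$ is defined by comparison with the fixed bipartite orientation rather than with some other reference, so that flipping the orientations of all edges at $v$ cleanly toggles $\epsilon_K$ on exactly those edges.
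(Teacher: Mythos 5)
Your proposal is correct and follows essentially the same route as the paper: identifying the vertex switch at $v$ with adding $d\zeta_v$ for the characteristic $0$-cochain $\zeta_v$, handling the converse by switching at the vertices in $\zeta^{-1}(1)$, and deducing the cocycle property from $d\epsilon_K=\omega_2=d\epsilon_{K'}$. The only difference is that you spell out the computation $(d\zeta_v)(e)=\zeta_v(\partial e)$ in more detail than the paper does.
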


\begin{proof}
Let $v\in\Gamma$ be a vertex, and let $\zeta_v$ denote the $0$-cochain that sends $v$ to $1$, and all other vertices to $0$. If $\epsilon_K$ is a Kasteleyn orientation, then it is easy to see that $\epsilon_K+d\zeta_v$ is the orientation obtained as the result of applying the vertex switch at $v$. Conversely, if $\epsilon_K-\epsilon_{K'}=d\zeta$, then $\epsilon_{K'}$ is obtained from $\epsilon_K$ by applying vertex switches at all of the vertices in $\zeta^{-1}(1)$. Finally, it was already observed that $d\epsilon_K=\omega_2$ for any Kasteleyn orientation, from which it follows that the difference of any two Kasteleyn orientations is a cocycle.
\end{proof}

\begin{definition}
	Let $L$ be an abelian group. An \emph{affine space over $L$} is a set $M$ equipped with an ``addition'' $+\colon L\times M\to M$ satisfying \begin{itemize}
		\item $\lambda_1+(\lambda_2+m)=(\lambda_1+\lambda_2)+m$ for $\lambda_i\in L,m\in M$;
		\item given any $m_1,m_2\in K$, there is a unique $\lambda\in L$ such that $m_2=\lambda+m_1$. 
	\end{itemize}
	
An \emph{affine subspace} of $M$ is an affine space $N\ins M$ over $L_1\ins L$. 
A morphism of affine spaces $N,M$ over $L$ is a map $f\colon N\to M$ such that
$$f(\lambda+n)=\lambda+f(n),$$ for all $\lambda\in L,n\in N$.  
\end{definition}

\begin{remark} If $N$ is any affine space over $L_1\ins L$, then the condition $N\ins M$ is equivalent to the (apparently) weaker hypothesis that $N\cap M\neq\varnothing$.
\end{remark}

\begin{remark}
A morphism of affine spaces is necessarily bijective; therefore, a morphism of affine spaces is the same as an isomorphism of affine spaces.  
\end{remark}

\begin{corollary}
	Identifying Kasteleyn orientations with their associated cochains, the set of equivalence classes $\widetilde{\mathcal{K}}$ is an affine space over $H^1(X)$. 
\end{corollary}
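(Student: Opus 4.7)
The plan is to use the previous proposition directly to define an action of $H^1(X,\mathbf{Z}/2\mathbf{Z})$ on $\widetilde{\mathcal{K}}$ and then verify the two affine space axioms. Given a class $[\alpha]\in H^1(X,\mathbf{Z}/2\mathbf{Z})$ represented by a cocycle $\alpha\in C^1(X,\mathbf{Z}/2\mathbf{Z})$, and a class $[K]\in\widetilde{\mathcal{K}}$ represented by a Kasteleyn cochain $\epsilon_K$, I would define
$$[\alpha]+[K]:=[\epsilon_K+\alpha].$$
The first check is that $\epsilon_K+\alpha$ is itself Kasteleyn: since $d\epsilon_K=\omega_2$ and $d\alpha=0$, we have $d(\epsilon_K+\alpha)=\omega_2$, so it defines a Kasteleyn orientation via the characterization reviewed before the previous proposition.

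Next I would verify that this operation descends to equivalence classes on both sides. If $\alpha'=\alpha+d\eta$ is another cocycle representative of $[\alpha]$, then $\epsilon_K+\alpha'=(\epsilon_K+\alpha)+d\eta$, which is equivalent to $\epsilon_K+\alpha$ by the previous proposition (it differs by a coboundary, hence by a sequence of vertex switches). Similarly, if $\epsilon_{K'}=\epsilon_K+d\zeta$, then $\epsilon_{K'}+\alpha=(\epsilon_K+\alpha)+d\zeta$, which is equivalent to $\epsilon_K+\alpha$. The associativity axiom $[\alpha_1]+([\alpha_2]+[K])=([\alpha_1]+[\alpha_2])+[K]$ is then immediate from the associativity of addition of cochains.

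It remains to check that, given $[K_1],[K_2]\in\widetilde{\mathcal{K}}$, there is a \emph{unique} class $[\alpha]\in H^1(X,\mathbf{Z}/2\mathbf{Z})$ with $[\alpha]+[K_1]=[K_2]$. For existence, the previous proposition states that $\alpha:=\epsilon_{K_2}-\epsilon_{K_1}$ is a cocycle; then by construction $[\alpha]+[K_1]=[\epsilon_{K_2}]=[K_2]$. For uniqueness, suppose $[\alpha']+[K_1]=[K_2]$ as well. Then $\epsilon_{K_1}+\alpha'$ and $\epsilon_{K_2}$ are equivalent Kasteleyn orientations, so again by the previous proposition their difference is a coboundary, giving $\alpha'-\alpha=d\zeta$ for some $0$-cochain $\zeta$, whence $[\alpha']=[\alpha]$ in $H^1(X,\mathbf{Z}/2\mathbf{Z})$.

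There is no real obstacle here; the work was done in the previous proposition, which simultaneously showed that the difference of any two Kasteleyn cochains is a cocycle and that coboundary equivalence corresponds exactly to the vertex-switch relation. The corollary is essentially an unpacking of those two facts into the affine-space formalism just introduced, with the one subtle point being to check well-definedness of the action on both the $H^1$ side and the $\widetilde{\mathcal{K}}$ side before invoking existence and uniqueness.
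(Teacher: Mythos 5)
Your proof is correct and follows exactly the route the paper intends: the corollary is an immediate unpacking of the preceding proposition (difference of Kasteleyn cochains is a cocycle; coboundary differences correspond precisely to vertex switches), and your verification of well-definedness, associativity, and existence/uniqueness of the translating class is the same argument the paper leaves implicit.
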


For the remainder of the section we will work with $\mathbf{Z}/2\mathbf{Z}$-valued homology and suppress it from the notation. 
  
\begin{definition}
	A \emph{dimer configuation} on $\Gamma\ins X$ is a collection of edges $D$ such that each vertex of $\Gamma$ is adjacent to exactly one edge in $D$. If $D$ and $D'$ are two dimer configurations, the connected components of the symmetric difference $$(D\cup D')-(D\cap D')$$ are called the \emph{$(D,D')$-composition cycles}. Let $\Delta(D,D')$ denote the corresponding class in $H_1(X)$; we say that $D$ and $D'$ are \emph{equivalent dimer configurations} if $\Delta(D,D')=0$. 
	\end{definition}

\begin{definition}
	A quadratic form on $H_1(X)$ is a function $\omega\colon H_1(X)\to\mathbf{Z}/2\mathbf{Z}$ such that
$$\omega(a+b)=\omega(a)+\omega(b)+a\cdot b,$$
where the $\cdot$ denotes the usual intersection product. Denote by $\mathcal{Q}$ the set of all quadratic forms on $H_1(X)$. 
\end{definition}

Using the universal coefficient theorem, there is a natural identification $H^1(X)\cong (H_1(X))^*$. Therefore, given an element $\delta\in H^1(X)$ and $\omega\in\mathcal{Q}$, we obtain a new function on $H_1(X)$ given by $\delta+\omega$. Since $\delta$ is a homomorphism and $\omega$ is quadratic, we find for $a,b\in H_1(X)$ that
\begin{eqnarray}
(\delta+\omega)(a+b)&=&\delta(a)+\delta(b)+\omega(a)+\omega(b)+a\cdot b\nonumber\\
&=&(\delta+\omega)(a)+(\delta+\omega)(b)+a\cdot b,\nonumber
\end{eqnarray}
from which it follows that $\delta+\omega$ is another quadratic form. With a little more work, it follows that $\mathcal{Q}$ is another affine space over $H^1(X)$. 

Cimasoni and Reshetikhin show that the affine spaces $\widetilde{K}$ and $\mathcal{Q}$ are isomorphic to each other in a natural manner. Before stating their results, we introduce some notation. Fix a Kasteleyn orientation for $\Gamma\ins X$, as well as a dimer configuration $D$. Let $C$ denote an oriented closed curve in $\Gamma$. If $e$ is an edge of $\Gamma$, we define $\epsilon^C_K(e)$ to be $0$ if the orientation of $e$ given by $C$ agrees with that of $K$, and $1$ otherwise. Further, define $\epsilon_K(C)$ as follows:  $$\epsilon_K(C)=\sum_{e\in C}\epsilon^C_K(e).$$
	Finally, let $\ell_D(C)$ be the number of vertices in $C$ whose adjacent dimer of $D$ sticks out to the left of $C$ in $X$. The following results are proved in \cite{Cimasoni:2007}.
	
\begin{theorem}
	Given $\alpha\in H_1(X)$, represent it by oriented simple closed curves $C_1,\dots, C_m$ in $\Gamma$. If $K$ is a Kasteleyn orientation on $\Gamma$, then the function $$q_D^K\colon H_1(X)\to\mathbf{Z}/2\mathbf{Z}$$ defined by 
	
	\begin{equation}\label{quadraticform}
	q_D^K(\alpha)=\sum_{i<j}C_i\cdot C_j+\epsilon_K(C_i)+\ell_D(C_i)+m
		\end{equation}

	is a well-defined quadratic form on $H_1(X)$.
\end{theorem}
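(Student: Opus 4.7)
The plan is to establish two separate things: (i) the value $q_D^K(\alpha)$ is independent of the choice of disjoint oriented simple closed curves $C_1,\ldots,C_m$ in $\Gamma$ representing $\alpha\in H_1(X)$, and (ii) the quadratic identity $q_D^K(a+b)=q_D^K(a)+q_D^K(b)+a\cdot b$ holds. I would treat (i) by reducing to invariance under a short list of local moves on representatives, and then deduce (ii) by a direct expansion of (\ref{quadraticform}) applied to a combined family of representatives.

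For (i), I would first show that any two tuples of disjoint oriented simple closed curves in $\Gamma$ representing the same class are connected by a finite sequence of the following moves: (a) reversal of orientation of some $C_i$; (b) cyclic reparametrization of $C_i$; (c) sliding an arc of some $C_i$ across a single $4$-gonal face of $\Gamma$; and (d) resolution or combination of two curves along a shared vertex, producing a change of $\pm 1$ in the number $m$ of components. The task is then to verify that each move leaves the right-hand side of (\ref{quadraticform}) unchanged modulo $2$. Move (a) is handled by observing that reversing orientation flips $\epsilon_K(C_i)$ by the length of $C_i$ and $\ell_D(C_i)$ by the number of adjacent dimer edges along $C_i$, quantities with equal parity because $\Gamma$ is bipartite and each vertex of $C_i$ meets exactly one dimer edge. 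Move (b) is immediate. For move (d), a local check shows that the change in $m$ matches the change in $\sum_{i<j}C_i\cdot C_j$, and that the $\epsilon_K$ and $\ell_D$ contributions combine additively across the fusion.

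The decisive move is (c). Here $\epsilon_K$ changes by $\omega_2$ evaluated on the crossed face, which equals $1$ by the Kasteleyn condition $d\epsilon_K=\omega_2$; simultaneously $\ell_D$ changes by the number of dimer edges lying on the original side of the arc, and a local enumeration exploiting the $4$-gonality of the face and the one-dimer-per-vertex condition shows that these two contributions cancel in $\mathbf{Z}/2\mathbf{Z}$. This is where essentially all of the analytical content of the theorem lies, and it is the step I expect to be the most delicate.

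For (ii), choose representations $C_1,\ldots,C_m$ of $a$ and $D_1,\ldots,D_n$ of $b$ by disjoint simple closed curves meeting each other transversally; after applying move (d) to eliminate any shared vertices or edges, the concatenation $C_1,\ldots,C_m,D_1,\ldots,D_n$ is a valid representation of $a+b$. Substituting into (\ref{quadraticform}) and subtracting $q_D^K(a)+q_D^K(b)$ leaves precisely the mixed intersection term $\sum_{i,k}C_i\cdot D_k=a\cdot b$. The main obstacle is therefore the face-sliding invariance in step (c); once that local cancellation is established, the rest of the proof reduces to combinatorial bookkeeping.
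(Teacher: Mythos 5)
First, note that the paper does not prove this theorem at all: it is quoted verbatim from Cimasoni--Reshetikhin, with the sentence ``The following results are proved in \cite{Cimasoni:2007}'' immediately preceding it. So your proposal has to stand on its own, and as written it has a genuine gap exactly at the step you yourself flag as decisive. In move (c) you assert that under a slide of an arc across a face $f$ the quantity $\epsilon_K(C)$ ``changes by $\omega_2$ evaluated on the crossed face, which equals $1$.'' This is not correct bookkeeping: $\epsilon_K(C)$ measures disagreement with the direction of traversal, and the edges of $\partial f$ that were originally on $C$ are traversed by the old and new arcs with opposite induced orientations. The actual change is $\epsilon_K(\partial f)+k\equiv 1+k \pmod 2$, where $k$ is the number of edges of $f$ lying on the original arc; for the generic one-edge slide this is $0$, not $1$, and correspondingly $\ell_D$ changes by an even amount (one must track which vertices of $f$ enter or leave the curve and on which side their dimer edges now lie). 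So the advertised cancellation ``$1$ from $\epsilon_K$ against a dimer count'' is not the right local statement, and the ``local enumeration'' that would make the corrected statement work --- a case analysis over the possible matchings of the four vertices of $f$ and the possible values of $k$ --- is precisely what is missing. In addition, the claim that your moves (a)--(d) connect any two representing families is itself nontrivial and unproved, and it is complicated by the requirement that all intermediate families consist of \emph{simple} closed curves: a slide across a face through which the curve (or another curve of the family) already passes creates tangencies or self-intersections, so extra moves or arguments are needed there.

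There is a second structural problem. You establish (or rather, propose to establish) well-definedness only for \emph{disjoint} families, but the formula, and your own derivation of the quadratic identity, require families of simple closed curves meeting transversally at vertices: when $a\cdot b\neq 0$ the combined family representing $a+b$ cannot be made disjoint, so your instruction in (ii) to ``apply move (d) to eliminate any shared vertices'' cannot be carried out in general, and carrying it out where possible would merge $C$'s with $D$'s so that they no longer separately represent $a$ and $b$, destroying the expansion that produces the cross term $a\cdot b$. What is actually needed is well-definedness for crossing families, including invariance under moves that involve crossings (and under resolving a crossing, which changes $m$ and the intersection sum simultaneously --- your move (d), but now it must be verified against $\epsilon_K$ and $\ell_D$ as well). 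For comparison, the cited proof does not argue move-by-move: it verifies directly that the right-hand side vanishes on a family bounding a $2$-chain, using a generalized Kasteleyn counting lemma (the number of edges of a region's boundary disagreeing with the counterclockwise orientation is congruent mod $2$ to one plus the number of enclosed vertices) together with the parity count of dimers crossing the boundary; this handles all representatives at once and avoids the delicate connectivity-of-moves and simplicity issues that your outline leaves open.
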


\begin{proposition}
\begin{enumerate}
	\item Let $D$ be a fixed dimer configuration on $\Gamma$. If $K$ and $K'$ are two Kasteleyn orientations on $\Gamma$, then $q_D^K-q_D^{K'}$ maps to $\epsilon_K-\epsilon_{K'}$ via the canonical isomorphism $$\textrm{Hom}\big(H_1(X),\mathbf{Z}/2\mathbf{Z}\big)\cong H^1(X).$$
	\item Let $K$ be a fixed Kasteleyn orientation on $\Gamma$. If $D$ and $D'$ are two dimer configurations, then $q_D^K-q_{D'}^K\in\textrm{Hom}\big(H_1(X),\mathbf{Z}/2\mathbf{Z}\big)$ is given by $\alpha\mapsto \alpha\cdot\Delta(D,D')$. 
\end{enumerate}	
\end{proposition}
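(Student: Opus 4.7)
The plan is to exploit the specific structure of the defining formula \eqref{quadraticform}: of the four summands, the intersection sum $\sum_{i<j}C_i\cdot C_j$ and the count $m$ depend only on the choice of representative curves for $\alpha$, the term $\sum_i\epsilon_K(C_i)$ depends only on the Kasteleyn orientation $K$, and the term $\sum_i\ell_D(C_i)$ depends only on the dimer configuration $D$. Taking a difference of two quadratic forms will therefore isolate the desired contribution in each case, and the resulting functional on $H_1(X)$ is manifestly a homomorphism because the shared cross-term $a\cdot b$ cancels in the subtraction.

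For part (1), with $D$ fixed I would compute $(q_D^K-q_D^{K'})(\alpha)=\sum_i\bigl(\epsilon_K(C_i)-\epsilon_{K'}(C_i)\bigr)$ and then identify this with the pairing of the cocycle $\epsilon_K-\epsilon_{K'}$ against the cycle $\sum_i C_i$. The crucial observation is that, while $\epsilon_K(e)$ is defined relative to an orientation convention, the $\mathbf{Z}/2\mathbf{Z}$-valued difference $\epsilon_K(e)-\epsilon_{K'}(e)$ is orientation-independent, so that the evaluation of the cochain $\epsilon_K-\epsilon_{K'}$ on an unoriented $1$-cycle is well-defined and coincides with the summation used to define $\epsilon_K(C_i)$. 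Since $[C_1+\cdots+C_m]=\alpha$, this yields $(q_D^K-q_D^{K'})(\alpha)=(\epsilon_K-\epsilon_{K'})(\alpha)$, which is precisely the image of $\epsilon_K-\epsilon_{K'}$ under the universal coefficient isomorphism $H^1(X)\cong\Hom(H_1(X),\mathbf{Z}/2\mathbf{Z})$.

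For part (2), with $K$ fixed the analogous subtraction gives $(q_D^K-q_{D'}^K)(\alpha)=\sum_i\bigl(\ell_D(C_i)-\ell_{D'}(C_i)\bigr)$, and by bilinearity of the intersection pairing the claim reduces to the identity
$$\ell_D(C)+\ell_{D'}(C)\;\equiv\;C\cdot\Delta(D,D')\pmod{2}$$
for each oriented simple closed curve $C$ in $\Gamma$. I would prove this identity by a vertex-by-vertex analysis along $C$: at a vertex $v\in C$, either the $D$- and $D'$-dimers at $v$ coincide, in which case $v$ lies on no composition cycle and contributes $0$ to both sides, or they differ, in which case $v$ lies on a unique composition cycle $\gamma$ whose two edges at $v$ are exactly those two dimers. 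In the latter case, $\ell_D(v,C)+\ell_{D'}(v,C)\equiv 1\pmod 2$ precisely when the dimers stick out on opposite sides of $C$, which is also precisely when $\gamma$ crosses $C$ transversely at $v$, contributing $1$ to the local intersection number.

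The main obstacle will be the treatment of \emph{shared arcs}, where a composition cycle $\gamma$ and the curve $C$ coincide along a sequence of consecutive edges rather than meeting transversely. I expect to resolve this either by an edge-by-edge accounting that shows interior vertices of any shared arc contribute $0$ to both sides simultaneously, so that only the two endpoints of each shared arc remain and the analysis reduces to the transverse case already handled, or alternatively by perturbing $\Delta(D,D')$ within its homology class to a cycle meeting $C$ only transversely and invoking the fact that $C\cdot\Delta(D,D')$ depends only on the homology class. Once the degenerate configurations are controlled, the local matching between $\ell_D+\ell_{D'}$ and $C\cdot\Delta(D,D')$ completes the proof of (2).
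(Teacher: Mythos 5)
This proposition is not proved in the paper at all: it is stated as a summary of results from Cimasoni--Reshetikhin, with the proof deferred to \cite{Cimasoni:2007}, so there is no in-paper argument to measure you against and your proposal must stand as a self-contained proof --- which, in outline, it does. The differencing strategy is sound: because each $q$ is well defined you may evaluate both forms on a common system of representative curves, the shared terms $\sum_{i<j}C_i\cdot C_j$, $m$, and whichever of $\sum_i\epsilon_K(C_i)$, $\sum_i\ell_D(C_i)$ is common cancel, and linearity of the difference is automatic since both forms have the same associated bilinear pairing. Part (1) is essentially complete: the key point that $\epsilon_K^{C}(e)-\epsilon_{K'}^{C}(e)=\epsilon_K(e)-\epsilon_{K'}(e)$ (the reference orientation coming from $C$ drops out of the difference) identifies $(q_D^K-q_D^{K'})(\alpha)$ with the evaluation of the cochain $\epsilon_K-\epsilon_{K'}$ on the mod~$2$ cycle $\sum_i C_i$, and the earlier proposition that this difference is a cocycle guarantees the value depends only on $\alpha$, which is exactly the universal-coefficient identification.

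For part (2), the reduction to $\ell_D(C)+\ell_{D'}(C)\equiv C\cdot\Delta(D,D')\pmod 2$ for a single curve $C$ is correct, and the shared-arc issue you flag is the only genuine one; note that your stated local criterion (contribution $1$ precisely when the two dimers stick out on opposite sides) is literally false at an endpoint of a shared arc, where one of the two composition-cycle edges at $v$ lies along $C$ and contributes $0$ while the other may contribute $1$, and whether the composition cycle crosses $C$ there cannot be decided locally. Both of your repair strategies can be made to work, and the cleanest is a version of the second: push $C$ off itself to its left to get a disjoint parallel copy $C^+$, homologous to $C$ and transverse to the composition cycles. Then $C^+$ meets $\Delta(D,D')$ exactly once near each pair $(v,e)$ with $v\in C$ and $e$ an edge of a composition cycle at $v$ sticking out to the left of $C$; since at each $v\in C$ the composition-cycle edges at $v$ are precisely the $D$- and $D'$-dimers when these differ, and there are none when they agree (in which case $v$ contributes $0$ or $2$ to $\ell_D(C)+\ell_{D'}(C)$), this count equals $\ell_D(C)+\ell_{D'}(C)$ mod~$2$ with no case analysis of shared arcs at all. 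Alternatively, your endpoint bookkeeping succeeds if you pair the two ends of each shared arc (the arc contributes $1$ exactly when the cycle exits on opposite sides at its two ends) rather than treating each end as a transverse crossing. With either repair the argument is complete and is, in substance, the standard one from \cite{Cimasoni:2007}.
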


\begin{corollary}
	Any dimer configuration $D$ on a surface graph $\Gamma\ins X$ induces an isomorphism of affine $H^1(X)$-spaces
	\begin{eqnarray*}
		\psi_D\colon\widetilde{\mathcal{K}}&\to& \mathcal{Q}\\
		\left[\epsilon_K\right] &\mapsto & q_D^K
	\end{eqnarray*}
 from the set of equivalence classes of Kasteleyn orientation on $\Gamma$ onto the set of quadratic forms on the $H_1(X)$. Furthermore, $\psi_D=\psi_{D'}$ if and only if $D$ and $D'$ are equivalent dimer configurations. 
\end{corollary}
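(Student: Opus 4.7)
The plan is to deduce the corollary directly from the immediately preceding theorem and proposition, together with the abstract facts about affine spaces collected just before. First I would check that the rule $[\epsilon_K]\mapsto q_D^K$ descends to a well defined map on $\widetilde{\mathcal{K}}$: if $K$ and $K'$ lie in the same vertex-switch class, then by the earlier proposition $\epsilon_K-\epsilon_{K'}=d\zeta$ is a coboundary, and hence represents the zero class in $H^1(X)$. Part (1) of the preceding proposition identifies $q_D^K-q_D^{K'}$ with $\epsilon_K-\epsilon_{K'}$ under the universal-coefficient isomorphism $\Hom(H_1(X),\mathbf{Z}/2\mathbf{Z})\cong H^1(X)$, so $q_D^K=q_D^{K'}$, and $\psi_D$ is well defined on $\widetilde{\mathcal{K}}$.

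Next I would verify that $\psi_D$ is a morphism of affine $H^1(X)$-spaces. Given $\delta\in H^1(X)$ and $[\epsilon_K]\in\widetilde{\mathcal{K}}$, pick a representing cocycle for $\delta$ and let $K''$ be the Kasteleyn orientation with $\epsilon_{K''}=\epsilon_K+\delta$, so that $\delta+[\epsilon_K]=[\epsilon_{K''}]$ by definition of the affine action. Applying part (1) of the proposition again, $q_D^{K''}-q_D^K$ is identified with $\delta$ under the same isomorphism, giving $\psi_D(\delta+[\epsilon_K])=\delta+\psi_D([\epsilon_K])$. In view of the second remark in the affine-spaces subsection, any morphism of affine spaces over a fixed group is automatically bijective, so $\psi_D$ is already an isomorphism onto $\mathcal{Q}$.

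For the last assertion I would invoke part (2) of the preceding proposition: for a fixed Kasteleyn orientation $K$, the difference $q_D^K-q_{D'}^K$ is the linear functional $\alpha\mapsto \alpha\cdot\Delta(D,D')$ on $H_1(X)$. Hence $\psi_D=\psi_{D'}$ holds if and only if this functional vanishes for every choice of $K$, equivalently if and only if $\alpha\cdot\Delta(D,D')=0$ for all $\alpha\in H_1(X)$. Since the mod-$2$ intersection pairing on the closed orientable surface $X$ is non-degenerate, this is equivalent to $\Delta(D,D')=0$, that is, to $D$ and $D'$ being equivalent dimer configurations. The only step with any genuine content is this last non-degeneracy, a standard consequence of Poincar\'e duality for $X$; everything else is a formal unpacking of the affine structures already in place.
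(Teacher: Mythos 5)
Your deduction is correct and is precisely the intended argument: the paper states this corollary without proof (as a summary of Cimasoni--Reshetikhin's results), and it is meant to follow, exactly as you argue, from part (1) of the preceding proposition (well-definedness and affine-equivariance of $\psi_D$, with bijectivity from the remark that affine morphisms are automatically isomorphisms) and part (2) together with the non-degeneracy of the mod-$2$ intersection pairing on the closed surface. No gaps.
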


\begin{remark}
	It is not hard to show that any two affine spaces over an Abelian group are isomorphic. The significance of the above result is not that the the two spaces are isomorphic, but rather that the specific map $\psi_D$ gives such an isomorphism. \end{remark}

Next, we review the basics of spin structures as described by Johnson in \cite{Johnson:1980}. 
Denote by $UX$  the unit tangent bundle of $X$, and consider the homology/cohomology groups $H_1(UX)$ and $H^1(UX)$. An element of $H_1(UX)$ can be represented as a smooth closed curve in $UX$, which is equivalent to a \emph{framed} closed curve in $X$, that is, a smooth closed curve in $X$, together with a smooth vector field along it. Let $\iota\colon S^1\to UX$ and $p\colon UX\to X$ be the fibre inclusion and projection. These maps yield the following short exact sequences in homology/cohomology:
$$\xymatrix{0\ar[r]&\mathbf{Z}/2\mathbf{Z}\ar^{\iota^*}[r]&H_1(UX)\ar^{p^*}[r]&H_1(X)\ar[r]&0\\
0\ar[r]&H^1(X)\ar^{p^*}[r]&H^1(UX)\ar^{\iota^*}[r]&\mathbf{Z}/2\mathbf{Z}\ar[r]&0}.	$$

The generator of $\mathbf{Z}/2\mathbf{Z}$ in the first sequence will be denoted by $z$; it is the ``fibre class'', and may be represented by a small circle in $X$ with tangential framing. The generator of $\mathbf{Z}/2\mathbf{Z}$ in the second sequence is denoted by $1$. This sequence allows us to identify $H^1(X)$ with the kernel of $\iota^*$. 

\begin{definition}
	A \emph{spin structure} on $X$ is a class $\xi\in H^1(UX)$ such that $\iota^*(\xi)=1$, i.e, $\xi(z)=1$. The set of all spin structures on $X$ will be denoted by $\mathcal{S}(X)$.  
\end{definition}

As Johnson explains, one should  think of such a $\xi$ as a function that assigns a number mod $2$ to each framed curve of $X$, subject to the usual homological conditions, in such a way that the boundary of a disc in $X$, tangentially framed, receives a value of $1$. Note that $H^1(UX)$ is the disjoint union of $H^1(X)\ins H^1(UX)$ and $\mathcal{S}(X)$. 
The following proposition is a summary of the results needed from \cite{Johnson:1980}. 
\begin{proposition}
	If $\alpha\in H_1(X)$ is represented as the sum of $\alpha_i$, each a simple closed curve, and if $\vec{\alpha_i}$ denotes the curve $\alpha_i$ with tangential framing, then the element
$$\tilde{\alpha}=\sum_{i=1}^m \vec{\alpha_i}+mz\in H^1(UX,\mathbf{Z}/2\mathbf{Z}),$$ is well-defined and called the \emph{canonical lift} of $a$.
If $\xi\in\mathcal{S}(X)$, then the function 
\begin{eqnarray}
	q_\xi\colon H_1(X)&\to &\mathbf{Z}/2\mathbf{Z}\nonumber\\
	q_\xi(c)&\mapsto &\xi(\tilde{c})\nonumber
\end{eqnarray}
 is a well-defined quadratic form on $H_1(X)$. The map $\xi\mapsto q_\xi$ is an isomorphism of affine $H^1(X)$-spaces. 
\end{proposition}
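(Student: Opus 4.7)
The plan is to establish the three assertions of the proposition in turn: the well-definedness of the canonical lift $\tilde{\alpha}$, the quadratic identity for $q_\xi$, and the fact that $\xi \mapsto q_\xi$ is an isomorphism of affine $H^1(X)$-spaces. The argument follows Johnson's treatment in \cite{Johnson:1980}; the organizing tool is the short exact sequence $0 \to \mathbf{Z}/2\mathbf{Z} \to H_1(UX) \to H_1(X) \to 0$ displayed above the statement, together with the defining property $\xi(z)=1$.

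For well-definedness, I would reduce matters to the following claim: if $\alpha_1,\ldots,\alpha_m$ are disjoint simple closed curves whose sum is null-homologous in $H_1(X,\mathbf{Z}/2\mathbf{Z})$, then $\sum \vec{\alpha_i} = mz$ in $H_1(UX,\mathbf{Z}/2\mathbf{Z})$. Any two representations of a given class $\alpha$ differ (after perturbing to disjointness) by such a null-homologous collection, so granting the claim the correction $mz$ exactly absorbs the discrepancy and $\tilde{\alpha}$ depends only on $\alpha$. The claim itself is a Poincar\'{e}--Hopf computation: the null-homologous union bounds a (possibly immersed) subsurface $\Sigma\subset X$, and the obstruction to extending the tangential framing on $\partial\Sigma$ to a non-vanishing section of $TX|_\Sigma$ is $\chi(\Sigma)\,z \in H_1(UX,\mathbf{Z}/2\mathbf{Z})$. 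Since $\chi(\Sigma)\equiv m\pmod 2$ for any such $\Sigma$, one obtains $\sum \vec{\alpha_i} = mz$ as required.

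For the quadratic identity, represent $a$ and $b$ by disjoint systems $\{\alpha_i\}_{i=1}^m$ and $\{\beta_j\}_{j=1}^n$ meeting transversely in $k\equiv a\cdot b\pmod 2$ double points. Resolving each crossing by the oriented smoothing produces a disjoint family $\{\gamma_r\}_{r=1}^s$ of simple closed curves still representing $a+b$. The essential local computation is that a single crossing resolution changes the framed lift by exactly the fibre class $z$: the tangential framings along the two crossing arcs differ by a $\pi$-rotation which registers as $z$ in $UX$, while the change in component count contributes correspondingly to the $mz$-style correction. Summing contributions yields in $H_1(UX,\mathbf{Z}/2\mathbf{Z})$ the identity
$$\sum_r \vec{\gamma_r} + sz \;=\; \sum_i \vec{\alpha_i} + mz \;+\; \sum_j \vec{\beta_j} + nz \;+\; kz,$$
and applying $\xi$ with $\xi(z)=1$ gives $q_\xi(a+b) = q_\xi(a) + q_\xi(b) + a\cdot b$.

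For the affine isomorphism, a direct computation for $\delta \in H^1(X)$ viewed in $H^1(UX)$ via $p^*$ yields
$$q_{\xi+p^*\delta}(a) \;=\; (\xi+p^*\delta)(\tilde a) \;=\; \xi(\tilde a) + \delta(p_*\tilde a) \;=\; q_\xi(a) + \delta(a),$$
using $p_*z=0$ so that $p_*\tilde a = a$. Thus $\xi\mapsto q_\xi$ intertwines the $H^1(X)$-actions on $\mathcal{S}(X)$ and $\mathcal{Q}$, and by the remark on affine morphisms recorded earlier in this section it is automatically a bijection. The hard part is the local framing calculation at a transverse crossing underlying the quadratic identity: one must carefully compare the tangential framings along the ingoing and outgoing arcs of a double point with those along the two smoothed arcs in $UX$, and balance the $\pi$-rotation of framings against the change in component count so that the net contribution is exactly one fibre class $z$ per crossing. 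Well-definedness and the affine compatibility, by contrast, are essentially bookkeeping against the short exact sequence for $p$.
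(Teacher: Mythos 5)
The paper itself offers no proof of this proposition: it is stated explicitly as a summary of the results needed from Johnson \cite{Johnson:1980}, so there is no in-paper argument to compare against. Your sketch is essentially a reconstruction of Johnson's proof --- the bounding-subsurface computation for well-definedness, the crossing-resolution identity $\widetilde{a+b}=\tilde a+\tilde b+(a\cdot b)z$ behind the quadratic refinement, and the exact sequence for the fibration $UX\to X$ giving the affine compatibility $q_{\xi+p^*\delta}=q_\xi+\delta$ --- and in outline it is sound; the bijectivity at the end does follow from the earlier remark that a morphism of affine spaces over the same group is automatically a bijection.

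One step would fail as literally written. In the well-definedness reduction you claim that two representations of the same class differ, ``after perturbing to disjointness,'' by a null-homologous disjoint collection. Two families of simple closed curves representing the same mod $2$ class cannot in general be made disjoint from one another: vanishing of the mod $2$ algebraic intersection number does not remove geometric intersections (for instance a separating curve meeting a nonseparating curve in two essential points), so the union of the two families need not be a disjoint collection bounding a subsurface. The repair is the one Johnson uses and which you already have in hand: first make the two families transverse and resolve each crossing, invoking the same local lemma (each smoothing changes the framed lift by exactly one fibre class $z$, balanced against the change in component count) that you invoke for the quadratic identity; only after this smoothing is the union a disjoint null-homologous family to which the bounding-subsurface computation with $\chi(\Sigma)\equiv m\pmod 2$ applies. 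With that reordering --- prove the crossing-resolution lemma first, then deduce both well-definedness and the identity $q_\xi(a+b)=q_\xi(a)+q_\xi(b)+a\cdot b$ from it --- your argument is complete and matches Johnson's.
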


Cimasoni and Reshetikhin  combine this isomorphism with their earlier isomorphism to produce an isomorphism between the space of equivalence classes of Kasteleyn orientations and spin structures. The following proposition summarizes the results that we need from \cite{Cimasoni:2007}.

\begin{proposition}
Given a Kasteleyn orienation $K$ and dimer configuration $D$, there is a spin structure $\xi_{K,D}$ that satisfies $q_{\xi_{K,D}}=q_D^K$. Therefore, any choice of dimer configuration $D$ induces an isomorphism of affine $H^1(X)$-spaces
$$\Psi_D\colon\widetilde{K}\to\mathcal{S}(X).$$
Furthermore, $\Psi_D=\Psi_{D'}$ if and only if $D$ and $D'$ are equivalent dimer configurations.  
\end{proposition}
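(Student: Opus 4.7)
The plan is simply to piece together the two isomorphisms of affine $H^1(X)$-spaces already established in the excerpt: the Cimasoni--Reshetikhin identification
$\psi_D\co \widetilde{\mathcal{K}}\to\mathcal{Q}$, $[\epsilon_K]\mapsto q_D^K$, coming from the preceding corollary, and the Johnson identification
$\Phi\co \mathcal{S}(X)\to\mathcal{Q}$, $\xi\mapsto q_\xi$, coming from the proposition that precedes this one. I would define
\[
\Psi_D \;:=\; \Phi^{-1}\circ\psi_D\co \widetilde{\mathcal{K}}\longrightarrow\mathcal{S}(X),
\]
and verify the two assertions by chasing this composition. Since $\Phi$ is a bijection, for every Kasteleyn class $[\epsilon_K]$ there is a unique spin structure $\xi_{K,D}\in\mathcal{S}(X)$ with $q_{\xi_{K,D}}=q_D^K$, and $\Psi_D$ simply sends $[\epsilon_K]$ to this $\xi_{K,D}$. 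A composition of affine $H^1(X)$-space isomorphisms is again an affine $H^1(X)$-space isomorphism; the only subtlety is to notice that the inverse of an affine morphism is itself an affine morphism, which was in fact already recorded in the earlier remark that a morphism of affine spaces is automatically an isomorphism, so that $\Phi^{-1}$ is $H^1(X)$-equivariant.

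For the ``furthermore'' clause, if $\Psi_D=\Psi_{D'}$, then applying $\Phi$ on the left yields $\psi_D=\psi_{D'}$, and conversely $\psi_D=\psi_{D'}$ forces $\Psi_D=\Psi_{D'}$ since $\Phi^{-1}$ is a bijection. The previously stated corollary characterizes $\psi_D=\psi_{D'}$ as equivalence of the dimer configurations $D$ and $D'$, so the same characterization is inherited by $\Psi_D$.

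The main obstacle is already behind us. The genuine geometric content --- that the quadratic form $q_D^K$ attached to a Kasteleyn-plus-dimer pair actually arises from a spin structure on $X$ via Johnson's canonical lift construction --- was compressed into the two isomorphism statements recalled just above. Once both sides are identified with $\mathcal{Q}$, the present proposition is a purely formal consequence. If I were worried about anything, it would be keeping the affine equivariance straight under the composition, but that reduces to the single observation noted above about inverses of affine morphisms.
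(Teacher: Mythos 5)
Your proposal is correct and matches the paper's treatment: the paper presents this proposition precisely as the combination of the Cimasoni--Reshetikhin isomorphism $\psi_D\colon\widetilde{\mathcal{K}}\to\mathcal{Q}$ with Johnson's isomorphism $\xi\mapsto q_\xi$, so defining $\Psi_D$ as the composite with the inverse of Johnson's map, and inheriting the ``furthermore'' clause from the corresponding statement for $\psi_D$, is exactly the intended argument (the only content not captured formally is that Cimasoni--Reshetikhin construct $\xi_{K,D}$ explicitly, whereas bijectivity of Johnson's map only gives existence, which is all the statement asserts).
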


\subsection{Spin Structures from Odd-dashings}
Using the cubical cohomology machinery, a classification of the odd dashings on an Adinkra was obtained in \cite{Doran2012b}. We review this classification below, and then explain how an odd dashing naturally gives rise to a Kasteleyn orientation. Once we have a Kasteleyn orientation and dimer configuration, the work reviewed above yields a spin structure. 

\begin{definition} A \emph{dashing} $M$ of an Adinkra $A$ is a choice, for each edge, of solid or dashed.
	A dashing of $A$ is \emph{odd} if it satisfies the following parity condition: there is an odd number of dashed edges for each $2$-colored loop in $A$. 
	
	The vertex switch at $v$ of a dashing $M$ is the dashing $M'$ obtained from $M$ by switching all of the dashings of the edges incident to $v$. Two odd dashings are considered equivalent if one can be obtained from the other by a sequence of vertex switches. We denote the set of odd dashing on $A$ by $\mathcal{M}$, and the set of equivalence classes by $\widetilde{\mathcal{M}}$. 
\end{definition}


Let $\omega_2\in C^2(A)$ be the $2$-cochain that sends each $2$-cell to $1$, extended linearly.  A dashing on $A$ can be thought of  as the $1$-cochain $\mu\in C^1(A)$ sending each dashed edge to $1$, and the others to $0$. The parity condition on the dashing implies that $d\mu=\omega_2$, and conversely:  a cochain $\mu$ such that $d\mu=\omega_2$ defines an odd dashing. 

\begin{remark}
	The scope of \cite{Doran2012b} was more broad than Adinkras and studied general quotients of $I^N$ by even codes. It is shown that the existence of an odd dashing is equivalent to the vanishing of $w_2=[\omega_2]\in H^2(A,\mathbf{Z}/2\mathbf{Z})$ \cite[Theorem 3.2]{Doran2012b}. This is analogous to the study of spin structures on oriented Riemannian manifolds, where the obstruction to their existence is the second Stiefel-Whitney class in $\mathbf{Z}/2\mathbf{Z}$-valued cohomology. This analogy was one of the motivations for us to attach the structure of complex manifold to an Adinkra in the first place. Moreover, if the second Stiefel-Whitney class vanishes, then the set of spin structures is an affine space over the first $\mathbf{Z}/2\mathbf{Z}$-valued cohomology group.  In what follows, we will see quite clearly that odd dashings give rise to specific spin structures, making this much more than an analogy.
\end{remark}

The next two results from \cite{Doran2012b} completely characterize the set of odd dashings in terms of the cubical complex. 

\begin{proposition} Let $A$ be a cubical complex, $\mu$ an odd dashing, and $T$ a set of vertices. Let $\zeta_T$ be the $0$-cochain that is $1$ on the elements of $T$ and $0$ otherwise. Then the dashing that results from $\mu$ by performing vertex switches at the vertices in $T$ is $\mu+d\zeta_T$. 	
\end{proposition}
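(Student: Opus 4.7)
The plan is to reduce the statement to the single-vertex case and then use linearity of the coboundary map in $\mathbf{Z}/2\mathbf{Z}$-cohomology. First I would verify the claim for $T=\{v\}$. By definition, $d\zeta_{v}$ evaluated on an edge $e$ equals $\zeta_{v}(\partial_{+}e)+\zeta_{v}(\partial_{-}e)$ in $\mathbf{Z}/2\mathbf{Z}$, which is $1$ precisely when $e$ is incident to $v$ (edges with both endpoints at $v$ do not occur in $A$). Hence the $1$-cochain $\mu+d\zeta_{v}$ differs from $\mu$ exactly on the edges meeting $v$, which by definition is the vertex switch of $\mu$ at $v$.

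Next I would handle the general case by an induction on $|T|$. Enumerate the vertices $v_{1},\dots, v_{k}$ in $T$ and apply the single-vertex result iteratively: after performing the switch at $v_{1}$ the dashing is $\mu+d\zeta_{v_{1}}$, after switching at $v_{2}$ it is $\mu+d\zeta_{v_{1}}+d\zeta_{v_{2}}$, and so on. Since $\zeta_{T}=\sum_{i=1}^{k}\zeta_{v_{i}}$ and the coboundary operator is $\mathbf{Z}/2\mathbf{Z}$-linear, the result of the full sequence of switches is $\mu+d\zeta_{T}$, as required. Note that this argument automatically shows that the operation is independent of the order in which the switches at the vertices of $T$ are performed.

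It is worth observing (though not strictly part of the statement) that $\mu+d\zeta_{T}$ is again an odd dashing: since $d\mu=\omega_{2}$ and $d\circ d=0$, we have $d(\mu+d\zeta_{T})=\omega_{2}$, so the parity condition on $2$-colored loops is preserved. In effect, the content of the proposition is entirely combinatorial, and the main (very modest) obstacle is simply checking that the set of edges on which the operation acts nontrivially is, over $\mathbf{Z}/2\mathbf{Z}$, the symmetric difference of the stars at the vertices of $T$—which is precisely what $d\zeta_{T}$ records.
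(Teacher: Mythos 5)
Your proof is correct, and it follows the same route the paper uses: the paper quotes this proposition from \cite{Doran2012b} without proof, but its own proof of the parallel statement for Kasteleyn orientations is exactly your argument — check that $\epsilon_K+d\zeta_v$ is the vertex switch at a single vertex $v$, then iterate/use linearity of $d$ over $\mathbf{Z}/2\mathbf{Z}$ for a general set of vertices. Your added observations (order-independence of the switches and preservation of the odd-dashing condition via $d\circ d=0$) are accurate but not needed for the statement.
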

 
\begin{theorem}\label{DashTheorem}
	Let $A$ be a cubical complex. Two odd dashings are related by a series of vertex switches if and only if $\mu_2-\mu_1=d\zeta$ for some $\zeta$. Therefore, the set of odd dashings, modulo vertex switches, is in one-to-one correspondence with $H^1(A)$. 
\end{theorem}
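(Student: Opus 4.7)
My plan is to establish the biconditional first, and then derive the $H^1$-classification as a formal consequence. The forward direction is essentially immediate from the preceding proposition: if $\mu_2$ is obtained from $\mu_1$ by successive vertex switches at vertices $v_1,\dots,v_r$, iterating that proposition $r$ times yields
\[ \mu_2 = \mu_1 + d\zeta_{v_1} + \cdots + d\zeta_{v_r} = \mu_1 + d\zeta, \]
where $\zeta = \zeta_{v_1} + \cdots + \zeta_{v_r}$. Since we are working with $\mathbf{Z}/2\mathbf{Z}$ coefficients, any vertex that appears an even number of times cancels, so this simplifies to $\zeta = \zeta_T$ where $T$ is the set of vertices switched an odd number of times.

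For the reverse direction, assume $\mu_2 - \mu_1 = d\zeta$ for some $\zeta \in C^0(A;\mathbf{Z}/2\mathbf{Z})$. Writing $T = \zeta^{-1}(1)$, one has $\zeta = \zeta_T = \sum_{v \in T}\zeta_v$, so $d\zeta = \sum_{v \in T} d\zeta_v$. Performing the vertex switches at the vertices of $T$ in any chosen order, and applying the preceding proposition once per step, produces the dashing $\mu_1 + d\zeta = \mu_2$. I would pause here to note that vertex switches at distinct vertices commute and that switching twice at the same vertex is the identity, so the order of the switches and any repetitions are immaterial modulo $2$; this makes the passage from ``sum of coboundaries'' to ``sequence of vertex switches'' unambiguous.

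Finally, for the identification with $H^1(A)$, fix any base odd dashing $\mu_0$, whose existence is guaranteed by the classification referenced in the remark preceding the statement. For any odd dashing $\mu$ we have $d(\mu-\mu_0) = \omega_2 - \omega_2 = 0$, so $\mu - \mu_0 \in Z^1(A;\mathbf{Z}/2\mathbf{Z})$; conversely, every element of $\mu_0 + Z^1(A)$ satisfies $d\mu = \omega_2$ and is therefore an odd dashing. Hence $\mathcal{M}$ is an affine space over $Z^1(A)$. The biconditional established above identifies the vertex-switch equivalence relation on $\mathcal{M}$ with translation by $B^1(A)$, so
\[ \widetilde{\mathcal{M}} \;=\; \mathcal{M}/B^1(A) \]
is an affine space over $Z^1(A)/B^1(A) = H^1(A;\mathbf{Z}/2\mathbf{Z})$; designating $[\mu_0]$ as origin gives the claimed bijection. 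The argument is largely bookkeeping, and I do not expect a serious obstacle: the only thing that requires care is that the finite vertex set $T$ really does recover an arbitrary $0$-cochain (automatic because $A$ has finitely many vertices), and that sequential vertex switches add up to the sum of the corresponding coboundaries, which is exactly the content of the preceding proposition.
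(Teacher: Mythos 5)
Your proof is correct. The paper does not supply its own argument for Theorem \ref{DashTheorem} (it is quoted from \cite{Doran2012b}), but your derivation is exactly the intended one: both directions of the biconditional follow by iterating the preceding vertex-switch proposition, using that switches commute and square to the identity mod $2$, and the $H^1$ statement is then the observation that the odd dashings form a nonempty coset of $Z^1(A;\mathbf{Z}/2\mathbf{Z})$ on which vertex-switch equivalence is precisely translation by $B^1(A;\mathbf{Z}/2\mathbf{Z})$. Your explicit remark that a base odd dashing must exist (equivalently $w_2=0$, which holds in the Adinkra setting since the code is doubly even) is the one hypothesis the bare statement leaves implicit, so flagging it is appropriate rather than a defect.
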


These results show that $\widetilde{\mathcal{M}}$ is an affine space over $H^1(A)$. In particular, the number of inequivalent odd dashings on an Adinkra is $2^k$ where $k$ is the dimension of the doubly even code $C$.

To each odd dashing $\mu\in \widetilde{\mathcal{M}}$, we associate a Kasteleyn orientation $K_\mu$ by first giving $A\ins X_\mathcal{A}$ the bipartite orientation, and then reversing the orientation of the dashed edges. The parity condition for the odd dashing implies the parity condition for the resulting orientation to be Kasteleyn. On the level of co-chains, this association is the natural inclusion $C^1(A)\to C^1(X)$ of co-chain complexes. We write $\epsilon_\mu$ for $\epsilon_{K_\mu}$.

\begin{lemma}
	The inclusion $i\colon X_\mathcal{A}\to C(A)$ induces an isomorphism in cohomology in degree one, and an inclusion $i^*\colon H^1(A)\to H^1(X)$.
\end{lemma}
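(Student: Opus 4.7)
The plan is to derive both assertions as formal duals of the preceding Corollary, which established that $i_*$ is an isomorphism in degree zero and a surjection in degree one with integer coefficients. Since this subsection works with $\mathbf{Z}/2\mathbf{Z}$ coefficients, the first step is to promote the Corollary to that coefficient ring.

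The $H_0$ statement is immediate from path-connectedness of both $X_\mathcal{A}$ and the cubical complex $C(A)$. For $H_1$, the universal coefficient theorem gives
\[
H_1(-;\mathbf{Z}/2\mathbf{Z}) \cong H_1(-;\mathbf{Z}) \otimes \mathbf{Z}/2\mathbf{Z},
\]
since $H_0(-;\mathbf{Z})$ is free and the $\operatorname{Tor}$ term vanishes; tensoring is right exact, so surjectivity of $i_*$ is preserved. Equivalently, one can rerun the Corollary's proof verbatim: the explicit generators $P^c$ (for $c \in C$) of $H_1(A)$ already lie in $X_\mathcal{A}$, so $i_*$ is tautologically surjective with any coefficient system.

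Next I would dualize. Because $\mathbf{Z}/2\mathbf{Z}$ is a field, the universal coefficient theorem for cohomology specializes to the canonical identification
\[
H^k(-;\mathbf{Z}/2\mathbf{Z}) \cong \Hom_{\mathbf{Z}/2\mathbf{Z}}\bigl(H_k(-;\mathbf{Z}/2\mathbf{Z}),\mathbf{Z}/2\mathbf{Z}\bigr),
\]
under which the induced cohomology map $i^*$ is literally the transpose of $i_*$. Transposing finite-dimensional $\mathbf{Z}/2\mathbf{Z}$-linear maps sends isomorphisms to isomorphisms and surjections to injections, so one immediately recovers the statement: $i^*$ is an isomorphism in degree zero and an injection $H^1(A) \hookrightarrow H^1(X_\mathcal{A})$ in degree one.

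There is no substantive obstacle here; the content is an entirely formal dualization of the homological Corollary. The only care needed is in verifying that surjectivity on $H_1$ survives the change from $\mathbf{Z}$ to $\mathbf{Z}/2\mathbf{Z}$ coefficients, and this is handled in the second paragraph.
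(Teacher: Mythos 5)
Your proposal is correct and is essentially the paper's argument: the paper also proves this by pure dualization, applying the left exact functor $\Hom(-,\mathbf{Z}/2\mathbf{Z})$ to the inclusion of chain complexes, so that the degree-zero isomorphism and degree-one surjection in homology become an isomorphism and an injection in cohomology. Your version merely makes explicit the passage to $\mathbf{Z}/2\mathbf{Z}$ coefficients and the field-coefficient universal coefficient identification, which the paper leaves implicit.
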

\begin{proof}
This follows by studying the inclusion of chain complexes studied earlier, and applying the left exact functor $\Hom(-,\mathbf{Z}/2\mathbf{Z})$.	
\end{proof}
We identify $H^1(A)$ as a subgroup of $H^1(X)$ via this identification. 

\begin{proposition}
The map $\mu\mapsto \epsilon_\mu$ is well-defined on equivalence classes and induces an inclusion of $\widetilde{\mathcal{M}}$ into $\widetilde{K}$ such that $\widetilde{\mathcal{M}}$, as an affine space over $H^1(A)$, is an affine subspace of $\widetilde{K}$ over $H^1(X)$. 
\end{proposition}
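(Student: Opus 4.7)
The plan is to exploit the CW-inclusion $i\co X_\mathcal{A}\to C(A)$, which is an isomorphism on $0$- and $1$-skeleta and embeds the $2$-skeleton of $X_\mathcal{A}$ (the faces spanned by adjacent colors in the rainbow) as a subcomplex of $C(A)$. On cochains this yields canonical identifications $C^0(A)=C^0(X_\mathcal{A})$ and $C^1(A)=C^1(X_\mathcal{A})$, together with a restriction $C^2(A)\twoheadrightarrow C^2(X_\mathcal{A})$ sending $\omega_2^A$ to $\omega_2^X$.  Because the coboundary of a $1$-cochain is computed edge-by-edge on the shared $1$-skeleton, this restriction intertwines the two coboundary operators, so any $\mu\in C^1(A)$ with $d\mu=\omega_2^A$ yields a cochain $\epsilon_\mu\in C^1(X_\mathcal{A})$ automatically satisfying $d\epsilon_\mu=\omega_2^X$.

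Well-definedness and injectivity will then be pure cochain-chasing.  If $\mu\sim\mu'$ in $\widetilde{\mathcal{M}}$, so that $\mu'-\mu=d\zeta$ for some $\zeta\in C^0(A)$, then reading the same equation in $C^1(X_\mathcal{A})$ gives $\epsilon_{\mu'}-\epsilon_\mu=d\zeta$ and hence $[\epsilon_\mu]=[\epsilon_{\mu'}]$ in $\widetilde{\mathcal{K}}$.  Conversely, if $[\epsilon_\mu]=[\epsilon_{\mu'}]$, then $\epsilon_{\mu'}-\epsilon_\mu=d\eta$ for some $\eta\in C^0(X_\mathcal{A})=C^0(A)$, and the identical cochain identity read in $C^1(A)$ witnesses $[\mu]=[\mu']$.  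This will establish $[\mu]\mapsto[\epsilon_\mu]$ as a well-defined injection $\widetilde{\mathcal{M}}\hookrightarrow\widetilde{\mathcal{K}}$.

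For the affine-subspace statement, I will invoke the injection $i^*\co H^1(A)\hookrightarrow H^1(X_\mathcal{A})$ from the preceding lemma.  Every $1$-cocycle of $A$ is automatically a $1$-cocycle of $X_\mathcal{A}$ under these identifications, so the assignment intertwines the two affine actions via $[\epsilon_{\mu+\alpha}]=[\epsilon_\mu]+i^*[\alpha]$; by the earlier remark on affine subspaces, the nonemptiness of the image then suffices to realize $\widetilde{\mathcal{M}}$ as an affine subspace of $\widetilde{\mathcal{K}}$ over $H^1(A)$.  There is no serious obstacle here --- the whole argument is diagram-chasing through the cochain restriction $C^*(A)\to C^*(X_\mathcal{A})$; the one point requiring any care is the compatibility of the coboundary operators on $A$ and on $X_\mathcal{A}$ when applied to shared $1$-cochains, which is immediate from the shared $1$-skeleton.
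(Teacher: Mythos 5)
Your proposal is correct and follows essentially the same route as the paper's proof: both rest on the identification of the $0$- and $1$-skeleta (so $C^0(A)=C^0(X_\mathcal{A})$ and $C^1(A)=C^1(X_\mathcal{A})$), the observation that $\omega_2$ restricts to $\omega_2^X$ so that $d\mu=\omega_2$ forces $d\epsilon_\mu=\omega_2^X$, the two-way cochain chase through the shared $0$-cochains for well-definedness and injectivity, and the homomorphism property of $i^*$ with the injection $H^1(A)\hookrightarrow H^1(X_\mathcal{A})$ for the affine-subspace claim. No substantive difference from the paper's argument.
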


\begin{proof}
	We have already seen that an odd dashing $\mu$ gives rise to a Kasteleyn orientation. This is seen alternatively by noting that the the $2$-cocycle $\omega_2\in H^2(A)$ maps to the cocycle $\omega_2^X\in H^2(X)$, where the superscript has been added to avoid confusion. Since $d\mu=\omega_2$, we have $d(i^*\mu)=\omega_2^X$, so that $i^*(\mu)$ is a Kasteleyn orientation. Equivalent odd dashings are mapped to equivalent Kasteleyn orientations because of the identification $C^0(A)=C^0(X)$. Moreover, the identification also shows that if $\epsilon_{\mu}$ and $\epsilon_{\mu'}$ are equivalent Kasteleyn orientations, then $\mu$ and $\mu'$ are equivalent odd dashings. Therefore, $i^*$ induces the desired inclusion $\widetilde{\mathcal{M}}\to\widetilde{K}$.  
	
	Finally, let $\nu\in H^1(A)$ and let $\mu$ be an odd dashing. Since $i^*$ is a homomorphism, and $H^1(A)$ is identified with its image under $i^*$, it follows that the addition map is compatible, and that $\widetilde{\mathcal{M}}$ is an affine subspace of $\widetilde{K}$. 
\end{proof}

In order to obtain spin structures on $X_\mathcal{A}$ via the work of Cimasoni and Reshetikhin, we need to choose a dimer configuration. There are $N$ natural choices for dimer configurations on $A$, namely the sets $D_j$ consisting of the edges of color $j$. As it turns out, these $N$ choices of dimer configurations are equivalent to each other. 

\begin{proposition}
The dimer configurations $D_j$ are all equivalent to each other. 	
\end{proposition}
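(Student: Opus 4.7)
The plan is to reduce the problem to the case of colors adjacent in the rainbow and then invoke transitivity of dimer equivalence. The key geometric input, recorded in the embedding proposition, is that the $2$-cells of $X_\mathcal{A}$ are precisely the $\{j,j+1\}$-squares where $j$ and $j+1$ are adjacent colors in the rainbow.

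First I would fix adjacent colors $j$ and $j+1$ and introduce the $\mathbf{Z}/2\mathbf{Z}$-chain $F_{j,j+1}\in C_2(X_\mathcal{A})$ equal to the sum of all $\{j,j+1\}$-faces of $X_\mathcal{A}$. Each such face is a quadrilateral bordered by two color-$j$ edges and two color-$(j+1)$ edges. The crucial incidence count is that every color-$j$ edge of $X_\mathcal{A}$ lies on the boundary of exactly one $\{j,j+1\}$-face: since $X_\mathcal{A}$ is a closed surface each edge abuts exactly two faces, and by the rainbow-adjacency constraint on the $2$-cells these two faces must be precisely the unique $\{j-1,j\}$-face and the unique $\{j,j+1\}$-face incident to that edge. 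The same holds for color-$(j+1)$ edges. Summing face boundaries modulo $2$ therefore yields
$$\partial F_{j,j+1}=D_j+D_{j+1}\in C_1(X_\mathcal{A},\mathbf{Z}/2\mathbf{Z}),$$
so $\Delta(D_j,D_{j+1})=[D_j\triangle D_{j+1}]=0$ in $H_1(X_\mathcal{A},\mathbf{Z}/2\mathbf{Z})$, which is exactly the condition $D_j\sim D_{j+1}$.

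Next I would record that equivalence of dimer configurations is transitive. From the $\mathbf{Z}/2\mathbf{Z}$-identity $D\triangle D''=(D\triangle D')+(D'\triangle D'')$ one obtains the additivity relation $\Delta(D,D'')=\Delta(D,D')+\Delta(D',D'')$ in $H_1(X_\mathcal{A},\mathbf{Z}/2\mathbf{Z})$, so vanishing of any two of these classes forces the vanishing of the third. Since the rainbow cyclically orders the $N$ colors, every pair $(D_j,D_{j'})$ is connected by a chain of adjacent pairs, and chaining the adjacent equivalences produces $D_j\sim D_{j'}$.

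I do not expect a serious obstacle: the whole argument is local bookkeeping about face-edge incidences. The only point requiring any care is the incidence count ensuring that the boundary of $F_{j,j+1}$ hits each color-$j$ and color-$(j+1)$ edge exactly once, and this is forced immediately by the rainbow-adjacency description of the $2$-cells of $X_\mathcal{A}$.
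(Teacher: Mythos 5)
Your argument is correct and is essentially the paper's proof: the paper likewise observes that for adjacent colors the symmetric difference $D_j\triangle D_{j+1}$ is the union of boundaries of the (contractible) $j/(j{+}1)$-colored faces, hence $\Delta(D_j,D_{j+1})=0$, and then chains adjacent pairs by transitivity around the rainbow. Your chain-level identity $\partial F_{j,j+1}=D_j+D_{j+1}$ is just a more explicit bookkeeping of that same step, so the two proofs coincide in substance.
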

\begin{proof}
Consider $\Delta(D_j,D_{j'})$ where $j'$ is the next color in the rainbow defining $X_\mathcal{A}$. Since $D_j$ and $D_{j'}$ have no edges in common, $\Delta(D_j,D_{j'})$ is represented by the union of the boundaries of all $j/j'$ colored faces in $X_\mathcal{A}$. By design, these are all contractible, from which it follows $\Delta(D_j,D_{j'})=0$, i.e., $D_j$ and $D_{j'}$ are equivalent. This holds for all pairs of adjacent colors; we get the full result using the transitivity of the equivalence relation.  	
\end{proof}

 While there are lots of dimer configurations on $X_\mathcal{A}$, we see that there is one natural choice of dimer configuration on our Adinkra up to equivalence. Therefore, the set of equivalence classes of odd dashings $\widetilde{\mathcal{M}}$ gives rise naturally to a subset of spin structures on the Riemann surface $X_\mathcal{A}$.
More precisely, we have the following corollary:

\begin{corollary}
The map $\psi\circ i^*\colon \widetilde{\mathcal{M}}\to\mathcal{S}(X_\mathcal{A})$ is an inclusion of affine spaces, where $\psi$ is any choice of $\psi_{D_j}$. In particular, the set of equivalence classes of odd dashings on $A$ can be identified with a subset of $\mathcal{S}(X_\mathcal{A})$. 
\end{corollary}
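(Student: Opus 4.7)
The plan is to assemble the corollary directly from the two immediately preceding results: the proposition establishing that $i^*\colon \widetilde{\mathcal{M}}\to\widetilde{\mathcal{K}}$ is an inclusion of affine spaces (with $\widetilde{\mathcal{M}}$ affine over $H^1(A)\hookrightarrow H^1(X)$ and $\widetilde{\mathcal{K}}$ affine over $H^1(X)$), and the proposition establishing that each $\psi_{D_j}\colon \widetilde{\mathcal{K}}\to \mathcal{S}(X_\mathcal{A})$ is an isomorphism of affine $H^1(X)$-spaces. Nothing new needs to be computed; the work reduces to checking that composition behaves as advertised and that the choice of color $j$ does not matter.

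First, I would argue that the composition $\psi_{D_j}\circ i^*$ is independent of the color $j$. By the previous proposition, all $N$ dimer configurations $D_1,\dots, D_N$ are equivalent, and the Cimasoni–Reshetikhin proposition asserts $\psi_D=\psi_{D'}$ if and only if $D$ and $D'$ are equivalent. Hence the notation $\psi\circ i^*$ is unambiguous, and we may write $\Psi:=\psi_{D_1}$.

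Second, I would check that the composition of an affine inclusion with an affine isomorphism is again an affine inclusion. Concretely: the inclusion $i^*$ is compatible with the inclusion of structure groups $H^1(A)\hookrightarrow H^1(X)$ in the sense that $i^*(\mu+\nu)=i^*(\mu)+i^*(\nu)$ for $\nu\in H^1(A)$; and $\Psi$ is $H^1(X)$-equivariant, so in particular $H^1(A)$-equivariant. Thus for $\mu\in\widetilde{\mathcal{M}}$ and $\nu\in H^1(A)$ one has
\[
(\Psi\circ i^*)(\mu+\nu)=\Psi(i^*(\mu)+\nu)=\Psi(i^*(\mu))+\nu,
\]
so $\Psi\circ i^*$ is a morphism of affine $H^1(A)$-spaces. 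Because $\Psi$ is a bijection and $i^*$ is an injection, $\Psi\circ i^*$ is injective, and its image is an affine subspace of $\mathcal{S}(X_\mathcal{A})$ over the subgroup $H^1(A)\subseteq H^1(X)$. By the remark that every morphism of affine spaces is bijective onto its image, this realizes $\widetilde{\mathcal{M}}$ as an affine subspace of $\mathcal{S}(X_\mathcal{A})$, proving both assertions.

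There is no genuine obstacle here: the heavy lifting (the Cimasoni–Reshetikhin isomorphism, the cubical cohomology classification of odd dashings, the inclusion $i^*$, and the equivalence of the $D_j$) has already been carried out. The only thing to be careful about is distinguishing the two affine structures in play, namely over $H^1(A)$ on the source and over $H^1(X)$ on the target, and recording that the inclusion is compatible with the inclusion of coefficient groups $H^1(A)\hookrightarrow H^1(X)$ coming from the lemma on cochains.
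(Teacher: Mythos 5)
Your proposal is correct and follows exactly the route the paper intends: the corollary is assembled from the preceding propositions (the affine inclusion $i^*\colon\widetilde{\mathcal{M}}\to\widetilde{\mathcal{K}}$, the Cimasoni--Reshetikhin isomorphism induced by a dimer configuration, and the equivalence of the dimers $D_j$, which makes the choice of $j$ immaterial), and the paper itself supplies no further argument beyond this composition. Your added care in checking $H^1(A)$-equivariance of the composite is a harmless elaboration of the same proof.
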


\begin{remark}
	The specific subset of spin structures obtained above is an affine space over the doubly even code $C$ associated to $A$ after identifying $H_1(A)$ with $C$ as described in \cite{Doran2012b}.
\end{remark}

\begin{example}
As an example, we determine the set of spin structures coming from odd dashings on the Riemann surface $X_4$ associated to the $4$-cube for the rainbow $(1,2,3,4)$. 
If we fix a white vertex $w$, then the $1/3$ color loop and $2/4$ color loop are the standard $a,b$ cycles that define a homology basis for $H_1(X)$. A basis for $H_1(UX)$ is given by  $\vec{a},\vec{b}$, and $z$. The set of all spin structures on $X_4$ is given by the set of $\xi\in H^1(UX)$ such that $\xi(z)=1$. This yields four spin structures which are completely determined by the images of $\vec{a}$ and $\vec{b}$. 

Up to equivalence, there is a unique odd dashing on the hypercube. It is depicted in Figure \ref{HyperCubeExample}. Our choice of dimer does not matter at this point, so we do not explicitly fix it. According to Equation \eqref{quadraticform}, which defines $q^K_D$, all $2$-colored loops must map to $0$. It follows that the quadratic form on $H_1(X)$ associated to the unique odd dashing is determined by setting $q(a)=q(b)=0$. 

On the other hand, for each of the spin structures on $X_4$, not necessarily arising from our construction, we can compute the associated quadratic form. The spin structure that yields the same quadratic form as above is the spin structure $\xi$ defined by sending both $\vec{a}$ and $\vec{b}$ to $1$. Notice that the Arf-invariant for this form is $0$, so that we have obtained one of the three even spin structures. 

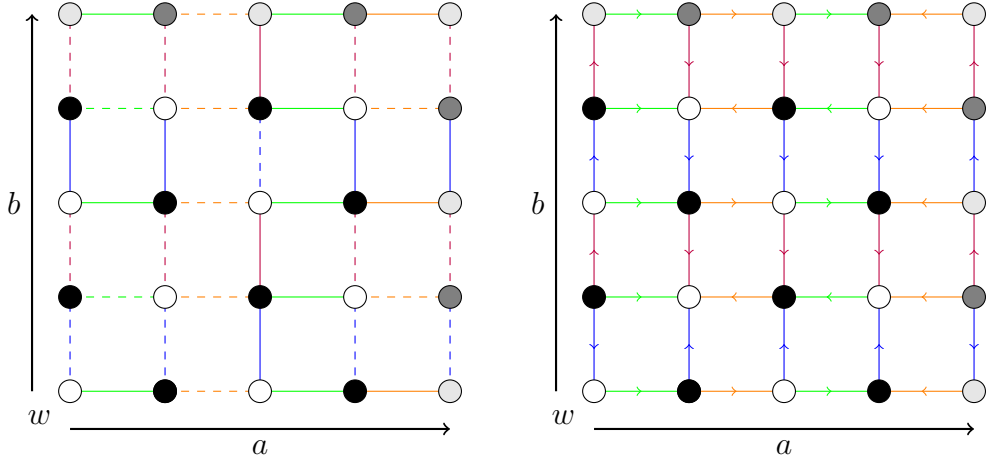
\begin{figure}
\centering
\subfloat{
\begin{tikzpicture}[scale=2.5]
	\draw[dashed,color=blue] (-1,-1)--(-1,-0.5);
	\draw[color=blue] (-1,0)--(-1,0.5);
	\draw[dashed,color=blue] (-0.5,-1)--(-0.5,-0.5);
	\draw[color=blue] (-0.5,0)--(-0.5,0.5);
	\draw[color=blue] (-0,-1)--(0,-0.5);
	\draw[dashed,color=blue] (0,0)--(0,0.5);
	\draw[dashed, color=blue] (0.5,-1)--(0.5,-0.5);
	\draw[color=blue] (0.5,0)--(0.5,0.5);
	\draw[dashed, color=blue] (1,-1)--(1,-0.5);
	\draw[color=blue] (1,0)--(1,0.5);
	
	\draw[dashed, color=purple] (-1,-0.5)--(-1,0);
	\draw[dashed, color=purple] (-1,0.5)--(-1,1);
	\draw[dashed, color=purple] (-0.5,-0.5)--(-0.5,0);
	\draw[dashed, color=purple] (-0.5,0.5)--(-0.5,1);
	\draw[color=purple] (0,-0.5)--(0,0);
	\draw[color=purple] (0,0.5)--(0,1);
	\draw[dashed, color=purple] (0.5,-0.5)--(0.5,0);
	\draw[dashed, color=purple] (0.5,0.5)--(0.5,1);
	\draw[dashed, color=purple] (1,-0.5)--(1,0);
	\draw[dashed, color=purple] (1,0.5)--(1,1);
	
	\draw[color=green] (-1,-1)--(-0.5,-1);
	\draw[dashed,color=green] (-1,-0.5)--(-0.5,-0.5);
	\draw[color=green] (-1,0)--(-0.5,0);
	\draw[dashed,color=green] (-1,0.5)--(-0.5,0.5);
	\draw[color=green] (-1,1)--(-0.5,1);
	\draw[color=green] (0,-1)--(0.5,-1);
	\draw[color=green] (0,-0.5)--(0.5,-0.5);
	\draw[color=green] (0,0)--(0.5,0);
	\draw[color=green] (0,0.5)--(0.5,0.5);
	\draw[color=green] (0,1)--(0.5,1);

	\draw[dashed, color=orange] (-0.5,-1)--(0,-1);
	\draw[dashed, color=orange] (-0.5,-0.5)--(0,-0.5);
	\draw[dashed, color=orange] (-0.5,0)--(0,0);
	\draw[dashed, color=orange] (-0.5,0.5)--(0,0.5);
	\draw[dashed, color=orange] (-0.5,1)--(0,1);
	\draw[color=orange] (0.5,-1)--(1,-1);
	\draw[dashed, color=orange] (0.5,-0.5)--(1,-0.5);
	\draw[color=orange] (0.5,0)--(1,0);
	\draw[dashed, color=orange] (0.5,0.5)--(1,0.5);
	\draw[color=orange] (0.5,1)--(1,1);
	
	\draw[fill=white] (-1,-1) node[below left=3pt] {$w$} circle[radius=0.06];
	\draw[fill=white] (-1,0) circle[radius=0.06];
	\draw[fill=gray!20] (-1,1) circle[radius=0.06];
	\draw[fill=white] (0,-1) circle[radius=0.06];
	\draw[fill=white] (0,0) circle[radius=0.06];
	\draw[fill=gray!20] (0,1) circle[radius=0.06];
	\draw[fill=gray!20] (1,-1) circle[radius=0.06];
	\draw[fill=gray!20] (1,0) circle[radius=0.06];
	\draw[fill=gray!20] (1,1) circle[radius=0.06];
	\draw[fill=white] (-0.5,-0.5) circle[radius=0.06];
	\draw[fill=white] (-0.5,0.5) circle[radius=0.06];
	\draw[fill=white] (0.5,0.5) circle[radius=0.06];
	\draw[fill=white] (0.5,-0.5) circle[radius=0.06];

	\draw[fill=black] (-0.5,-1) circle[radius=0.06];
	\draw[fill=black] (-1,0.5) circle[radius=0.06];
	\draw[fill=black] (-1,-0.5) circle[radius=0.06];
	\draw[fill=black] (-0.5,-1) circle[radius=0.06];
	\draw[fill=black] (-0.5,0) circle[radius=0.06];
	\draw[fill=gray] (-0.5,1) circle[radius=0.06];
	\draw[fill=black] (0,-0.5) circle[radius=0.06];
	\draw[fill=black] (0,0.5) circle[radius=0.06];
	\draw[fill=black] (0.5,-1) circle[radius=0.06];
	\draw[fill=black] (0.5,0) circle[radius=0.06];
	\draw[fill=gray] (0.5,1) circle[radius=0.06];
	\draw[fill=gray] (1,0.5) circle[radius=0.06];
	\draw[fill=gray] (1,-0.5) circle[radius=0.06];
	
	\draw[->, thick] (-1,-1.2)--node[anchor=north] {$a$} (1,-1.2);
	\draw[->, thick] (-1.2,-1)--node[anchor=east] {$b$} (-1.2,1);
\end{tikzpicture}}	
\hfil
\subfloat{\begin{tikzpicture}[scale=2.5]
	\draw[middlearrow={<},color=blue] (-1,-1)--(-1,-0.5);
	\draw[middlearrow={>},color=blue] (-1,0)--(-1,0.5);
	\draw[middlearrow={>},color=blue] (-0.5,-1)--(-0.5,-0.5);
	\draw[middlearrow={<},color=blue] (-0.5,0)--(-0.5,0.5);
	\draw[middlearrow={>},color=blue] (0,-1)--(0,-0.5);
	\draw[middlearrow={<},color=blue] (0,0)--(0,0.5);
	\draw[middlearrow={>},color=blue] (0.5,-1)--(0.5,-0.5);
	\draw[middlearrow={<},color=blue] (0.5,0)--(0.5,0.5);
	\draw[middlearrow={<},color=blue] (1,-1)--(1,-0.5);
	\draw[middlearrow={>},color=blue] (1,0)--(1,0.5);
	
	\draw[middlearrow={>},color=purple] (-1,-0.5)--(-1,0);
	\draw[middlearrow={>},color=purple] (-1,0.5)--(-1,1);
	\draw[middlearrow={<}, color=purple] (-0.5,-0.5)--(-0.5,0);
	\draw[middlearrow={<}, color=purple] (-0.5,0.5)--(-0.5,1);
	\draw[middlearrow={<}, color=purple] (0,-0.5)--(0,0);
	\draw[middlearrow={<}, color=purple] (0,0.5)--(0,1);
	\draw[middlearrow={<}, color=purple] (0.5,-0.5)--(0.5,0);
	\draw[middlearrow={<}, color=purple] (0.5,0.5)--(0.5,1);
	\draw[middlearrow={>}, color=purple] (1,-0.5)--(1,0);
	\draw[middlearrow={>}, color=purple] (1,0.5)--(1,1);
	
	\draw[middlearrow={>},color=green] (-1,-1)--(-0.5,-1);
	\draw[middlearrow={>},color=green] (-1,-0.5)--(-0.5,-0.5);
	\draw[middlearrow={>},color=green] (-1,0)--(-0.5,0);
	\draw[middlearrow={>},color=green] (-1,0.5)--(-0.5,0.5);
	\draw[middlearrow={>},color=green] (-1,1)--(-0.5,1);
	\draw[middlearrow={>},color=green] (0,-1)--(0.5,-1);
	\draw[middlearrow={<},color=green] (0,-0.5)--(0.5,-0.5);
	\draw[middlearrow={>},color=green] (0,0)--(0.5,0);
	\draw[middlearrow={<},color=green] (0,0.5)--(0.5,0.5);
	\draw[middlearrow={>}, color=green] (0,1)--(0.5,1);

	\draw[middlearrow={>},color=orange] (-0.5,-1)--(0,-1);
	\draw[middlearrow={<},color=orange] (-0.5,-0.5)--(0,-0.5);
	\draw[middlearrow={>},color=orange] (-0.5,0)--(0,0);
	\draw[middlearrow={<},color=orange] (-0.5,0.5)--(0,0.5);
	\draw[middlearrow={>},color=orange] (-0.5,1)--(0,1);
	\draw[middlearrow={<},color=orange] (0.5,-1)--(1,-1);
	\draw[middlearrow={<},color=orange] (0.5,-0.5)--(1,-0.5);
	\draw[middlearrow={<},color=orange] (0.5,0)--(1,0);
	\draw[middlearrow={<},color=orange] (0.5,0.5)--(1,0.5);
	\draw[middlearrow={<},color=orange] (0.5,1)--(1,1);
	
	\draw[fill=white] (-1,-1) node[below left=3pt] {$w$} circle[radius=0.06];
	\draw[fill=white] (-1,0) circle[radius=0.06];
	\draw[fill=gray!20] (-1,1) circle[radius=0.06];
	\draw[fill=white] (0,-1) circle[radius=0.06];
	\draw[fill=white] (0,0) circle[radius=0.06];
	\draw[fill=gray!20] (0,1) circle[radius=0.06];
	\draw[fill=gray!20] (1,-1) circle[radius=0.06];
	\draw[fill=gray!20] (1,0) circle[radius=0.06];
	\draw[fill=gray!20] (1,1) circle[radius=0.06];
	\draw[fill=white] (-0.5,-0.5) circle[radius=0.06];
	\draw[fill=white] (-0.5,0.5) circle[radius=0.06];
	\draw[fill=white] (0.5,0.5) circle[radius=0.06];
	\draw[fill=white] (0.5,-0.5) circle[radius=0.06];

	\draw[fill=black] (-0.5,-1) circle[radius=0.06];
	\draw[fill=black] (-1,0.5) circle[radius=0.06];
	\draw[fill=black] (-1,-0.5) circle[radius=0.06];
	\draw[fill=black] (-0.5,-1) circle[radius=0.06];
	\draw[fill=black] (-0.5,0) circle[radius=0.06];
	\draw[fill=gray] (-0.5,1) circle[radius=0.06];
	\draw[fill=black] (0,-0.5) circle[radius=0.06];
	\draw[fill=black] (0,0.5) circle[radius=0.06];
	\draw[fill=black] (0.5,-1) circle[radius=0.06];
	\draw[fill=black] (0.5,0) circle[radius=0.06];
	\draw[fill=gray] (0.5,1) circle[radius=0.06];
	\draw[fill=gray] (1,0.5) circle[radius=0.06];
	\draw[fill=gray] (1,-0.5) circle[radius=0.06];
	
	\draw[->, thick] (-1,-1.2)--node[anchor=north] {$a$} (1,-1.2);
	\draw[->, thick] (-1.2,-1)--node[anchor=east] {$b$} (-1.2,1);
	
\end{tikzpicture}}
\caption{A fundamental domain for the Riemann surface $X_4$ with its unique odd dashing on the left, and the corresponding Kasteleyn orientation on the right. The rainbow depicted here is (blue,orange,purple,green).}
\label{HyperCubeExample}
\end{figure}
\end{example}

\begin{example} As a more complicated example, we determine the set of spin structures coming from odd dashings on the Riemann surface $X_\mathcal{A}$, where $\mathcal{A}$ is the Adinkra corresponding to the unique non-trivial doubly even code $C=\gen{(1,1,1,1)}$, with the rainbow $(1,2,3,4)$. Since $C$ is $1$-dimensional, there are exactly $2$ odd dashings up to equivalence. Fixing a white vertex $w$, the homology group $H_1(A)$ is identified with the doubly even code $C$, as described earlier; the group $H^1(A)$ is dual to this group. 
	
We realize $X_\mathcal{A}$ as a square torus. In fact, we can take a subset of the fundamental domain for $X_4$ as a fundamental domain for $X_\mathcal{A}$. Such a fundamental domain is depicted in Figure \ref{QuotientExample}.
 Fixing a white vertex $w$, the cycles obtained by traversing colors $1,2,3,4$ in sequence, and in reverse, are homologous to the $a$ and $b$ cycles (see Figure \ref{QuotientExample}). Note, these cycles have nothing to do with the $a$ and $b$ cycles above, despite the notation being used. A basis for $H_1(UX)$ is given by $\vec{a},\vec{b},z$, and a spin structure $\xi$ is an assignment of $\vec{a},\vec{b}$ to $\mathbf{Z}/2\mathbf{Z}$. With the odd dashing $\mu$ that is depicted in Figure \ref{QuotientExample}, we obtain a spin-structure $\xi_\mu$. Now we compute the quadratic form $q_\mu$ associated to $\mu$, with the dimer $D_1$ using Equation \eqref{quadraticform}. It suffices to find the values of $q_\mu(a)$ and $q_\mu(b)$. Using the fact that $a$ is homologous to $P^c$, where $c=(1,1,1,1)$, we can read off from Figure \ref{QuotientExample} that
 $$\epsilon_{K_\mu}(a)=1,\ \ \textrm{and}\ \ \ell_D(a)=1.$$
 It follows that $q_\mu(a)=1$. On the other hand, the cycle $b$ is homologous to the path starting at $w$ and travelling the edges in the order $(1,4,3,2)$. From Figure \ref{QuotientExample}, we read off that
 $$\epsilon_{K_\mu}(b)=0,\ \ \textrm{and}\ \ \ell_D(b)=1,$$ from which it follows that $q_\mu(b)=0$. 
 
 Therefore, this spin structure corresponds to the spin structure that sends $\vec{a}$ to $0$ and $\vec{b}$ to $1$. We obtain the second spin structure by adding the non-trivial element $\eta$ of $H^1(A)$ to $q_\mu$. Since $H^1(A)$ is dual to $H_1(A)$, $\eta$ is the element defined by sending $P^c$ to $1$. Both the $a$ and $b$ cycles on $X_\mathcal{A}$ are homologous, in the cubical complex, to $P^c$, from which it follows that $\eta(a)=\eta(b)=1$. Therefore, the second spin structure obtained is the one defined by sending $\vec{a}$ to $1$ and $\vec{b}$ to $0$. The Arf-invariants of both spin structures are $0$, so that we have obtained two of the three even spin structures on the torus. 
 
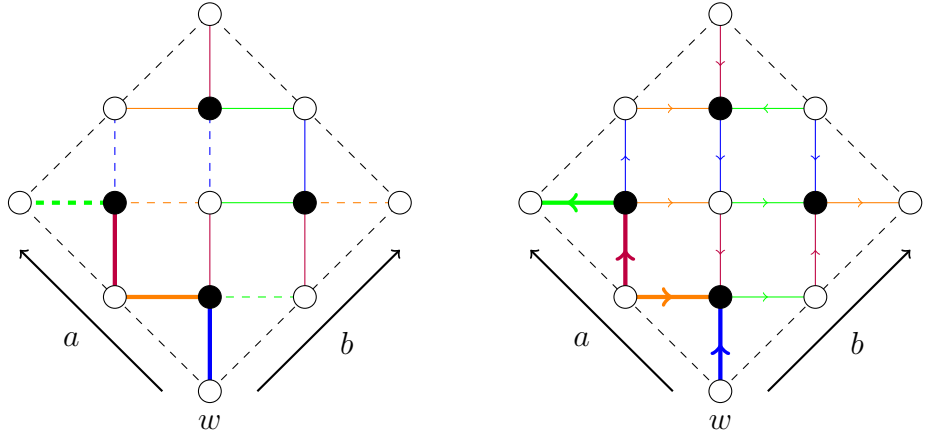
\begin{figure}
\centering
\subfloat{
\begin{tikzpicture}[scale=2.5]

	\draw[dashed, color=blue] (-0.5,0)--(-0.5,0.5);
	\draw[color=blue, ultra thick] (0,-1)--(0,-0.5);
	\draw[dashed, color=blue] (0,0)--(0,0.5);
	\draw[color=blue] (0.5,0)--(0.5,0.5);
	
	\draw[color=purple, ultra thick] (-0.5,-0.5)--(-0.5,0);
	\draw[color=purple] (0,-0.5)--(0,0);
	\draw[color=purple] (0,0.5)--(0,1);
	\draw[color=purple] (0.5,-0.5)--(0.5,0);
	
	\draw[dashed,color=green, ultra thick] (-1,0)--(-0.5,0);
	\draw[dashed, color=green] (0,-0.5)--(0.5,-0.5);
	\draw[color=green] (0,0)--(0.5,0);
	\draw[color=green] (0,0.5)--(0.5,0.5);

	\draw[color=orange, ultra thick] (-0.5,-0.5)--(0,-0.5);
	\draw[dashed,color=orange] (-0.5,0)--(0,0);
	\draw[color=orange] (-0.5,0.5)--(0,0.5);
	\draw[dashed, color=orange] (0.5,0)--(1,0);

	\draw[dashed] (0,-1)--(1,0);
	\draw[dashed] (0,-1)--(-1,0);
	\draw[dashed] (1,0)--(0,1);
	\draw[dashed] (-1,0)--(0,1);
	
	\draw[->,thick] (0.25,-1)--node[anchor=north west] {$b$}(1,-0.25);
	\draw[->, thick] (-0.25,-1)--node[anchor=north east] {$a$}(-1,-0.25);

	\draw[fill=white] (-1,0) circle[radius=0.06];
	\draw[fill=white] (0,-1) node[below=5pt] {$w$} circle[radius=0.06];
	\draw[fill=white] (0,0) circle[radius=0.06];
	\draw[fill=white] (0,1) circle[radius=0.06];
	\draw[fill=white] (1,0) circle[radius=0.06];
	\draw[fill=white] (-0.5,-0.5) circle[radius=0.06];
	\draw[fill=white] (-0.5,0.5) circle[radius=0.06];
	\draw[fill=white] (0.5,0.5) circle[radius=0.06];
	\draw[fill=white] (0.5,-0.5) circle[radius=0.06];

	\draw[fill=black] (-0.5,0) circle[radius=0.06];
	\draw[fill=black] (0,-0.5) circle[radius=0.06];
	\draw[fill=black] (0,0.5) circle[radius=0.06];
	\draw[fill=black] (0.5,0) circle[radius=0.06];

\end{tikzpicture}}
\hfil
\subfloat{
\begin{tikzpicture}[scale=2.5]

	\draw[middlearrow={>},color=blue] (-0.5,0)--(-0.5,0.5);
	\draw[middlearrow={>},color=blue, ultra thick] (0,-1)--(0,-0.5);
	\draw[middlearrow={<},color=blue] (0,0)--(0,0.5);
	\draw[middlearrow={<},color=blue] (0.5,0)--(0.5,0.5);
	
	\draw[middlearrow={>},color=purple, ultra thick] (-0.5,-0.5)--(-0.5,0);
	\draw[middlearrow={<},color=purple] (0,-0.5)--(0,0);
	\draw[middlearrow={<},color=purple] (0,0.5)--(0,1);
	\draw[middlearrow={>},color=purple] (0.5,-0.5)--(0.5,0);
	
	\draw[middlearrow={<},color=green, ultra thick] (-1,0)--(-0.5,0);
	\draw[middlearrow={>},color=green] (0,-0.5)--(0.5,-0.5);
	\draw[middlearrow={>},color=green] (0,0)--(0.5,0);
	\draw[middlearrow={<},color=green] (0,0.5)--(0.5,0.5);

	\draw[middlearrow={>},color=orange, ultra thick] (-0.5,-0.5)--(0,-0.5);
	\draw[middlearrow={>},color=orange] (-0.5,0)--(0,0);
	\draw[middlearrow={>},color=orange] (-0.5,0.5)--(0,0.5);
	\draw[middlearrow={>},color=orange] (0.5,0)--(1,0);

	\draw[dashed] (0,-1)--(1,0);
	\draw[dashed] (0,-1)--(-1,0);
	\draw[dashed] (1,0)--(0,1);
	\draw[dashed] (-1,0)--(0,1);
	
	\draw[->,thick] (0.25,-1)--node[anchor=north west] {$b$}(1,-0.25);
	\draw[->, thick] (-0.25,-1)--node[anchor=north east] {$a$}(-1,-0.25);

	\draw[fill=white] (-1,0) circle[radius=0.06];
	\draw[fill=white] (0,-1) node[below=5pt] {$w$} circle[radius=0.06];
	\draw[fill=white] (0,0) circle[radius=0.06];
	\draw[fill=white] (0,1) circle[radius=0.06];
	\draw[fill=white] (1,0) circle[radius=0.06];
	\draw[fill=white] (-0.5,-0.5) circle[radius=0.06];
	\draw[fill=white] (-0.5,0.5) circle[radius=0.06];
	\draw[fill=white] (0.5,0.5) circle[radius=0.06];
	\draw[fill=white] (0.5,-0.5) circle[radius=0.06];

	\draw[fill=black] (-0.5,0) circle[radius=0.06];
	\draw[fill=black] (0,-0.5) circle[radius=0.06];
	\draw[fill=black] (0,0.5) circle[radius=0.06];
	\draw[fill=black] (0.5,0) circle[radius=0.06];

\end{tikzpicture}}
\caption{A fundamental domain for the surface corresponding to the code $C=\gen{(1111)}$  with a choice of odd dashing on the left, and the corresponding Kasteleyn orientation on the right. The black dashed edges form the boundary, and the $a$ and $b$ cycles determine the gluing. The path in bold, starting at $w$ is the cycle $P^c$, where $c=(1,1,1,1)$. }
\label{QuotientExample}
\end{figure}

\end{example}

\begin{remark}
 The interested reader can verify the above result in a different way by working with the above bipartite graph, but choosing the other odd dashing to compute the second spin structure. Such an odd dashing can be obtained from the one depicted in Figure \ref{QuotientExample} by switching the dashing on the edges of the $a$ and $b$ cycles for all colors other than blue. As long as we agree on the $a$ and $b$ cycles, the results will be same. 

\end{remark}

\begin{remark}
	We emphasize here that the above construction is valid for any choice of rainbow. 
	While choosing two different rainbows may give rise to very different Riemann surfaces (see Section 4), the odd dashings and dimer configurations on an Adinkra are purely graph-theoretic and do not depend on such a choice. 
	Therefore, even though varying the rainbow will change the surface on which the spin structures are supported, we may still think of the set of spin structures coming from odd dashings as being common to \emph{all} of these surfaces. 
\end{remark}

\subsection{Super Riemann Surface Structure}
In this section, we show that the Riemann surfaces $X_\mathcal{A}$ admit natural super Riemann surface structures. We briefly review the required theory below and encourage the reader to consult \cite{Donagi:2013,Witten:2012} for more details.

\begin{definition}
	A $\mathbf{Z}/2\mathbf{Z}$-graded sheaf of algebras $A=A_0\oplus A_1$ is \emph{supercommutative} if it satisfies $fg=(-1)^{ij}gf$ for all $f\in A_i,g\in A_j$. A \emph{supercommutative locally ringed space} is a locally ringed space $(M,\mathcal{O})$ where $\mathcal{O}$ is a a sheaf of supercommutative algebras whose stalks are local rings. 
\end{definition}

\begin{definition}
Let $M$ be  a complex manifold, and  $V$  a vector bundle over $M$. The  supercommutative locally ringed space $S=S(M,V)$ is defined to be the pair $(M,\mathcal{O}_S)$, where $\mathcal{O}_S$ is the sheaf of $\mathcal{O}_M$-valued sections of the exterior algebra $\wedge^\bullet(V^\vee)$ on the dual bundle $V^\vee$.	

A \emph{supermanifold} is a supercommutative locally ringed space, locally isomorphic to some $S(M,V)$. It is called \emph{split} if it is globally isomorphic to such a space. 
\end{definition}
 By virtue of the exterior product, the even part of $\wedge^\bullet(V^\vee)$ is just $\mathcal{O}_M$, while the odd-part is the rest of the direct sum. \begin{example}
	Complex superspace is defined as $$\mathbf{C}^{m|n}=(\mathbf{C}^n,\mathcal{O}_{\mathbf{C}^{n|m}})=S(\mathbf{C}^m,\mathcal{O}_{\mathbf{C}^m}^{\oplus n}).$$ That is, the underlying manifold is usual complex $m$-space, and the vector bundle is the trivial one of rank $n$. 
\end{example}

The sheaf $\mathcal{O}_S$ admits a surjective map to $\mathcal{O}_M$ with kernel $J$ consisting of the ``odd'' functions. It is not always the case that we can give $\mathcal{O}_S$ the structure of $\mathcal{O}_M$-module, but it is the case if $S$ is split. The dual of $V$ can be recovered as $J/J^2$. 

The tangent bundle of $S$ is the sheaf of derivations on $\mathcal{O}_S$. It is a $\mathbf{Z}/2\mathbf{Z}$-graded vector bundle. In the case of $\mathbf{C}^{n|m}$, $TS$ is freely generated by even tangent vectors $\frac{\di}{\di x_i}$ for $i=1,\cdots, m$ and odd tangent vectors $\frac{\di }{\di \theta_j}$ for $j=1,\dots n$. In general, $TS$ does not have such distinguished generators because the even and odd parts fail to be sheaves of $\mathcal{O}_S$-modules. Restricting $TS$ to the underlying manifold $M$, we do have a splitting though: the even part, $TS_+$ is $TM$, the tangent bundle of $M$, and the odd part, $TS_-$, is a vector bundle $V$. In this case, $S$ is modelled on $M,V$, but may not be isomorphic to $S(M,V)$. By definition, the dimension of $S$ is the pair $(m|n)$, where $m,n$ are the ranks of the even/odd pieces of the tangent bundle. 

\begin{definition}
	With notation as above, the \emph{dimension} of the supermanifold $S$ is the pair $(m|n)$, where $m$ is the rank of $TS_+$ and $n$ is the rank of $TS_-$.
\end{definition}

Morphisms of supermanifolds are morphisms of locally ringed spaces that repsect the supergrading. 
Given a supermanifold $S=(M,\mathcal{O}_S)$, and a finite (unbranched) covering map $f\colon\tilde{M}\to M$,  we can construct the supermanifold $\widetilde{S}$ by taking $\widetilde{M}$ as the underlying manifolding, and taking the pull-back of $\mathcal{O}_S$ as sheaf of functions. This yields a morphism $F\colon\tilde{S}\to S$ of supermanifolds whose restriction to the underlying manifold is $f$. 

We can also do the same thing for branched covers. More precisely, given a supermanifold $S=(M,\mathcal{O}_S)$, a branched covering $f$ of $M$ with branch divisor $B\ins M$, and a divisor $D\ins S$ whose intersection with $M$ is $B$, we can construct a supermanifold $\tilde{S}$ with morphism $F$ whose branch divisor is $D$, and whose restriction is $f$.

\begin{definition}
	A \emph{super Riemann surface} is a pair $\mathcal{S}=(S,\mathcal{D})$, where $S=(C,\mathcal{O}_S)$ is complex supermanifold of dimension $(1|1)$ and $\mathcal{D}$ is an everywhere non-integrable odd distribution $\mathcal{D}\ins TS$. 
\end{definition} 

The non-integrablity condition implies that the vector bundle $V$ associated to $S$ is a square-root of the dual of the canonical bundle. That is, the reduced space of a super Riemann surface is a spin curve; conversely, a Riemann surface, together with a choice of spin structure, naturally gives rise to a super Riemann surface.

\begin{proposition}
	A choice of odd dashing on the Adinkra $A$ naturally gives rise to a super Riemann surface structure on $X_\mathcal{A}$ for any choice of rainbow. Equivalent odd dashings give rise to isomorphic super Riemann surfaces. 
\end{proposition}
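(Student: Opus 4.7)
The plan is to chain together the constructions already established in the previous subsections with the classical equivalence between spin structures on a Riemann surface and super Riemann surface structures on it. Concretely, given an odd dashing $\mu$ on $A$, I would first pass to its equivalence class $[\mu] \in \widetilde{\mathcal{M}}$. By the results already proved, $i^*$ embeds $\widetilde{\mathcal{M}}$ as an affine subspace of $\widetilde{\mathcal{K}}$, yielding a class of Kasteleyn orientations $[K_\mu]$ on $A \subset X_\mathcal{A}$. Next I would pick any dimer configuration $D_j$ (any color class of edges); since the $D_j$ are all equivalent in the sense of the dimer equivalence relation, the map $\Psi_{D_j}\colon\widetilde{\mathcal{K}} \to \mathcal{S}(X_\mathcal{A})$ of Cimasoni--Reshetikhin is independent of $j$, and produces a well-defined spin structure $\xi_\mu \in \mathcal{S}(X_\mathcal{A})$.

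The second step is to upgrade $\xi_\mu$ to a super Riemann surface structure. A spin structure on $X_\mathcal{A}$ corresponds to a holomorphic line bundle $L$ equipped with an isomorphism $L^{\otimes 2} \cong K_{X_\mathcal{A}}^\vee$, where $K_{X_\mathcal{A}}$ is the canonical bundle. I would set $V = L$ and form the split supermanifold $S := S(X_\mathcal{A}, L)$ of dimension $(1|1)$, as in the definition recalled above. Non-integrability of the odd distribution $\mathcal{D} \subset TS$ generated by $\partial_\theta + \theta \partial_z$ (in local supercoordinates $(z,\theta)$ compatible with the identification $L^{\otimes 2} \cong K_{X_\mathcal{A}}^\vee$) is precisely the condition that $V \otimes V \cong T_{X_\mathcal{A}} = K_{X_\mathcal{A}}^\vee$, which holds by construction; this is the standard fact that a Riemann surface with a spin structure canonically determines a super Riemann surface, and I would invoke it from \cite{Donagi:2013, Witten:2012} rather than redo the local computation.

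For the second assertion, if $\mu$ and $\mu'$ are equivalent odd dashings, then by the proposition embedding $\widetilde{\mathcal{M}} \hookrightarrow \widetilde{\mathcal{K}}$ the corresponding Kasteleyn orientations $K_\mu, K_{\mu'}$ are equivalent, hence define the same class in $\widetilde{\mathcal{K}}$. Applying $\Psi_{D_j}$, which is well-defined on $\widetilde{\mathcal{K}}$, we get $\xi_\mu = \xi_{\mu'}$, so the associated square roots of $K_{X_\mathcal{A}}^\vee$ are isomorphic as line bundles. An isomorphism $L_\mu \cong L_{\mu'}$ respecting the chosen square-root isomorphisms induces an isomorphism of the exterior algebras and hence of the supercommutative structure sheaves, carrying $\mathcal{D}_\mu$ to $\mathcal{D}_{\mu'}$ by construction; this is the required isomorphism of super Riemann surfaces.

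The only genuine obstacle is verifying that the passage from a spin structure (in the topological/combinatorial sense of Johnson, via quadratic forms on $H_1$) to a holomorphic square root of $K_{X_\mathcal{A}}^\vee$ is compatible with the Riemann surface structure on $X_\mathcal{A}$ constructed in Part I. I would handle this by citing the standard identification of the two notions of spin structure on a compact Riemann surface (topological $\xi \in H^1(UX;\mathbf{Z}/2)$ with $\xi(z)=1$ versus holomorphic theta characteristic), which is canonical once a complex structure has been fixed; the rest of the argument is then formal.
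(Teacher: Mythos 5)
Your proposal is correct and follows essentially the same route as the paper: odd dashing $\to$ equivalence class of Kasteleyn orientations $\to$ spin structure via the Cimasoni--Reshetikhin map (using the equivalence of the dimer configurations $D_j$), then the standard fact from \cite{Donagi:2013,Witten:2012} that a Riemann surface with a spin structure determines a super Riemann surface, with equivalent dashings giving isomorphic line bundles and hence isomorphic super Riemann surfaces. You simply spell out the details (local model of $\mathcal{D}$, compatibility of topological and holomorphic spin structures) that the paper leaves implicit.
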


\begin{proof}
We have already seen that a choice of odd dashing gives rise to a spin structure. The above discussion implies that we obtain a super Riemann surface structure on $X_\mathcal{A}$. 
Equivalent odd dashings give rise to isomorphic line bundles, and such isomorphisms induces isomorphisms of super Riemann surfaces, from which the second statement follows.	
\end{proof}

\begin{proposition}
	Let $A$ be an Adinkra, and let $A'$ be the quotient of $A$ by a doubly even code. Further, choose an odd dashing on $A'$ and lift the dashing to one on $A$. The morphism of Riemann surfaces $X_\mathcal{A}\to X_{\mathcal{A}'}$ induces a morphism of super Riemann surfaces, where $\mathcal{A}$ and $\mathcal{A}'$ are given the same rainbow.  
\end{proposition}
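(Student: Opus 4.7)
The plan is to reduce the statement to the equality of spin structures
\[
\xi_\mu = f^*\xi_{\mu'},
\]
where $f : X_\mathcal{A}\to X_{\mathcal{A}'}$ is the underlying covering of Riemann surfaces. Since the doubly-even code acts freely, $f$ is an unramified covering, and the pullback construction for supermanifolds recalled in the previous subsection yields a super Riemann surface $f^*S_{\mathcal{A}'} = (X_\mathcal{A},f^{-1}\mathcal{O}_{S_{\mathcal{A}'}})$ together with a tautological morphism $F: f^*S_{\mathcal{A}'}\to S_{\mathcal{A}'}$ covering $f$. The reduced datum of $f^*S_{\mathcal{A}'}$ is $(X_\mathcal{A},f^*\xi_{\mu'})$, while $S_\mathcal{A}$ has reduced datum $(X_\mathcal{A},\xi_\mu)$. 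By the classification of super Riemann surface structures on a fixed Riemann surface by spin structures (used already in the preceding proposition), once we know $\xi_\mu = f^*\xi_{\mu'}$ we conclude $S_\mathcal{A}\cong f^*S_{\mathcal{A}'}$, and composing this isomorphism with $F$ produces the desired morphism of super Riemann surfaces.

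To prove this equality, I would first verify that the discrete data transport correctly along the graph cover $p: A\to A'$. By the definition of \emph{lift} one has $\mu = p^*\mu'$ as $1$-cochains, and since $p$ preserves the bipartite orientation the associated Kasteleyn orientations satisfy $\epsilon_{K_\mu} = p^*\epsilon_{K_{\mu'}}$ edge by edge. Similarly, since $p$ preserves the edge coloring, the canonical color-$j$ dimer on $A$ is $D_j = p^{-1}(D'_j)$, and each $4$-gon face of $X_\mathcal{A}$ maps homeomorphically onto a $4$-gon face of $X_{\mathcal{A}'}$. In this way every combinatorial input to the Cimasoni-Reshetikhin construction is preserved under pullback.

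The main step is then the naturality of the Cimasoni-Reshetikhin correspondence under the unramified cover $f$. Via Johnson's isomorphism $\xi\leftrightarrow q_\xi$, and the general identity $q_{f^*\xi'}(\alpha) = q_{\xi'}(f_*\alpha)$ for $\alpha\in H_1(X_\mathcal{A},\mathbf{Z}/2\mathbf{Z})$ (which one checks at the level of canonical lifts, using that $(Uf)_*\tilde\alpha = \widetilde{f_*\alpha}$ in $H_1(UX_{\mathcal{A}'},\mathbf{Z}/2\mathbf{Z})$ because each transverse self-intersection introduced by resolving $f(C_i)$ contributes a compensating factor of the fibre class $z'$), the desired equality $\xi_\mu = f^*\xi_{\mu'}$ is equivalent to
\[
q^{K_\mu}_{D_j}(\alpha) = q^{K_{\mu'}}_{D'_j}(f_*\alpha).
\]
To verify this, I would represent $\alpha$ by oriented simple closed curves $C_1,\dots,C_m$ in $A$, push forward to closed curves $f(C_1),\dots,f(C_m)$ in $A'$, and resolve any self-intersections into disjoint simple closed curves, using the quadratic identity $\omega(a+b)=\omega(a)+\omega(b)+a\cdot b$ to track the correction terms. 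Each of the three ingredients of formula \eqref{quadraticform} then matches: intersection pairings because $f$ is a local diffeomorphism; the cochain $\epsilon_K$ by the edge-wise identity $\epsilon_{K_\mu} = p^*\epsilon_{K_{\mu'}}$; and the sticker count $\ell_D$ because $f$ preserves both edge orientations and dimer membership. The main technical obstacle is matching the corrections introduced by resolving self-intersections of $f(C_i)$ with the cross-intersection contributions $\sum_{i<k} C_i\cdot C_k$ already present on the left-hand side; this reduces to a local analysis at each double point of $f(C_i)$, whose preimages in $X_\mathcal{A}$ consist of pairs of transverse points lying on $C_i$ itself or on distinct $C_i,C_k$, each of which contributes the predicted correction. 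Once this identity is established the proposition follows from the reduction in the first paragraph.
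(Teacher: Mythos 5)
The reduction in your first paragraph is fine, and your observation that the combinatorial data ($\mu$, $\epsilon_K$, the dimers $D_j$, the faces) pulls back cleanly along the unramified cover is correct. The genuine gap is the ``general identity'' $q_{f^*\xi'}(\alpha)=q_{\xi'}(f_*\alpha)$, equivalently the claim $(Uf)_*\tilde\alpha=\widetilde{f_*\alpha}$: this is false for covering maps, and with it your target identity $q^{K_\mu}_{D_j}(\alpha)=q^{K_{\mu'}}_{D'_j}(f_*\alpha)$ also fails. The problem is not self-intersections of $f(C_i)$ but curves that cover their image with even degree, and these always exist in our situation: pick a nonzero codeword $c$ in the doubly even code, take the path in $A$ from $v$ to $v+c$ traversing the support colors of $c$ in order, and continue with the same colors from $v+c$ back to $v$. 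This is a simple closed curve $C$ in $A\ins X_\mathcal{A}$ mapping $2$-to-$1$ onto a loop $\gamma'$ in $A'$. Since every edge and vertex of $\gamma'$ has exactly two preimages on $C$ and all the local data agree, $\epsilon_{K_\mu}(C)=2\epsilon_{K_{\mu'}}(\gamma')\equiv 0$ and $\ell_{D_j}(C)=2\ell_{D'_j}(\gamma')\equiv 0$, so the Cimasoni--Reshetikhin formula gives $q^{K_\mu}_{D_j}([C])=1$; but $f_*[C]=2[\gamma']=0$, so the right-hand side is $q^{K_{\mu'}}_{D'_j}(0)=0$. The same example kills the Johnson-level claim: $(Uf)_*\tilde C=2\vec{\gamma'}+z'=z'$, which is not the canonical lift of $f_*[C]=0$. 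So the equivalence on which your main step rests is wrong, and the verification you outline cannot go through as stated.

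Note that the discrepancy is consistent with the statement you actually need: $q_{f^*\xi'}([C])=\xi'(2\vec{\gamma'}+z')=1$ agrees with $q^{K_\mu}_{D_j}([C])=1$, so $\xi_\mu=f^*\xi_{\mu'}$ is still plausible --- the correct comparison must be made in $H_1(UX_{\mathcal{A}'})$, keeping track of the degree with which each curve covers its image (odd degree reproduces $q'$ of the image class, even degree contributes the fibre class, not $0$), rather than through $f_*$ on $H_1(X_{\mathcal{A}'})$. If you repair the naturality statement in this form, your strategy can be salvaged, but it becomes a genuinely different (and more delicate) argument than the paper's, which sidesteps homology altogether: it observes that pullback along $f$ on $\operatorname{Pic}$ sends the spin line bundle $\mathcal{L}_{\mathcal{A}'}$ determined by the dashing on $A'$ to the bundle $\mathcal{L}_{\mathcal{A}}$ determined by the lifted dashing, and then builds the morphism of super Riemann surfaces directly from the induced map on structure sheaves.
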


\begin{proof}
	The morphism of Riemann surfaces $X_\mathcal{A}\to X_{\mathcal{A}'}$ induces a homomorphism 
	$$\textrm{Pic}(X_{\mathcal{A}'})\to \textrm{Pic}(X_\mathcal{A})$$ on groups of isomorphism classes of line bundles via pullback. Choosing the odd dashing on $A$ in the manner specified, it follows that the line bundle $\mathcal{L}_{\mathcal{A}'}$ on $X_{\mathcal{A}'}$ maps to the line bundle $\mathcal{L}_\mathcal{A}$ on $X_\mathcal{A}$ via this homomorphism. From this, we can construct a natural map on the structure sheaves, giving rise the required map on the level of super Riemann surfaces. 
\end{proof}

In summary, we have shown that the odd dashing on an Adinkra gives rise to a spin structure on the Riemann surface. This spin structure can be used to enhance our surface into a super Riemann surface, and this construction is compatible with the covering maps arising from the quotients of Adinkras by doubly-even codes. In \cite{Doran:2015}, it was shown that all of the Riemann surfaces associated to Adinkras of a fixed length $N$ and rainbow are branched covers of the Riemann sphere, viewed either as  the Belyi base or the intermediate beachball $B_N$. It is therefore natural to ask whether or not we can give a super Riemann surface structure to these Riemann spheres in such a way that we are able to  lift these structures to the surfaces $X_A$. 

The answer to this question is ``No''. 
While there is a unique choice of spin structure on the Riemann sphere, leading to a super Riemann surface structure, we cannot lift this to a super Riemann surface structure on the surfaces $X_\mathcal{A}$. 
The reason for this is the branching of the corresponding covering maps, which causes issues when trying to lift the non-integrable odd distribution. 
If we instead consider only the structure of a supermanifold, then the choice of supermanifold manifold structure on the base will give rise to supermanifold structures on all of the surface $X_\mathcal{A}$, and all of the morphisms discussed in \cite{Doran:2015} can be lifted to morphisms of supercomplex manifolds. 
These supercomplex manifold structures are, in general, distinct from the ones arising in our construction involving spin structures. 

Donagi and Witten clarify the relationship between super Riemann surface structures on branched covers and the associated Ramond and Neveu-Schwarz punctures depending on the various ramification indices \cite{Witten:2012}. 
It is unclear how to produce the super Riemann surface structures corresponding to odd dashings on Adinkra Riemann surfaces ``from the bottom up'' by considering branched covers of the beachball base.

\section{From Engineering Dimension to Morse Divisor}
In this section, we give a geometric interpretation for the final additional structure on an Adinkra, its height function, via Morse theory (qua topological surface) and as a divisor (qua algebraic curve).
Although Morse functions are traditionally real-valued functions on manifolds, there have been many different attempts to discretize them. 
Two primary approaches to this are that of T. Banchoff \cite{Banchoff:1970} and that of R. Forman \cite{Forman:1998a,Forman:1998b}; a construction of E. Bloch reduces the study of Forman discrete Morse functions to those studied by Banchoff \cite{Bloch:2010}.
In this section we will show that the Adinkra height function defines a discrete Morse function on $X_\mathcal{A}$
in the sense of Banchoff \cite{Banchoff:1970}. The topology of $X_\mathcal{A}$ is captured by the critical behavior of this function, with critical points among the critical points of the Belyi map.  No explicit dependence on complex structure is made in this section.  On the other hand, we proceed to show that the expression for the the Euler characteristic of $X_\mathcal{A}$ in terms of these critical points leads  us naturally to consider a divisor on the Riemann surface, one that we will call a \emph{Morse divisor}. After reviewing some of the basics of Jacobians of algebraic curves, we explain how the collection of all height functions leads naturally to a subset of points on the Jacobian of our surface.  These results are illustrated with a number of examples ($N = 4, 5, 6$ and $8$).

\subsection{From Engineering Dimension to Discrete Morse Function}
As described in \cite{Doran:2015}, the height function corresponds to the engineering dimension of the component fields the vertex represents. Mathematically, we define the height as follows:

\begin{definition}
	If $A$ is an Adinkra and $V(A)$ is its vertex set, then a \emph{height function} on $A$ is a map $h\colon V(A)\to\mathbf{Z}$ such that the heights of adjacent vertices always differ by $1$.
\end{definition} 
We will assume that all of the white vertices have even height, while all of the black vertices have odd height. We will consider two height functions $h_1$ and $h_2$ as being equivalent if they are related by an overall shift. 

Morse theory was developed to understand the topology of a manifold by way of studying the critical behavior of a smooth function on it. Banchoff's approach to discretizing Morse theory assigns an index to the vertices of finite polyhedra embedded in Euclidean space. In particular he discusses the case of a triangular mesh embedded in Euclidean space. As described in \cite{Ni:2004}, this works for arbitrary oriented $2$-manifolds that have a triangular mesh structure; we follow the review of Banchoff's discrete Morse theory for triangle meshes in \cite{Ni:2004}. 

\begin{definition}
	A \emph{discrete Morse function} $f$ on a triangular mesh $M$ is a real-valued function defined on the vertices of $M$, such that adjacent vertices are mapped to distinct values. 
\end{definition}

Consider a $2$-dimensional oriented triangular mesh $M$, and a discrete Morse function $f$. Extend $f$ to a piecewise-linear function on $M$ by linear interpolation across the edges and faces of $M$. Since $f$ takes different values at the two endpoints of each edge in $M$, the gradient of $f$ will be a non-zero constant across all of the edges and faces of $M$.

The \textit{link} of a vertex $v$ is the set of all vertices $v_1,\ldots,v_m$ that are connected to $v$ by an edge together with the edges that connect $v_i$ and $v_{i+1}$, $1\leq i\leq m$, where we consider $m+1\equiv 1$. Here the order of the vertices is determined by the orientation of the mesh. We denote the link of $v$ by $\Lk(v)$, and the edge connecting vertices $v$ and $w$ by $<v,w>$. The \textit{upper link} of $v$ is defined as the set 
\begin{equation}
{\Lk}^+(v)=\{v_i\in\Lk(v)|f(v_i)>f(v)\}\cup\{\langle v_i,v_j\rangle \in\Lk(v)|f(v_i),f(v_j)>f(v)\}.\nonumber
\end{equation}
Similarly, the \textit{lower link} is defined as
\begin{equation}
{\Lk}^-(v)=\{v_i\in\Lk(v)|f(v_i)<f(v)\}\cup\{\langle v_i,v_j\rangle\in\Lk(v)|f(v_i),f(v_j)<f(v)\}.\nonumber
\end{equation}
Finally, the set of \textit{mixed edges} is defined by
\begin{equation}
{\Lk}^\pm(v)=\{\langle v^+,v^-\rangle\in\Lk(v)|f(v^+)>f(v)>f(v^-)\}.\nonumber
\end{equation}
The link of $v$ decomposes as 
\begin{equation}
\Lk(v)={\Lk}^+(v)\cup{\Lk}^-(v)\cup{\Lk}^\pm(v).\nonumber
\end{equation}
The number of mixed edges is always even and determines the classification of the critical points for $f$.  The classification is shown in the table below. 

\begin{table}[h]
\begin{tabular}{|l|l|l|}
\hline
Type & $|{\Lk}^\pm(v)|$ & Multiplicity\\
\hline
Minimum &$0$ &$-1$ \\
\hline
Maximum &$0$ &$-1$\\
\hline
Regular point &$2$&$0$\\
\hline
Saddle point& $2+2m_v$ &$m_v$\\
\hline
\end{tabular}
\caption{Critical point classification for discrete Morse functions}
\label{critical}
\end{table}

It is customary to call saddle points of multiplicity $1$ \emph{Morse saddles}. 
Banchoff proved in \cite{Banchoff:1970} that the Euler characteristic of the mesh can be computed from the information of the critical points via 
\begin{equation}
\label{eq:EulerCrit}
\chi(M)=\sum_{v\in M}(-m_v),
\end{equation}
the sum going over all vertices of the mesh $M$. 
This formula serves as our motivation, in the next section, to attach a divisor to such a function.

In order for this theory to be applicable to the Riemann surface $X_\mathcal{A}$, we need to endow it with the structure of a triangular mesh. As given, the surface $X_\mathcal{A}$ contains the graph $A$ as a square mesh. To produce a triangulation, we need only subdivide each square into triangles. We cannot do this arbitrarily though --- we need to do this in such a way that we can extend the height function to any vertices that we may have to add. This extended height function does not need to take integer values, but it is still necessary that adjacent vertices take distinct values. 

Coming from the Fuchsian uniformization of our surface discussed in \cite{Doran:2015}, there is a natural triangulation of $X_\mathcal{A}$ induced by the inverse-image of the equator of the Belyi base. 
Recall that the white dots are the pre-images of $0$, the black dots are the pre-images of $1$, and the centers of the faces are the pre-images of $\infty$. 
In order to make use of Banchoff's discrete Morse theory, however, it is necessary that the height function be non-constant along the edges of the triangles. 
This suggests a slight modification to this natural triangulation which we now describe. 

Each face of $X_\mathcal{A}$ can have one of two possible height configurations, the diamond and the bowtie, as depicted below in Figure \ref{Fig:TriDiamond}. If a given face has the diamond configuration, we draw an edge of a new color, grey, that connects the two vertices in the diamond that differ in height (see Figure \ref{Fig:TriDiamond}). If the face has the bowtie configuration, we place a new vertex at the center, which we take to be the pre-image of $\infty$ along the Belyi map, in the face being considered. The added vertex will also be colored grey.  We connect this new vertex to each of the other four vertices in the face by drawing four new grey edges (see Figure \ref{Fig:TriDiamond}). Performing this operation to each face of $X_\mathcal{A}$ produces a triangulation, and the vertices and edges that we needed to draw will all be grey. Denote by $M$ the resulting triangular mesh.

\begin{figure}[h!]
\centering
\subfloat{
	\begin{tikzpicture}[baseline,scale=2]
\draw[green] (0.0441942,-0.4558058) -- (0.4558058,-0.0441942);
\draw[blue] (-0.0441942,-0.4558058) -- (-0.4558058,-0.0441942);
\draw[blue] (0.4558058,0.0441942) -- (0.0441942,0.4558058);
\draw[green] (-0.4558058,0.0441942) -- (-0.0441942,0.4558058);
\draw[gray] (0,-0.4375) -- (0,0.4375);
\draw (0,0.5) circle[radius=0.0625];
\draw (0,-0.5) circle[radius=0.0625];
\draw[fill] (0.5,0) circle[radius=0.0625];
\draw[fill] (-0.5,0) circle[radius=0.0625];
\node[right] at(0,-0.5) {\tiny{$0$}};
\node[above] at(0,0.5) {\tiny{$2$}};
\node[right] at(0.5,0) {\tiny{$1$}};
\node[left] at(-0.5,0) {\tiny{$1$}};
\end{tikzpicture}}
\hspace{15mm}
\subfloat{\begin{tikzpicture}[baseline,scale=2]
\draw[big arrow,green] (0.0441942,-0.4558058) -- (0.4558058,-0.0441942);
\draw[big arrow,blue] (-0.0441942,-0.4558058) -- (-0.4558058,-0.0441942);
\draw[big arrow,blue] (0.4558058,0.0441942) -- (0.0441942,0.4558058);
\draw[big arrow,green] (-0.4558058,0.0441942) -- (-0.0441942,0.4558058);
\draw[big arrow,gray] (0,-0.4375) -- (0,-0.0625);
\draw[big arrow,gray] (0,0.4375) -- (0,0.0625);
\draw[big arrow,gray] (-0.0625,0) -- (-0.4375,0);
\draw[big arrow,gray] (0.0625,0) -- (0.4375,0);
\draw (0,0.5) circle[radius=0.0625];
\draw (0,-0.5) circle[radius=0.0625];
\draw[fill] (0.5,0) circle[radius=0.0625];
\draw[fill] (-0.5,0) circle[radius=0.0625];
\draw[fill,gray] (0,0) circle[radius=0.0625];
\node[right] at(0,-0.5) {\tiny{$0$}};
\node[below right] at(0,0) {\tiny{$\frac{1}{2}$}};
\node[above] at(0,0.5) {\tiny{$0$}};
\node[right] at(0.5,0) {\tiny{$1$}};
\node[left] at(-0.5,0) {\tiny{$1$}};
\end{tikzpicture}
}

\caption{The triangulation of a face with the diamond configuration is depicted on the left; the triangulations of a face with the bowtie configuration is depicted on the right.}
\label{Fig:TriDiamond}
\end{figure}
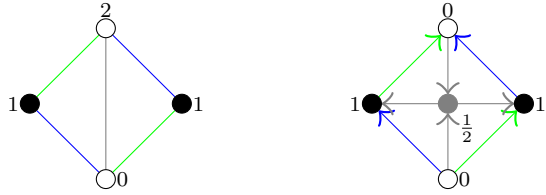

Let $h$ be the height function on the Adinkra $A$. We can extend $h$ to the mesh $M$ by sending each of the additional vertices (at the centers of the bowtie faces) to the average of the heights of the four adjacent vertices. We continue to denote this extension by $h$. By design, $h$ still takes distinct values at adjacent vertices, so $h$ is a discrete Morse function on $M$.

\begin{proposition}
\label{prop:bowcrit}
The vertices at the center of each face with the bowtie configuration are Morse saddle points.
\end{proposition}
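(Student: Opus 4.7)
The plan is to directly compute the classification data from Table \ref{critical} for a bowtie-center vertex using the definitions of upper/lower/mixed links. First I would unpack what a bowtie configuration actually looks like: it is a $2$-colored face of $X_\mathcal{A}$ whose four boundary vertices, traversed cyclically, carry heights $h, h+1, h, h+1$ for some integer $h$ (this is the only non-diamond height pattern possible on a face of an Adinkra, since adjacent heights must differ by $1$). The new grey vertex $v$ inserted at the center of such a face receives height $h(v) = \tfrac{1}{2}(h + (h+1) + h + (h+1))/2 = h + \tfrac{1}{2}$ by the averaging rule used to extend $h$ to $M$.

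Next, I would identify the combinatorial link $\Lk(v)$ in the new triangulated mesh $M$. By construction, $v$ is joined by four grey edges to the four corners of its face, and these four corners are joined in cyclic order by the four colored Adinkra edges bounding that face. Hence $\Lk(v)$ is exactly the $4$-cycle on those four corners, with the colored boundary edges as its edges.

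Now I would classify each vertex and edge of $\Lk(v)$ against the threshold $h(v)=h+\tfrac{1}{2}$. The two opposite corners at height $h+1$ lie strictly above $h(v)$ and therefore belong to $\Lk^+(v)$; the two opposite corners at height $h$ lie strictly below $h(v)$ and belong to $\Lk^-(v)$. Each of the four boundary edges of the face connects a height-$h$ vertex to a height-$(h+1)$ vertex, hence connects a vertex below $h(v)$ to one above $h(v)$; by definition, every such edge is a mixed edge. Thus $|\Lk^{\pm}(v)| = 4$.

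Finally, I would read off the conclusion from Table \ref{critical}: a critical point of mixed-edge count $2 + 2m_v$ has saddle multiplicity $m_v$, so $|\Lk^{\pm}(v)| = 4 = 2 + 2(1)$ forces $m_v = 1$, which is the definition of a Morse saddle. The only subtlety I anticipate is bookkeeping of which edges actually appear in $\Lk(v)$ (the colored boundary edges, not the grey spokes from $v$), and verifying that the bowtie pattern is really the unique non-diamond height configuration on a face; both are immediate from the construction of the triangulation in Figure \ref{Fig:TriDiamond} and from the fact that heights along any Adinkra edge differ by exactly $1$.
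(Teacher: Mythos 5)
Your proof is correct and follows the same approach as the paper, which simply asserts that the mixed-edge count at a bowtie center is four; you have filled in the straightforward details (the link is the $4$-cycle of corners with alternating heights $h,h+1,h,h+1$, the center has height $h+\tfrac12$, so all four boundary edges are mixed, giving $m_v=1$). The only blemish is notational: the expression $\tfrac{1}{2}(h+(h+1)+h+(h+1))/2$ should just be written as the average $\tfrac{1}{4}(4h+2)=h+\tfrac12$, though the value you state is the right one.
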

\begin{proof}
It is clear from our construction that the set of mixed edges of such a vertex will always contain four elements, whence the result.
\end{proof}
 
\begin{corollary}
All of the maxima and minima of $h$  occur at vertices of the embedded Adinkra.
\end{corollary}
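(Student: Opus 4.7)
The plan is to observe that this corollary is essentially a direct consequence of Proposition \ref{prop:bowcrit} together with the explicit construction of the triangular mesh $M$. By construction, $M$ is obtained from the embedded Adinkra $A \subseteq X_\mathcal{A}$ in exactly one of two ways on each face: either we add no new vertices (the diamond case, where we simply insert a grey diagonal edge), or we add a single new vertex at the center of the face together with four grey edges (the bowtie case). Consequently the vertex set of $M$ is the disjoint union of $V(A)$ and the set $B$ of bowtie-centers.

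First I would note that every vertex in $B$ has been classified in Proposition \ref{prop:bowcrit} as a Morse saddle, which has $|\Lk^\pm(v)| = 4$ mixed edges and multiplicity $m_v = 1$ in the Banchoff classification. In particular, since the classification in Table \ref{critical} is exhaustive and the entries ``maximum'' and ``minimum'' each require $|\Lk^\pm(v)| = 0$, no vertex in $B$ can be a maximum or a minimum of $h$.

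From this it follows immediately that every maximum or minimum of the extended height function $h$ on $M$ must occur at a vertex of the original Adinkra $A$, which is the content of the corollary. The only potential subtlety — and what I would want to double-check — is that the averaging extension of $h$ to the bowtie-center vertices really does preserve the discrete Morse property (distinct values on adjacent vertices), but this is already used in the statement of Proposition \ref{prop:bowcrit} and guaranteed by the construction: the four heights around a bowtie are of the form $n, n{+}1, n, n{+}1$ so their average $n + \tfrac{1}{2}$ differs from each of them. No further argument is required, so there is no real obstacle; the corollary is a one-line consequence of the preceding proposition.
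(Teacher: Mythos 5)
Your proposal is correct and matches the paper's reasoning exactly: the corollary is an immediate consequence of Proposition \ref{prop:bowcrit}, since the only mesh vertices not belonging to the Adinkra are the bowtie centers, and those are Morse saddles rather than extrema. The paper treats it as just such a one-line consequence, so no comparison is needed.
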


\begin{proposition}
\label{prop:raincrit}

	Fix a vertex $v\in A$. Then, the number of mixed edges is equal to the number of times the value of the height function on the incident vertices of the Adinkra changes as we travel around $v$ counterclockwise. 

\end{proposition}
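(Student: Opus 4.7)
The plan is to decompose $\Lk(v)$ into $N$ arcs, one for each $2$-colored face of $X_\mathcal{A}$ incident to $v$, and count mixed edges arc by arc. The $N$ Adinkra edges at $v$ are cyclically ordered by the rainbow, so I will label the incident faces $F_1,\dots,F_N$ with $F_j$ lying between the color-$j$ and color-$(j{+}1)$ edges; the portion of $\Lk(v)$ coming from $F_j$ then runs from $v_j$ to $v_{j+1}$ (indices mod $N$).

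For each face I will read off the arc directly from the triangulation recipe of Figure \ref{Fig:TriDiamond}, combined with the $\pm 1$-step constraint that the four heights around any $2$-colored $4$-cycle form a closed walk of differences $\pm 1$. Writing $s_j:=h(v_j)-h(v)\in\{\pm 1\}$, the fourth vertex $v'$ of $F_j$ is forced to lie at height $h(v)$ or $h(v)+s_j+s_{j+1}$, leaving exactly three possibilities for $F_j$. If $F_j$ is a bowtie then $s_j=s_{j+1}$ and the arc is the two-edge path $v_j-c-v_{j+1}$ through the added grey center $c$ at height $h(v)+\tfrac{1}{2}s_j$. If $F_j$ is a diamond with $v$ at its peak or valley then $s_j=s_{j+1}$ and the arc is the two-edge Adinkra path $v_j-v'-v_{j+1}$ through the opposite vertex at height $h(v)+2s_j$. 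If $F_j$ is a diamond with $v$ at a side then $s_j\ne s_{j+1}$, and the arc is the single grey diagonal $v_j-v_{j+1}$ drawn across the face.

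The mixed-edge count on each arc is then immediate. In the first two cases every vertex of the arc strictly lies on the side of $h(v)$ prescribed by the common sign $s_j=s_{j+1}$, so neither edge of the arc is mixed. In the third case the endpoints $v_j$ and $v_{j+1}$ lie on opposite sides of $h(v)$, making the single grey edge mixed. Thus $F_j$ contributes exactly one mixed edge when $h(v_j)\ne h(v_{j+1})$ and none otherwise, and summing over $j=1,\dots,N$ yields the claimed equality between mixed edges at $v$ and height changes along the cyclic sequence $v_1,\dots,v_N$.

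The only delicate part of this argument is verifying the exhaustiveness of the three-case dichotomy and matching each case to the correct triangulation pattern from Figure \ref{Fig:TriDiamond}; once the arcs are identified, the rest is a direct inspection of which endpoints lie above and below $h(v)$. It is worth remarking that the count is independent of whether one traverses the rainbow colors in their given cyclic order or in reverse (as would be appropriate for a black vertex $v$, per the conventions of \cite{Doran:2015}), since reversing the cyclic order permutes the arcs but preserves the set of consecutive pairs $(v_j,v_{j+1})$.
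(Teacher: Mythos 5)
Your proof is correct and follows essentially the same route as the paper's: a face-by-face analysis of the $N$ incident $2$-colored faces, splitting into the same three cases (bowtie, diamond with $v$ at its peak/valley, diamond with $v$ at a side) and observing that a face contributes a mixed edge exactly when $h(v_j)\ne h(v_{j+1})$. Your explicit identification of the link arcs and the exhaustiveness check via the possible heights of the fourth vertex is a slightly more detailed bookkeeping of the same argument, not a different method.
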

\begin{proof}
Consider a face incident to the fixed vertex $v$, bounded by edges of color $i$ and $i+1$. Label the vertices incident to $v$ by $v_i,v_{i+1}$ accordingly. There are three cases to consider. First, let us suppose that the face has bow tie configuration. In this case, there is no change in value of the height function from $v_i$ to $v_{i+1}$. The triangular mesh has an additional grey vertex in the center of the face that is incident to all four vertices; its height is equal to the average of the neighbouring heights. It follows that the two grey edges joining $v_i$ and $v_{i+1}$ to the center of the face are not mixed edges.

If the face has the diamond configuration, there are two cases to consider in accordance with whether or not $v$ is the max or min of the face, so that there is no change in the value of the height function from $v_i$ to $v_{i+1}$.  Suppose that $v$ is the max or min, and let $v'$ denote the diagonally opposite vertex that is joined to $v$ by a grey edge. Then, the edges joining $v'$ to $v_i$ and $v_{i+1}$ are not mixed edges. Finally, if $v$ is neither the max nor min, then exactly one of the values of the height function at $v_i$ and $v_{i+1}$ is lower than that $v$, and the other is higher, from which it follows that the grey edge joining $v_i$ and $v_{i+1}$ is a mixed edge. Repeating this argument for each face proves the result.  
\end{proof}

\begin{remark}
	In summary, the height function on Adinkra, embedded in $X_\mathcal{A}$ after choosing a rainbow, can be extended to a piece-wise linear function on $X_\mathcal{A}$ for which the critical behavior satisfies Banchoff's Euler characteristic formula. 
	Propositions \ref{prop:bowcrit} and \ref{prop:raincrit} show that the Adinkra chromotopology together with the height function determine the critical behavior of the mesh $M$.
\end{remark}

\begin{example}[The Valise Adinkra]
\label{ex:valise}
Consider a \emph{valise} Adinkra, for which all of the white vertices have height $0$ and all of the black vertices have height $1$. Each white vertex is a minimum, each black vertex is a maximum, and each face has the bowtie configuration. It follows that  there is a Morse saddle at the center of each face. Since there are $2^{N-k-1}$ vertices of each color and $2^{N-k-2}\cdot N$ faces, we see that $X_\mathcal{A}$ has $2^{N-k-1}$ maxima, $2^{N-k-1}$ minima, and $2^{N-k-2}\cdot N$ Morse saddles. Therefore, by Equation \eqref{eq:EulerCrit},
\begin{equation}
\chi(X_\mathcal{A})=2^{N-k-1}+2^{N-k-1}-N2^{N-k-2}.\nonumber
\end{equation}
Solving $\chi(X_\mathcal{A})=2-2g$ for $g$ shows that $X_\mathcal{A}$ has genus
\begin{equation}\label{genus}
g=1+2^{N-k-3}(N-4),
\end{equation}
for $N\geq 2$. Note that Equation \eqref{genus} agrees with  the genus formula for $X_\mathcal{A}$ found in \cite{Doran:2015}, even though it was derived from Equation \eqref{eq:EulerCrit}, which  only depends on the number of vertices and faces, and not on the number of edges. 

In general it is much easier to find maxima and minima of Adinkras than saddle points. Therefore,  the Euler characteristic equation \eqref{eq:EulerCrit} can be used to constrain the number of saddle points there can be once we compute the genus of $X_\mathcal{A}$ and find the maxima and minima.

\end{example}

\begin{example}[The Fully Extended Hypercube for $N=6$]
Consider the fully extended Adinkra corresponding to the $6$-cube pictured in Figure \ref{Fig:6ExtAdink} with the rainbow (purple, green, light blue, orange, blue, red). There is a single maximum and a single minimum. There are no faces with the bowtie configuration, so all of the saddle points must occur at vertices of the Adinkra. By Proposition \ref{prop:raincrit}, we can determine all of the saddle points from just the Adinkra and the rainbow.

\begin{figure}
\includegraphics[scale=0.25]{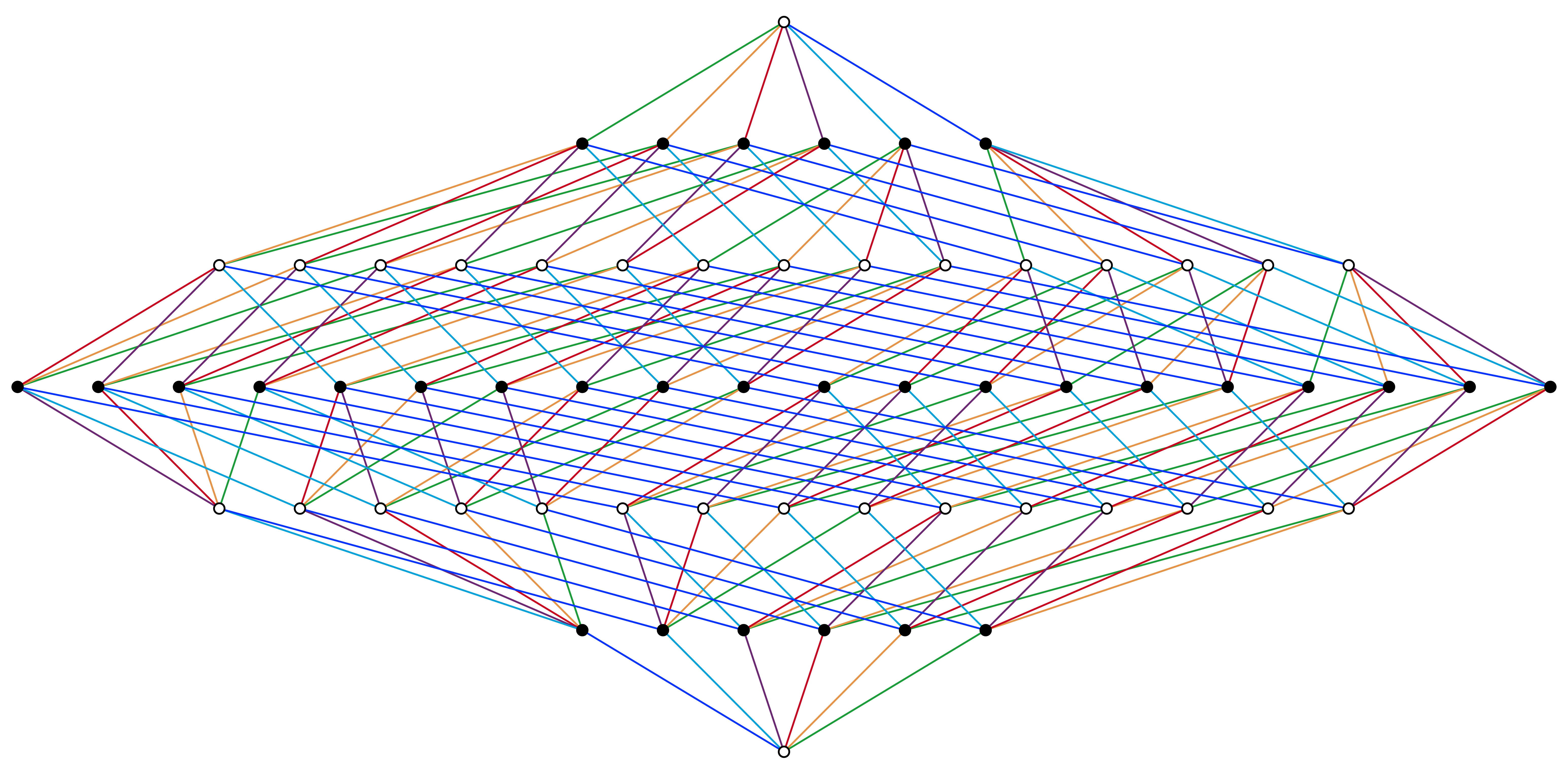}
\caption{The fully extended $6$-cube. All edges point up and the edge dashing is ignored.}
\label{Fig:6ExtAdink}
\end{figure}
A vertex will be regular if and only if all of the edges that point into it (or out of it) are adjacent to each other in the rainbow. The black vertices in the second lowest row are regular since they only have one edge pointing in to them. On the other hand,  the first white vertex on the left in the third lowest row is not regular because the only edges pointing into it are blue and light blue, which are not adjacent in the rainbow. The fourth vertex from the left in the same row \emph{is} regular, since the two incoming edges are blue and orange, which are adjacent.

We can determine the multiplicities of the saddle points. The first white vertex on the left in the third lowest row is a Morse saddle. Indeed, as we cycle through the six edges coming from the vertex according to the rainbow, we change height exactly four times; that is, the number of mixed edges is four, from which it follows that this vertex is a Morse saddle. 

On the other hand, consider the rightmost black vertex in the middle row. As we cycle through the edges incident to the vertex according to the rainbow, we change height a total of six times, from which it follows that this vertex is a saddle of multiplicity two. The only other vertex for which this occurs is the leftmost vertex in the same row. 

Using the genus formula \eqref{genus}, we know that $X_\mathcal{A}$ has genus $17$. We can use this to determine how many Morse saddles there are. We have already found that there is precisely one maximum, one minimum, and two saddles of multiplicity two. There cannot be any saddles of higher multiplicity, so all other saddles must be Morse saddles. Therefore, since  

$$2-2g=-32=1+1-2\cdot(2)-\left(\textrm{number of Morse saddles}\right),$$
it must be the case that there are $30$ such points. One can check that $12$ of these points lie in the middle row, and that each row immediately above or below the middle row contains $9$ saddles.

\end{example}

\begin{definition}
	Let $A$ be an Adinkra and $h$ a height function on it. Let $v$ be a vertex that is a local minimum. The height function $h'$ that agrees on the vertices everywhere except for $v$, at which the value is $h(v)+2$, is called the \emph{vertex raising of} $v$ of the height $h$. Simiarly, if $v$ is a local maximum, we can define the vertex lowering.
\end{definition}

Every Adinkra height function can be obtained from the valise height by raising and lowering vertices \cite{Doran2005}. As we unfold the valise Adinkra, faces with the bowtie topology become diamonds. The Morse saddle points in the valise Adinkra are forced to move to the vertices of the Adinkra as the bowtie faces become undone. Eventually, as in the previous example, some of the Morse saddle points merge to create saddle points of higher multiplicity. With some work, we can determine precisely how the critical behavior is affected when raising or lower a vertex. 

\begin{proposition}\label{raisingprop}
	Let $A\ins X_\mathcal{A}$ be an Adinkra embedded in its associated Riemann surface, and let $h$ be a height function. Assume that $v$ is a minimum (or maximum) for $h$, and let $h'$ be the height function obtained by raising (or lowering) the vertex $v$. Let $v_i\in A$ be the adjacent vertex by the edge of color $i$, and let $f_i$ denote the center of the incident $i/i+1$ colored face. Let $m_p$ and $m'_p$ denote the multiplicities of the point $p$ with respect the height functions $h$ and $h'$. Then we have
	$$m'_v=m_v=1,$$
	
$$m_{f_i}'=\left\{\begin{array}{ll}
	1&\textrm{ if $m_{f_i}=0$}\\
	0&\textrm{ if $m_{f_i}=1$}
\end{array}\right.$$

$$m_{v_i}'=\left\{\begin{array}{ll}
	m_{v_i}+(-1)^\frac{m_{f_i}+m_{f_{i-1}}}{2}&\textrm{if }m_{f_i}=m_{f_{i-1}}\\
	m_{v_i}&\textrm{otherwise}	
\end{array}\right.$$

In particular, the multiplicity of a point may only increase or decrease by at most $1$.

\end{proposition}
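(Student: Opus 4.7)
The plan is to use Banchoff's classification of critical points via the mixed-edge count $|\Lk^{\pm}(v)|$, together with the observation that raising the vertex $v$ by $2$ alters only $h(v)$ and leaves all other heights of the mesh untouched. Consequently, only mesh vertices lying in the star of $v$ can have their criticality change, namely $v$ itself, the face-center vertices $f_i$ (when they exist), and the Adinkra neighbors $v_i$. I will analyze each in turn.

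For $v$ itself: before the raising every neighbor of $v$ sits at height $h(v)+1$, so $|\Lk^{\pm}(v)|=0$ and $v$ is a minimum; after the raising all neighbors still sit at height $h(v)+1<h(v)+2$, so $|\Lk^{\pm}(v)|=0$ and $v$ is now a maximum. In both cases the multiplicity takes the extremum value, giving $m'_v=m_v$. For each face center $f_i$: the bowtie-versus-diamond dichotomy is determined by the height difference between $v$ and the diagonally opposite corner $v_{i,i+1}$, which is $0$ in the bowtie case and $2$ in the diamond case. Raising $v$ by $2$ toggles this difference between $0$ and $2$; in the bowtie case the central Morse saddle disappears, while in the diamond case a new Morse saddle is created at the center of the now-bowtie face. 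This yields the stated flip $m'_{f_i}=1-m_{f_i}$.

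For the Adinkra neighbor $v_i$, Proposition \ref{prop:raincrit} reduces $|\Lk^{\pm}(v_i)|$ to the number of sign changes, in cyclic rainbow order, of $s_j:=\operatorname{sgn}(h(w_j)-h(v_i))$, where $w_j$ denotes the color-$j$ neighbor of $v_i$. Raising $v$ flips precisely $s_i$, since $w_i=v$, while leaving every other $s_j$ unchanged. The net change in the total sign-change count therefore depends only on $s_{i-1}$ and $s_{i+1}$. But $w_{i\pm 1}$ are the diagonals $v_{i-1,i}$ and $v_{i,i+1}$ of the faces $f_{i-1}$ and $f_i$, and the height of each such diagonal relative to $v_i$ is exactly the bowtie-versus-diamond criterion for the corresponding face. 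Hence $s_{i-1}$ and $s_{i+1}$ are encoded by $m_{f_{i-1}}$ and $m_{f_i}$. A four-case analysis on $(m_{f_{i-1}},m_{f_i})$, together with the identity $|\Lk^{\pm}|=2+2m$, produces the piecewise formula: when $m_{f_i}=m_{f_{i-1}}$ the window $(s_{i-1},s_i,s_{i+1})$ is monochromatic apart from $s_i$, and the flip of $s_i$ changes the local sign-change count by $\pm 2$ according to the common value of $m_{f_i}$ and $m_{f_{i-1}}$; when $m_{f_i}\neq m_{f_{i-1}}$ one of the two local edges was already a sign change while the other was not, so the flip is a net wash.

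The main obstacle is the careful sign bookkeeping in the case $m_{f_i}=m_{f_{i-1}}$, where the precise pairing with the exponent $(m_{f_i}+m_{f_{i-1}})/2$ must be matched, together with the handful of boundary situations in which $v_i$ sits on the interface between the extremum ($m=-1$), regular ($m=0$), and saddle ($m\geq 1$) strata of the Banchoff classification. A useful consistency check throughout is the invariance of the Euler characteristic $\sum_w(-m_w)$ under the raising: the total $\sum_i \Delta m_{f_i}+\sum_i \Delta m_{v_i}$ must vanish, which constrains the admissible sign conventions and immediately yields the fact that $m$ can change by at most $1$ at any vertex.
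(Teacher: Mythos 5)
Your overall strategy coincides with the paper's: only the star of $v$ is affected; $v$ flips from minimum to maximum, so its multiplicity is unchanged; each incident face toggles between the bowtie and diamond configurations, so its center toggles between Morse saddle and regular point; and the change at $v_i$ is governed by the two faces $f_{i-1},f_i$ sharing the color-$i$ edge. Your reduction of the $v_i$-count via Proposition \ref{prop:raincrit} to flipping the single sign $s_i$ in the cyclic sequence, with $s_{i\pm 1}$ read off from the diagonals $w_{i\pm1}$ of $f_{i-1},f_i$ and hence from the bowtie/diamond dichotomy, is exactly the content of the three-case analysis in the paper's proof, just phrased through the sign sequence rather than directly through the face configurations.

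The gap is that the clause carrying the actual content of the third formula --- which sign occurs when $m_{f_{i-1}}=m_{f_i}$ --- is precisely what you defer as ``careful sign bookkeeping'' and name the main obstacle; a proof must do it, and it is a two-line check. Before the raise $s_i=-$; if both faces are bowtie then $s_{i-1}=s_{i+1}=-$, so flipping $s_i$ creates two new sign changes and the mixed-edge count at $v_i$ goes up by $2$ (multiplicity up by $1$); if both are diamond then $s_{i-1}=s_{i+1}=+$ and two sign changes disappear (multiplicity down by $1$); in the mixed case nothing changes. This agrees with the case analysis in the paper's proof, but note it is \emph{opposite} in sign to the exponent as printed in the statement (which would give $-1$ in the bowtie--bowtie case), so your assertion that the four-case analysis ``produces the piecewise formula'' does not hold literally. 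Your proposed Euler-characteristic check is in fact the tool that settles this rather than a mere convention-fixer: with $B$ bowtie and $D$ diamond faces at $v$ one gets $\sum_i\Delta m_{f_i}=D-B$ and, with the signs computed above, $\sum_i\Delta m_{v_i}=B-D$, so the total vanishes, whereas the printed exponent would give a nonzero total $2(D-B)$ in general; carrying the check out would have both completed your argument and flagged the sign discrepancy in the stated formula. Two smaller imprecisions: in the mesh the link of $v$ also contains grey centers at height $h(v)+\tfrac12$ and, for diamond faces, the opposite corner at $h(v)+2$, so ``every neighbor of $v$ sits at height $h(v)+1$'' is not literally true (the min-to-max conclusion survives); and the diagonal corners of the incident faces lie in the star of $v$ via grey edges, so strictly one should also verify their multiplicities are unchanged --- an easy check that the paper omits as well.
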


\begin{proof}
For sake of clarity, assume once and for all that $v$ is a minimum, so that we are performing a vertex raise. 
Since $v$ started out as a minimum, $m_v=1$; after raising $v$, it must be a maximum for the new height function, so we have $m'_v=1$ as well. This proves the first statement. 

Next, we observe that raising a vertex has the effect of switching the configuration of all incident faces between the bow tie and diamond configuration. 
Since we have seen that the centers of faces with the diamond configuration are regular, and therefore have multiplicity $0$, and that the centers of faces with the bow tie configuration are Morse saddles, with multiplicity $1$, the second statement follows. 

For the third statement, consider the vertex $v_i$. 
The faces $f_i$ and $f_{i-1}$ share the edge of color $i$ that joins $v$ to $v_i$. There are three cases to consider depending on the nature of these adjacent faces. If both faces have the bow tie configuration, then it easy to check that the number of mixed edges around $v_i$ will go up by $2$ when the vertex $v$ is raised. Similarly, if both faces have the diamond configuration, then the number of mixed edges around $v_i$ will go down by $2$. Finally, if there is one face each of each configuration, then the configurations are exchanged after raising, and the number of mixed edges around $v_i$ remains the same.  

Finally, observe that none of the points on the mesh other than $v$, the vertices $v_i$, and the centers $f_i$, are effected by raising $v$, and so no other multiplicity will be affected. 
\end{proof}


\subsection{The Morse Divisors}
In the previous section, we have seen how the height function on an Adinkra gives rise to a discrete Morse function, and how the Adinkra chromotopology used to construct $X_\mathcal{A}$ determines the critical behavior of this discrete Morse function. 
Our goal in this section is to capture this data in a way that is natural for the Riemann surface/algebraic curve $X_\mathcal{A}$. 
To accomplish this, we will associate a divisor on $X_\mathcal{A}$ to each height function.

\begin{definition}
	The \emph{divisor group} of a Riemann surface $X$, $\textrm{Div}(X)$ is the free abelian group on the set of points of $X$; elements of $\textrm{Div}(X)$ are called \emph{divisors}. 	
	The \emph{degree} of the divisor $\sum a_i\cdot P_i$ is the integer $\sum a_i$. 
\end{definition}

Thus a divisor is defined by attaching integer values to a finite number of points on the Riemann surface. 
Inspired by the Euler characteristic formula \eqref{eq:EulerCrit} in the previous section, which stated that
$$\chi(X_\mathcal{A})=\sum_{v\in M}(-m_v),$$
we make the following definition.

\begin{definition}
	Let $\mathcal{A}$ be an Adinkra chromotopology with a height function $h$ and fix some rainbow. Let $M$ be the associated triangulation described in the previous section, and continue to denote by $h$ the extension to $M$ and $m_v$ the multiplicities of the critical points for $h$ (as determined in Table \ref{critical}). The \emph{Morse divisor} of $h$ on $X_\mathcal{A}$ is
	$$D_h:=\sum_{v\in M}(-m_v)\cdot v$$
\end{definition}

\begin{remark}\label{MorseRemark}
	By definition, the Morse divisor $D_h$ is a divisor on $X_\mathcal{A}$ whose support is contained in the pre-image under the Belyi map of the set $\{0,1,\infty\}$, and whose degree is always equal to the Euler characteristic of $X_\mathcal{A}$.  
\end{remark}

\begin{remark}
	Morse divisors encode geometrically the topological complexity of a height function on an Adinkra.
It is clear that the Morse divisor is insensitive to an overall shift in the height function, as well as  the ``flipping'' of the height function which corresponds to \emph{automorphic duality} \cite{Doran2005} since the definition depends only on the critical behavior of the height function and not on the specific values of the height function. 
It is perhaps worth noting that this suggests a new equivalence relation on the set of height functions motivated by topology, rather than physics. 
Namely, we may call two height functions \emph{topologically equivalent} if they give rise to the  same Morse divisor.  
\end{remark}

The following proposition explicitly describes how the Morse divisor is affected when raising or lowering a vertex. 
It codifies the intuition that raising and lowering vertices is a local operation on the graph and the associated Riemann surface.
Specifically, the change in Morse divisor is by a degree $0$ divisor with coefficients in $\{0,\pm 1\}$ supported on a star neighbourhood of the vertex being raised.

\begin{proposition}\label{DivRaise}
Let $A\ins X_\mathcal{A}$ be an Adinkra embedded in its associated Riemann surface, and let $h$ be a height function. Assume that $v$ is a minimum (or maximum) for $h$, and let $h'$ be the height function obtained by raising (or lowering) the vertex $v$. Let $v_i\in A$ be the adjacent vertex by the edge of color $i$, and let $f_i$ denote the center of the incident $i/i+1$ colored face. Let $m_p$ and $m'_p$ denote the multiplicities of the point $p$ with respect the height functions $h$ and $h'$. If $D_h$ and $D_{h'}$ are the associated Morse divisors, then we have 
$$D_{h'}-D_h=\sum_{p\in\textrm{Star}(v)}a_p\cdot p=\sum_{p\in\textrm{Star}(v)-\{v\}}a_p\cdot p,$$
where $\textrm{Star}(v)$ is the star neighbourhood of $v$ consisting of the $N$ adjacent vertices and $N$ centers of the faces incident to $v$, and where the $a_p$ are determined as follows. 
We have $a_v=0$, 

$$a_{f_i}=\left\{\begin{array}{ll}
	0&\textrm{ if $m_{f_i}=0$}\\
	-1&\textrm{ if $m_{f_i}=1$}
\end{array}\right.$$

$$a_{v_i}=\left\{\begin{array}{ll}
	(-1)^\frac{m_{f_i}+m_{f_{i-1}}}{2}&\textrm{if }m_{f_i}=m_{f_{i-1}}\\
	0&\textrm{otherwise}	
\end{array}\right..$$
The difference $D_{h'}-D_h$ is therefore a degree-$0$ divisor supported on $\textrm{Star}(v)$ with coefficients in $\{0,\pm 1\}$.

\end{proposition}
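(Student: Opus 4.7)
The plan is to derive this statement as an essentially immediate corollary of Proposition \ref{raisingprop}, which already records how the multiplicity of every critical point changes under a vertex raise at $v$. First I would write the difference of Morse divisors as
\[
D_{h'} - D_h \;=\; \sum_{p} (m_p - m'_p)\cdot p,
\]
so that the coefficient $a_p$ at each point is $m_p - m'_p$. Since Proposition \ref{raisingprop} asserts that $m_p = m'_p$ unless $p$ equals $v$, one of its $N$ Adinkra neighbours $v_i$, or the center $f_i$ of one of the $N$ incident faces, the support of $D_{h'} - D_h$ is automatically contained in $\textrm{Star}(v)$, which is the content of the first part of the claim.

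A minor technical subtlety that I would address first is that a bowtie face carries a central mesh vertex while a diamond face does not, so the very set of mesh vertices changes when face configurations swap under the vertex raise. I would resolve this by extending the mesh with a ``virtual'' center vertex of multiplicity $0$ inside each diamond face; this leaves both $D_h$ and $D_{h'}$ unchanged as divisors on $X_\mathcal{A}$ but lets me treat them as formal sums supported on a common underlying set of points. The three explicit formulas for $a_v$, $a_{f_i}$, and $a_{v_i}$ then follow by substituting the corresponding cases of Proposition \ref{raisingprop} into $a_p = m_p - m'_p$: at $v$ one has $m_v = m'_v = 1$, hence $a_v = 0$; at each face center $f_i$ the diamond/bowtie configuration toggles, which gives the two-case formula for $a_{f_i}$; and at each neighbour $v_i$ the parity-dependent perturbation $(-1)^{(m_{f_i}+m_{f_{i-1}})/2}$ records whether the multiplicity jumps up, down, or is unchanged, exactly as claimed. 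That every $a_p$ lies in $\{0,\pm 1\}$ is then immediate from these formulas.

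Finally, for the degree-zero assertion, I would appeal to Remark \ref{MorseRemark}: the total degree of any Morse divisor equals the Euler characteristic $\chi(X_{\mathcal{A}})$, which is a topological invariant independent of the particular height function. Hence $\deg D_h = \deg D_{h'} = \chi(X_{\mathcal{A}})$, and the difference $D_{h'} - D_h$ is a degree-zero divisor. The hard part of the argument is really just careful case bookkeeping; once Proposition \ref{raisingprop} is in hand, no new geometric input is needed, and the main obstacle is simply to keep track of sign conventions together with the virtual-vertex accounting as faces swap between the diamond and bowtie configurations.
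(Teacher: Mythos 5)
Your strategy coincides with the paper's own proof, which is exactly the one-line deduction from Proposition \ref{raisingprop}, the definition of the Morse divisor, and the Euler characteristic formula; your treatment of the support, the degree-zero claim via Remark \ref{MorseRemark}, and the observation that centers of diamond faces can be carried along as points of multiplicity zero are all unobjectionable.

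The step that fails is the assertion that substituting Proposition \ref{raisingprop} into $a_p=m_p-m'_p$ returns the displayed coefficient tables ``exactly as claimed.'' It cannot: raising $v$ toggles every incident face between the diamond and bowtie configurations, so $m_{f_i}-m'_{f_i}$ is $\pm 1$ and never $0$, whereas the printed table for $a_{f_i}$ contains a $0$ entry; and the literal substitution gives $a_{v_i}=-(-1)^{(m_{f_i}+m_{f_{i-1}})/2}$ in the equal-configuration case, the opposite sign to the printed one. The mismatch traces to a clash of conventions in the source: Table \ref{critical} assigns multiplicity $-1$ to extrema (the convention under which $D_h=\sum_v(-m_v)\cdot v$ has degree $\chi$), while the proof of Proposition \ref{raisingprop} takes $m_v=1$ for the raised minimum. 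If you fix the convention of Table \ref{critical} and recompute the multiplicity changes directly by counting mixed edges around $f_i$ and $v_i$ (as is done in the proof of Proposition \ref{raisingprop}), you obtain $a_{f_i}=-1$ when $m_{f_i}=0$ and $a_{f_i}=+1$ when $m_{f_i}=1$, together with the printed formula for $a_{v_i}$; these are the values forced by the degree-zero property you correctly prove (with the printed $a_{f_i}$, raising a vertex of the valise height would give total degree $-2N$), and they are the values actually used in the paper's $N=4$ example, where the diamond faces (so $m_{f_i}=0$) acquire coefficient $-1$. So your write-up needs this direct verification, with an explicit choice of multiplicity convention, in place of the unchecked claim that the cases match; as it stands, the key middle step is asserted rather than proved, and carrying it out literally contradicts the formulas you set out to establish.
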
 

\begin{proof}
This is a direct consequence of Proposition \ref{raisingprop}, the definition of a Morse divisor, and the Euler characteristic formula. 	
\end{proof}

Naturally, we are left to answer the question of what kind of divisors  are obtained from this construction. 
Before attempting to answer this question, we recall a notion of equivalence that is used in Riemann surface theory, as well as a stratification of the divisor group.

\begin{definition}
	 A divisor is \emph{principal} if it is the divisor of zeroes and poles of a rational function. Two divisors are \emph{linearly equivalent} if their difference is principal. The quotient of $\textrm{Div}(X)$ by the subgroup of principal divisors is the \emph{divisor class group}, or the \emph{Picard} group, denote $\textrm{Pic}(X)$. 
	
Principal divisors have degree $0$, and so the degree map is well defined on $\textrm{Pic}(X)$. The subgroup of $\textrm{Div}(X)$, resp. $\textrm{Pic}(X)$ consisting of divisors, resp. classes of divisors, of degree $0$ is denoted by $\textrm{Div}^0(X)$, resp. $\textrm{Pic}^0(X)$. 
The group $\textrm{Pic}^0(X)$ has the structure of an algebraic variety and is known as the \emph{Jacobian} of $X$. 
More generally, the subset of divisors, resp. classes, of fixed degree $n$ is denoted by $\textrm{Div}^n(X)$, resp. $\textrm{Pic}^n(X)$. 
\end{definition}

By design, the degree of any Morse divisor is equal to the Euler characteristic $\chi$ of the Riemann surface $X_\mathcal{A}$.
It follows that all of our divisors are contained in the subset $\textrm{Div}^\chi(X)$. 
Descending to linear equivalence classes, we may consider the Morse divisor classes in $\textrm{Pic}^\chi(X)$. 
Sitting inside $\textrm{Pic}^\chi(X)$ is a distinguished divisor class known as the anti-canonical divisor class. 
The following proposition and corollary shows that the anti-canonical class can be realized as the Morse divisor class associated to the valise height function. 

\begin{proposition}
	Let $\beta\colon X_\mathcal{A}\to\mathbf{P}^1(\mathbf{C})$ be the Belyi map associated to the Riemann surface $X_\mathcal{A}$. Let $V$ be the set of vertices on the Adinkra, i.e., pre-images of $0$ and $1$, and let $F$ denote the set of centers of faces, i.e., pre-images of $\infty$ under $\beta$. Then, the canonical divisor $K_X$ is given by the following formula:
	$$K_{X_\mathcal{A}}=D_F-D_V,$$ where $$D_V=\sum_{v\in V}v,\ D_F=\sum_{f\in F}f.$$
\end{proposition}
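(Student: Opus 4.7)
The plan is to apply the Riemann--Hurwitz formula to the Belyi map $\beta\co X_\mathcal{A}\to\mathbf{P}^1(\mathbf{C})$ built in Part I, combined with a judicious choice of meromorphic differential on the base. Both sides of the claimed identity are naturally cut out by $\beta$: the right-hand side $D_F-D_V$ is a signed combination of the three critical fibres, while the canonical class on the left is, by Riemann--Hurwitz, the pullback of a canonical divisor on $\mathbf{P}^1$ corrected by the ramification divisor.

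First I would read off the local ramification data of $\beta$ from the dessins d'enfants model of Part I. Because $A$ is $N$-regular, a small disk around each vertex $v\in V$ is wound $N$-to-$1$ onto a disk around $0$ or $1$, so $e_v=N$. Each $2$-cell of $X_\mathcal{A}$ is a $4$-gon mapping to a bigon face of the beachball $B_N$ in the factorization $X_\mathcal{A}\to B_N\to\mathbf{P}^1$, so $e_f=2$ at every face center $f\in F$. This pins the ramification divisor down as
\[
R=(N-1)\,D_V+D_F.
\]

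Next I would choose the representative $\omega=dz/(z(z-1))$ of the canonical class on $\mathbf{P}^1$; a quick check in the chart at infinity confirms $\textrm{div}(\omega)=-[0]-[1]$. The divisor form of Riemann--Hurwitz, $\textrm{div}(\beta^{*}\omega)=\beta^{*}\textrm{div}(\omega)+R$, then produces a preferred representative $K_{X_\mathcal{A}}=\textrm{div}(\beta^{*}\omega)$ of the canonical class. Using $\beta^{*}[0]+\beta^{*}[1]=N\,D_V$, which is immediate from the ramification indices above, the two contributions collapse to
\[
K_{X_\mathcal{A}}=-N\,D_V+(N-1)\,D_V+D_F=D_F-D_V,
\]
as claimed. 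A sanity check on degrees, namely $\deg(D_F-D_V)=F-V=-\chi(X_\mathcal{A})=2g-2$, is automatic from Banchoff's formula used in the previous subsection and provides useful double-bookkeeping.

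The only genuinely new ingredient is the identification of the ramification indices of $\beta$, which is purely combinatorial once one has the ribbon-graph/dessins description from \cite{Doran:2015}; I do not expect any serious obstacle beyond this bookkeeping, as the remainder is a one-line arithmetic cancellation.
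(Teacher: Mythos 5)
Your proof is correct, and it follows the same overall strategy as the paper---Riemann--Hurwitz for the Belyi map $\beta$, using the ramification indices $N$ at the vertices and $2$ at the face centers---but it differs in the choice of canonical representative on the base, and this buys something. The paper starts from $-2(\infty)$ on $\mathbf{P}^1(\mathbf{C})$ and therefore needs a second step, converting the resulting intermediate expression into $D_F-D_V$ via a linear equivalence obtained from $\beta$ viewed as a rational function; as written, that passage contains compensating coefficient slips (since $\beta$ has ramification index $2$ over $\infty$, one has $\beta^*(\infty)=2\,D_F$, so the intermediate formula should read $K_{X_\mathcal{A}}=(N-1)D_V-3D_F$ and the relation should be $N\cdot D_V\equiv 4\cdot D_F$; the final conclusion is unaffected). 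Your choice $\omega=dz/(z(z-1))$, with $\operatorname{div}(\omega)=-[0]-[1]$, sidesteps that step entirely: $\beta^*([0]+[1])=N\,D_V$ cancels against the $(N-1)D_V$ from the ramification divisor in one line, and you obtain the slightly stronger statement that $D_F-D_V$ is \emph{exactly} the divisor of the meromorphic differential $\beta^*\omega$, not merely a divisor linearly equivalent to a canonical one. Your degree sanity check is also fine (for these surfaces every face is a $4$-gon, so $E=2F$ and $F-V=-\chi(X_\mathcal{A})$), though it is not needed for the argument.
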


\begin{proof}
The canonical divisor on $\mathbf{P}^1(\mathbf{C})$ is well-known to be $-2(\infty)$. The canonical divisor of $X_\mathcal{A}$ is then equal to the sum of the pull-back of this divisor and the ramification divisor of $\beta$. The pull-back of $-2(\infty)$ under $\beta$ is
$$\beta^*(-2(\infty))=-2\beta^*(-\infty)=-2\cdot D_F,$$ and the ramification divisor is 
$$R_\beta=(N-1)\cdot D_V+D_F,$$ recalling that $\beta$ is totally ramified to order $N$ at the vertices and to order $2$ at the centers of the faces. 

It follows that
$$K_{X_\mathcal{A}}=(N-1)\cdot D_V-D_F.$$
Using the Belyi map $\beta$, which is, by definition, a rational function on $X_\mathcal{A}$, we see that $$N\cdot D_V\equiv N\cdot D_B+N\cdot D_W\equiv D_F+D_F=2\cdot D_F,$$ where $D_B$ and $D_W$ are the formal sums of the black and white vertices. Replacing $N\cdot D_V$ with $2\cdot D_F$ in the above formula shows that
$$K_{X_\mathcal{A}}\equiv D_F-D_V.$$
\end{proof}

\begin{corollary}
	Let $\mathcal{A}$ be a chromotopology and let $D$ be the Morse divisor on $X_\mathcal{A}$ associated to the valise height function after having chosen a rainbow. Then $D$ is equivalent to the anti-canonical divisor. 
\end{corollary}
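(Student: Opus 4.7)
The plan is a direct computation: identify explicitly the Morse divisor of the valise height function, and then compare with the canonical divisor formula proved in the preceding proposition.

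First I would recall from Example \ref{ex:valise} what the critical point data looks like in the valise case. Because every white vertex has height $0$ and every black vertex has height $1$, every white vertex is a local minimum (multiplicity $-1$) and every black vertex is a local maximum (multiplicity $-1$). Furthermore, every $2$-colored face necessarily has the bowtie configuration, so when we apply the triangulation procedure of Section~3.1 we place a new grey vertex at each face center (i.e., at each preimage of $\infty$ under the Belyi map), and by Proposition \ref{prop:bowcrit} each such vertex is a Morse saddle of multiplicity $+1$.

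Next I would simply assemble these contributions into the Morse divisor. Writing $V$ for the set of vertices of the Adinkra and $F$ for the set of face centers,
\begin{equation*}
D \;=\; \sum_{p \in M} (-m_p)\cdot p \;=\; \sum_{v\in V} (+1)\cdot v + \sum_{f\in F} (-1)\cdot f \;=\; D_V - D_F.
\end{equation*}
On the other hand, the preceding proposition gives $K_{X_\mathcal{A}} \equiv D_F - D_V$ as linear equivalence classes. Therefore $D \equiv -K_{X_\mathcal{A}}$, which is by definition the anti-canonical class.

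There is really no obstacle here once the valise critical behavior has been correctly read off, so the only subtlety worth stressing in the write-up is that every face of a valise Adinkra is of bowtie type (which is immediate since all edges connect height $0$ to height $1$), guaranteeing that the face-center contribution to $D$ is exactly $-D_F$ rather than something partial. The conclusion is then a one-line comparison with the canonical divisor formula.
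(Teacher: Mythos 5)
Your proof is correct and follows the same route as the paper, which simply cites the valise analysis (all Adinkra vertices are extrema of multiplicity $-1$, all face centers are Morse saddles of multiplicity $+1$) so that $D = D_V - D_F$, and then compares with the formula $K_{X_\mathcal{A}} \equiv D_F - D_V$ from the preceding proposition. Your added remark that every valise face is of bowtie type is exactly the observation the paper's Example makes, so nothing further is needed.
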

\begin{proof}
	This follows immediately from our analysis in Example \ref{ex:valise}.
\end{proof}

By using the anti-canonical class as a base point in $\textrm{Pic}^\chi(X_\mathcal{A})$, we obtain a bijection between $\textrm{Pic}^\chi(X_\mathcal{A})$ and $\textrm{Pic}^0(X_\mathcal{A})$ obtained by subtracting the base point from each class in $\textrm{Pic}^\chi(X_\mathcal{A})$. 
The advantage to doing so is that we may use the group structure of $\textrm{Pic}^0(X_\mathcal{A})$ to help us understand $\textrm{Pic}^\chi(X_\mathcal{A})$ --- this is completely analogous to the odd dashings and spin structures studied in Section 2 that were affine spaces over certain cohomology groups.
It follows that an understanding of the Morse divisor classes depends on an understanding of the Jacobian $\textrm{Pic}^0(X_\mathcal{A})$.  
In the next section, we will describe the Jacobian of $X_\mathcal{A}$ and this will allow us to make some statements about the structure of the Morse divisor classes.

\section{Jacobians}

\subsection{The Jacobian of $X_\mathcal{A}$}
After recalling some basic facts about the theory of Abelian varieties and Jacobians of Riemann surfaces, we will describe the Jacobians of the Riemann surfaces coming from Adinkras. 
As remarked in the previous section, the group $\textrm{Pic}^0(X)$ can actually be given the structure of an algebraic variety  and is an example of an \emph{Abelian variety} --- a projective algebraic variety with a group structure. 
The only one-dimensional Abelian varieties are elliptic curves and, in this way, one may view the theory of Abelian varieties as a generalization of elliptic curves to higher dimensional varieties. 

\begin{definition} A \emph{morphism} of Abelian varieties is a morphism as algebraic varieties that respects the group operations. 
	Given two abelian varieties $A_1$ and $A_2$, a surjective map $A_1\to A_2$ is called an \emph{isogeny} if the kernel is finite.
	If such a map exists, we say that $A_1$ and $A_2$ are \emph{isogenous}. 
	
	An abelian variety $A$ is called \emph{decomposable} if it is isogenous to the product of abelian varieties of smaller dimension, and is called \emph{simple} otherwise. 
	Finally, it is called \emph{completely reducible} if it is isogenous to a product of elliptic curves. 
\end{definition}

A question in algebraic geometry that has been well-studied is the \emph{isogenous decomposition} of an abelian variety. Poincar\'{e} has shown the following reducibility theorem:
\begin{theorem}
	If $A$ is an abelian variety, then there exist simple abelian varieties $A_i$ and positive integers $n_i$ such that $A$ is isogenous to the product
	$$A\cong_{isog}A_1^{n_1}\times\cdots \times A_r^{n_r}.$$
	The factors $A_i$ and $n_i$ are unique up to isogeny and permutation of the factors. 
	The decomposition $A\cong_{isog}A_1^{n_1}\times\cdots\times A_r^{n_r}$ is called the  \emph{isogenous decomposition} of $A$. 

\end{theorem}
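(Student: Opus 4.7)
The plan is to establish existence by induction on $\dim A$, with the key geometric input being Poincar\'{e}'s complete reducibility theorem in its ``complementary subvariety'' form: for every abelian subvariety $B\ins A$, there exists an abelian subvariety $B'\ins A$ such that the addition map $B\times B'\to A$ is an isogeny. Granting this, if $A$ is simple, the decomposition $A\cong_{isog} A$ is trivial; otherwise, choose a proper abelian subvariety $B$ of minimal positive dimension (necessarily simple), produce its complement $B'$, and apply the inductive hypothesis to the strictly lower-dimensional variety $B'$. Collecting isomorphic simple factors then yields the stated form $A\cong_{isog} A_1^{n_1}\times\cdots\times A_r^{n_r}$.

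The central step --- producing $B'$ --- is the main obstacle and uses the existence of a polarization, which is where projectivity (as opposed to merely complex-analytic structure) is essential. I would fix an ample line bundle $L$ on $A$, yielding a polarization $\varphi_L\co A\to \hat A$, which is an isogeny. Restricting to $B$ gives a polarization $\varphi_{L|_B}\co B\to \hat B$. Consider the composition $\pi\co A\xrightarrow{\varphi_L}\hat A\to \hat B$, where the second map is dual to the inclusion $B\hookrightarrow A$. Take $B'$ to be the identity component of $\ker \pi$. A dimension count shows $\dim B+\dim B'=\dim A$, while the polarization structure forces $B\cap B'$ to be finite (the restriction of $\varphi_L$ to the intersection kills the polarization of $B$). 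Hence the addition map $B\times B'\to A$ has finite kernel and is surjective, i.e.\ an isogeny.

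For uniqueness, I would invoke the Schur-lemma style fact that if $S_1,S_2$ are simple abelian varieties and $S_1\not\cong_{isog} S_2$, then $\Hom(S_1,S_2)=0$; indeed, any nonzero morphism of simple abelian varieties is surjective with finite kernel (its image is a nonzero abelian subvariety of $S_2$, hence all of $S_2$, and its identity component of the kernel is a proper abelian subvariety of $S_1$, hence $0$), so it is already an isogeny. Given two decompositions $\prod A_i^{n_i}\cong_{isog}A\cong_{isog}\prod C_j^{m_j}$, the vanishing of cross-$\Hom$'s forces the sets $\{A_i\}$ and $\{C_j\}$ to coincide up to isogeny. To recover the multiplicities, I would tensor $\Hom(A_i,-)$ with $\mathbb{Q}$ and compare dimensions over the endomorphism division algebra $D_i=\End(A_i)\otimes\mathbb{Q}$: the left-hand side contributes $n_i$ copies of $D_i$, the right-hand side contributes $m_i$ copies of $D_i$ (after reindexing so that $A_i\cong_{isog}C_i$), yielding $n_i=m_i$.

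The hardest and least formal part of the argument is the construction of $B'$; everything downstream is a formal manipulation of isogenies, $\Hom$-groups, and division algebras. Since this theorem is entirely classical and independent of the Adinkra setup, I would cite Mumford's \emph{Abelian Varieties} (or Birkenhake--Lange) for the complete proof rather than reproduce it in detail, using the above sketch to motivate its application to the Jacobians $\Pic^0(X_\mathcal{A})$ studied in the subsequent subsection.
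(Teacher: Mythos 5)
Your proposal is correct: it is the standard proof of Poincar\'{e}'s complete reducibility theorem (complementary abelian subvariety via a polarization, induction on dimension for existence, and vanishing of $\Hom$ between non-isogenous simple factors plus a dimension count over the endomorphism division algebras for uniqueness). The paper itself states this as a classical theorem of Poincar\'{e} and gives no proof, so your sketch together with the citation of Mumford or Birkenhake--Lange matches the paper's treatment exactly; nothing further is needed.
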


Using the work of \cite{Hidalgo2015}, we are able to determine the isogenous decompositions of the Jacobians of the Riemann surface $X_\mathcal{A}$. Let us fix $N$ and a choice of rainbow, and consider first the Riemann surface $X_N$ associated to the hypercube. Recall that $X_N$ admits an algebraic model as a complete intersection of quadrics in projective space \cite{Doran:2015}. Namely, we have 
$$X_N:\ \left\{\begin{array}{ccc} x_1^2+x_2^2+x_3^2&=&0\\
\mu_3 x_1^2+x_2^2+x_4^2&=&0\\
\vdots &\vdots &\vdots\\
\mu_{N-1}x_1^2+x_2^2+x_{N}^2&=&0
\end{array}\right.,$$
where the $\mu_i$ are the images of the $N$-th roots of $-1$ under a M\"{o}bius transformation. The map $\pi\colon X_N\to\mathbf{P}^1(\mathbf{C})$ given by 
$$\pi([x_1,\dots, x_N])=[x_1^2,x_2^2]$$
is the realization of the map from $X_N$ to the beachball $B_N$. In this model, the maximal even code $\mathcal{C}$ acts on $X_N$ via sign changes. More explicitly, the codeword $c_i$ with $1$'s in the $i$ and $i+1$ position acts by switching the coordinate $x_i$. 

Given a subgroup of $K\ins \mathcal{C}$, we may consider the induced branched cover $$X_N\to X_N/K,$$ by which we mean the map on the underlying Riemann surface (the quotient will in general be an orbifold). Notice that if $K$ is a doubly-even code, then we simply obtain the Riemann surface corresponding to the associated Adinkra. These maps induce morphisms of Jacobians and, as explained in \cite{Hidalgo2015}, can be used to find the isogenous decomposition of $X_N$. The following proposition is a direct application of Theorem 4.4 in \cite{Hidalgo2015}.

\begin{proposition}\label{jacobiandecom}
	The Jacobian of $X_N$ is isogenous to the product of the Jacobians of the curves $X_N/K$ where $K$ runs over all subgroups of $\mathcal{C}$ of index $2$ such that the quotient has strictly positive genus. Each such quotient is a hyperelliptic curve branched over a subset of the $N$-th roots of $-1$ of even cardinality containing at least four elements, and conversely: each such hyperelliptic curve arises from a subgroup of index $2$ in $\mathcal{C}$. 
\end{proposition}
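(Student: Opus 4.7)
The plan is to invoke Theorem 4.4 of \cite{Hidalgo2015}, which the authors themselves flag as the source, so the work is in setting up the hypotheses correctly and then identifying each intermediate quotient explicitly as a hyperelliptic curve with the claimed branch locus.

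First I would recall from \cite{Doran:2015} the Galois structure of the situation: the map $\pi\colon X_N\to B_N\cong\mathbf{P}^1(\mathbf{C})$, $[x_1,\dots, x_N]\mapsto [x_1^2, x_2^2]$, is a Galois branched cover with deck group the maximal even code $\mathcal{C}\cong(\mathbf{Z}/2\mathbf{Z})^{N-1}$, acting by the coordinate sign changes described just before the proposition. The branch locus of $\pi$ on $B_N$ consists of the $N$ points corresponding (via the M\"{o}bius transformation used to place the coordinates) to the $N$-th roots of $-1$. Because $\mathcal{C}$ is elementary abelian $2$, the hypotheses of \cite[Theorem~4.4]{Hidalgo2015} are satisfied in their sharpest form: the Jacobian of the cover is isogenous to the product, taken over all index-$2$ subgroups $K\leq\mathcal{C}$, of the Jacobians of the intermediate covers $X_N/K$, each appearing with multiplicity one. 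Discarding from this product those factors for which $X_N/K$ has genus $0$ (i.e.\ trivial Jacobian) produces the stated decomposition.

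Next I would identify the quotients $X_N/K$ for $K\leq\mathcal{C}$ of index $2$. Since $[\mathcal{C}:K]=2$, the induced map $X_N/K\to X_N/\mathcal{C}\cong\mathbf{P}^1$ is a degree-$2$ cover, so $X_N/K$ is either $\mathbf{P}^1$ or hyperelliptic, with branch locus contained in the branch locus of $\pi$, hence in the set of $N$-th roots of $-1$. Standard theory classifies double covers of $\mathbf{P}^1$ branched inside a fixed finite set $B$ by subsets $S\subseteq B$ of even cardinality (the obstruction being divisibility of the branch divisor by $2$ in $\operatorname{Pic}(\mathbf{P}^1)$), with the cover of genus $(|S|-2)/2$. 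Thus each $X_N/K$ is hyperelliptic branched over an even-cardinality subset $S_K$ of the roots of $-1$, and $X_N/K$ has strictly positive genus iff $|S_K|\geq 4$.

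Finally, to get the converse (that every even-cardinality subset of size at least $4$ arises), I would check that the assignment $K\mapsto S_K$ is a bijection between index-$2$ subgroups of $\mathcal{C}$ and even-cardinality subsets of the $N$ branch points (excluding $\emptyset$, which corresponds to $K=\mathcal{C}$). This is a counting/duality argument: both sides are naturally parametrized by nonzero elements of a vector space of dimension $N-1$ over $\mathbf{F}_2$ (on one hand, nonzero characters of $\mathcal{C}$; on the other, the even-weight code of length $N$ modulo the all-ones vector, since $S$ and its complement determine the same double cover up to rescaling). One then verifies that the character associated to a quotient double cover is exactly the one whose vanishing locus on $\mathcal{C}$ is $K$.

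The main obstacle is this last identification: matching index-$2$ subgroups $K$ with even subsets $S_K$ in a way that is compatible with the explicit sign-change action of $\mathcal{C}$ on the projective model. Concretely, one must check that the involution on $X_N/K$ generating the degree-$2$ cover of $\mathbf{P}^1$ coincides with the hyperelliptic involution of the plane model $y^2=\prod_{a\in S_K}(x-a)$, and that its fixed points are exactly the preimages of $S_K$. Once this bookkeeping is in place, the proposition is an immediate specialization of \cite[Theorem~4.4]{Hidalgo2015}.
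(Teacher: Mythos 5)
Your overall strategy is the same as the paper's: Proposition \ref{jacobiandecom} is obtained there as a direct application of Theorem 4.4 of \cite{Hidalgo2015}, with the quotients $X_N/K$ identified, exactly as you do, as double covers of $B_N$ branched inside the $N$-th roots of $-1$ (the paper carries out this identification via homomorphisms $\phi\colon\mathcal{C}\to\mathbf{Z}/2\mathbf{Z}$ in the discussion immediately following the proposition). So the setup, the use of the elementary abelian $2$-group decomposition, and the discarding of genus-$0$ factors are all in line with the intended argument.

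There is, however, a genuine error in your converse step. You parametrize the quotient double covers by ``the even-weight code of length $N$ modulo the all-ones vector, since $S$ and its complement determine the same double cover up to rescaling.'' That identification is false: a double cover of $\mathbf{P}^1$ branched exactly over a finite set is determined by that set, so the covers branched over $S$ and over its complement (inside the $N$ roots of $-1$) are different curves, generically of different genera. The quotient by the all-ones vector would also break the count: there are $2^{N-1}-1$ index-$2$ subgroups of $\mathcal{C}\cong(\mathbf{Z}/2\mathbf{Z})^{N-1}$, but your proposed parameter space has only about half that many nonzero classes (and when $N$ is odd the all-ones vector is not even-weight, so the quotient is not even defined). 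The correct bookkeeping is simpler: an index-$2$ subgroup is $K=\ker\phi$ for a unique nonzero character $\phi$ of $\mathcal{C}$, and the branch set of $X_N/K\to B_N$ is $S_\phi=\{\mu_i:\phi(c_i)=1\}$, where $c_1,\dots,c_N$ are the standard generators with the single relation $c_1+\cdots+c_N=0$. That relation forces $|S_\phi|$ to be even, and since it is the only relation, every nonempty even-cardinality subset of the roots of $-1$ occurs for exactly one $\phi$; the subsets of size at least $4$ are precisely those giving positive genus. With this replacement your argument goes through and agrees with the paper.
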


\begin{corollary}\label{Cor:Paul}
	Let $C\ins\mathcal{C}$ be a doubly even code. If $\mathcal{A}$ is the corresponding Adinkra with rainbow inherited from the hypercube, then the isogenous decomposition of $JX_{\mathcal{A}}$ is given by the product of the Jacobians of the curves $X_N/K$ where $K$ runs over the subgroups of $\mathcal{C}$ of index $2$ that contain the code $C$. 
\end{corollary}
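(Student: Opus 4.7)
The plan is to apply Proposition \ref{jacobiandecom} (equivalently, Theorem 4.4 of \cite{Hidalgo2015}) not to $X_N$ directly but to the Adinkra surface $X_\mathcal{A}=X_N/C$ equipped with the residual action of $\mathcal{C}/C$. The key observation is that since $C\subseteq\mathcal{C}$ is normal (as $\mathcal{C}$ is abelian), the sign-change action of $\mathcal{C}$ on $X_N$ descends to a faithful action of the quotient group $\mathcal{C}/C$ on $X_\mathcal{A}$, and the resulting group is again an elementary abelian $2$-group. This is precisely the setup required by the Hidalgo--Rodríguez decomposition theorem.

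First I would apply the decomposition theorem to the pair $(X_\mathcal{A},\mathcal{C}/C)$ to conclude that $JX_\mathcal{A}$ is isogenous to the product of the Jacobians $J(X_\mathcal{A}/H)$ as $H$ ranges over the index-$2$ subgroups of $\mathcal{C}/C$ whose quotient has positive genus. Next I would invoke the correspondence theorem for groups: index-$2$ subgroups $H\leq \mathcal{C}/C$ are in bijection with index-$2$ subgroups $K\leq \mathcal{C}$ containing $C$, via $H=K/C$. Under this correspondence, iterated quotients collapse cleanly:
\[
X_\mathcal{A}/H \;=\; (X_N/C)/(K/C) \;\cong\; X_N/K,
\]
so the factors appearing in the decomposition of $JX_\mathcal{A}$ are exactly the Jacobians $JX_N/K$ for $K\leq \mathcal{C}$ an index-$2$ subgroup containing $C$.

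Finally, I would verify that these intermediate quotients have the expected geometric description matching Proposition \ref{jacobiandecom}: because they are already index-$2$ quotients of $X_N$ (just via a subgroup containing $C$), the classification of such intermediate curves as hyperelliptic covers branched over subsets of the $N$-th roots of $-1$ of even cardinality $\geq 4$ applies verbatim, and in particular the positive-genus condition is automatic. Since any subgroup of $\mathcal{C}$ containing $C$ of index $2$ in $\mathcal{C}$ arises from some $H\leq\mathcal{C}/C$ of index $2$, both the enumeration and the hyperelliptic description carry over directly from the hypercube case.

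The main obstacle, if any, is a bookkeeping check rather than a conceptual one: one must ensure that the Hidalgo--Rodríguez theorem is being applied to an action satisfying its hypotheses on $X_\mathcal{A}$ (faithfulness of the quotient action, the elementary abelian $2$-group structure of $\mathcal{C}/C$, and the identification of index-$2$ subgroups with the relevant hyperelliptic intermediate covers), and to confirm that no factor is lost or double-counted when passing from the decomposition of $JX_N$ to that of $JX_\mathcal{A}$. Both points follow from the lattice correspondence and the observation that the composition $X_N\to X_\mathcal{A}\to X_N/K$ agrees with the direct quotient map $X_N\to X_N/K$ whenever $C\subseteq K$.
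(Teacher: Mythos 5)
Your overall strategy is sound and genuinely different from the paper's: you apply a Jacobian-decomposition theorem directly to $X_\mathcal{A}$ with the residual action of $\mathcal{C}/C$ (noting the quotient is the genus-zero beachball), and then use the lattice correspondence $H=K/C$ together with $X_\mathcal{A}/H\cong X_N/K$ to identify the factors. The paper instead works with Kani--Rosen applied to the subgroups $K_i/C$ and then closes the argument with an explicit genus computation, showing that the sum of the genera of the hyperelliptic subcovers equals the genus of $X_\mathcal{A}$, i.e.\ that the product of those Jacobians exhausts all of $JX_\mathcal{A}$. Your route, if fully justified, buys exactly the avoidance of that genus count; the paper's route buys independence from any statement beyond Kani--Rosen itself.

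The gap is in the justification of your first step. Proposition \ref{jacobiandecom} is stated (and Theorem 4.4 of \cite{Hidalgo2015} is formulated) for the generalized Fermat curve $X_N$ with its full $\mathcal{C}$-action; the Adinkra surface $X_\mathcal{A}=X_N/C$ is a quotient of a generalized Fermat curve, not one itself, so you cannot apply these results ``equivalently'' to the pair $(X_\mathcal{A},\mathcal{C}/C)$ without further argument. What you actually need is the general statement that for \emph{any} closed Riemann surface with an action of an elementary abelian $2$-group whose full quotient has genus zero, the Jacobian is isogenous to the product of the Jacobians of the index-$2$ quotients; this is true (it is a standard consequence of the Kani--Rosen/group-algebra decomposition), but it must be either cited in that generality or proved --- and supplying precisely this missing content is what the paper's Appendix B does, via Kani--Rosen plus the dimension check. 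Separately, your claim that ``the positive-genus condition is automatic'' for index-$2$ subgroups $K\supseteq C$ is false: for example, with $N=6$ and $C=\gen{(111100)}$ only $8$ of the $15$ index-$2$ subgroups containing $C$ yield positive-genus quotients (as the paper notes in its $N=6$ example). This slip is harmless for the final statement, since genus-zero quotients contribute trivial Jacobians, but it should not be used as a step in the argument.
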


The proof of this corollary is not difficult, but breaks up the exposition. 
The interested reader can find the proof in the Appendix B below. 
\subsection{Examples}
In this section, we will use Proposition \ref{jacobiandecom} to compute the isogenous decompositions of $X_\mathcal{A}$ for some low values of $N$. 
In order to calculate the Jacobians of the quotients, it is convenient to describe the index $2$-subgroups of $\mathcal{C}$ as the kernels of homomorphisms $\phi\colon\mathcal{C}\to\mathbf{Z}/2\mathbf{Z}$. As long as $\phi$ is non-trivial, the kernel will be a subgroup of index $2$. Denote by $X_{N,\phi}$ the quotient of $X_N$ by the kernel of $\phi$. The induced map 
$$X_{N,\phi}\to B_N$$
is a double cover branched over the set of $\mu_i$ for which $c_i\notin\ker\phi$. Let  $A_\phi$ be the set of generators $c_i$ for which $c_i\notin\ker\phi$. Notice that this set will always have even cardinality. The curve $X_{N,\phi}$ will have positive genus if and only if this set has at least $4$ elements. Finally, the corresponding subgroup contains a doubly even code $C$ if and only if the code vanishes under $\phi$. 

\begin{example}[$N\leq 4$]
For $N\leq 3$, the Riemann surfaces all have genus zero, so there is no Jacobian to consider. For $N=4$, both surfaces are elliptic curves after choosing a base point, and are therefore isomorphic to their Jacobians. It is still worthwhile to examine what the above theorem says in detail in this example. 

There is only one subset of the fourth roots of $-1$ that contains at least four elements, namely, the entire set. According to the above theorem, the Jacobian of $X_4$ is therefore isogenous to the Jacobian of the elliptic curve expressed as the double cover of $B_4$ over the $4$-th roots of $-1$. A model is given by 
$$y^2=x^4+1.$$

For $N=4$, the code $\mathcal{C}$ is generated by $c_1,c_2,c_3$, and we have the relation $c_4=c_1+c_2+c_3$. A homomorphism $\phi\colon\mathcal{C}\to\mathbf{Z}/2\mathbf{Z}$ is determined by where $c_1,c_2,c_3$ are mapped to. The homomorphism $\phi$ corresponding the curve seen above is the map determined by sending all of the $c_i$ to $1$. Note that by sending $c_1,c_2,c_3$ to $1$, we must also send $c_4$ to $1$. The corresponding index $2$-subgroup is
$$K=\{(0000),(1010),(0101),(1111)\}.$$

It is evident that the subgroup $K$ contains the unique doubly even code in $N=4$, and so the associated Adinkra surface covers the same elliptic curve. In fact, the elliptic curves $X_4$, $X_{(4,1)}$ and $X_4/K$ are all isomorphic to the same elliptic curve, and by choosing appropriate coordinates, the map from $X_4\to X_4/K$ is simply given by multiplication-by-$2$, and the intermediate map is given by multiplication-by-$(1+i)$, noting that the elliptic curve has CM type by $\mathbf{Z}[i]$.  
	
\end{example}

\begin{example}[$N=5$]

For $N=5$, there are five ways of choosing four of the $5$-th roots of $-1$, leading to five elliptic curves that make up the Jacobian of $X_5$. The five elliptic curves are isomorphic to each other because each of the five choices of branch loci are related by M\"{o}bius transformations corresponding to a rotation by multiples of $\frac{2\pi}{5}$. Therefore, we have
$$JX_5\cong_{isog}E^5,$$ where a model for $E$ can be given as  $$E:\ y^2=x^4-x^3+x^2-x+1.$$

The corresponding subgroups of index $2$ in the maximal even code $\mathcal{C}$ are determined by the five homomorphisms $\phi_i$ defined by sending $c_i$ to $0$ and $c_j$ to $1$ if $j\neq i$. One can check that for each of the five doubly even codes in $N=5$, there are only three homomorphisms $\phi_i$ that vanish on the code. 

For example, let $C=\gen{(11110)}=\gen{c_1+c_3}$. Then $C$ is contained in the kernel of $\phi_2,\phi_4,\phi_5$. The Jacobian of the corresponding Riemann surface is therefore isogenous the product of three copies of $E$. In particular, all of these quotient Riemann surfaces have the same isogenous decomposition of Jacobians. 
	
\end{example}

\begin{example}[$N=6$]
	For $N=6$, the curve $X_6$ has genus $17$, so already the dimension of the Jacobian is quite large. 
	There is a unique hyperelliptic curve $H$ corresponding to choosing all of the $6$-th roots of $-1$, and a model is given by 
	$$H\colon y^2=x^6+1.$$
The Jacobian of $H$ is a $2$-dimensional abelian variety, and one can ask if it is simple or not. 
Notice that $H$ possesses the involution $(-x,y)$ that is distinct from the hyperelliptic involution. 
As explained in \cite{Hidalgo2015}, this can be used to show that the Jacobian of $H$ is isogenous to the product of two elliptic curves. 
In fact, the two elliptic curve factors are isomorphic to 
$$E_1:\ y^2=x^3+1.$$
We remark that $E_1$ has CM type by $\mathbf{Z}[\omega]$ where $\omega$ is a $3$-rd root of $1$. 

In addition to this factor of the Jacobian, there are $15$ ways of choosing $4$ of the $6$-th roots of $-1$ giving rise to $15$ elliptic curves that make up the rest of the Jacobian, up to isogeny. Up to isomorphism, only three elliptic curves appear, as can be deduced by considering the action of the order six rotation group on the set of configurations of four of the six branch points. One can check that none of these elliptic curves have CM type. 

Consider the permutation-equivalent doubly even codes $C_1=\gen{(111100)}$ and $C_2=\gen{(101011)}$. Both of these codes give rise to genus nine Riemann surfaces. 
Of the the sixteen index-$2$ subgroups of $\mathcal{C}$ giving rise to positive genus curves, eight of them contain $C_1$ --- in particular, the subgroup corresponding to the hyperelliptic curve $H$, described above, contains $C_1$. 
On the other hand, there are nine subgroups containing $C_2$, corresponding to nine of the other elliptic curves appearing in the Jacobian decomposition of $X_6$. 
In particular, the curve associated to $C_1$ has a CM type factor appearing in its Jacobian, while the curve associated to $C_2$ does not. 
\end{example}

\begin{remark}
	The above example shows that two surfaces obtained from permutation-equivalent codes may give rise to non-isomorphism Riemann surfaces, despite our claim in \cite{Doran:2015}. In Appendix A, we present a correct and reformulated description of the action of $R$-symmetry on the Riemann surfaces $X_\mathcal{A}$. 
\end{remark}

\begin{example}[The $E_8$ Code]

In this example, we will examine the $E_8$ code, which is the unique doubly even code having maximal dimension $4$ in $\mathbf{F}_2^8$, up to permutation equivalence. The standard $E_8$ codes is given by the following generator matrix:
$$\begin{bmatrix}
	1&1&1&1&0&0&0&0\\
	0&0&1&1&1&1&0&0\\
	0&0&0&0&1&1&1&1\\
	1&0&1&0&1&0&1&0
\end{bmatrix}$$

As noted above, the Riemann surfaces associated to different codes in the same permutation class may be non-isomorphic. 
The permutation equivalence class containing the $E_8$ code consists of $30$ codes. 
The permutation $\eta=(12345678)$ acts on these codes, and it was remarked earlier that the Riemann surfaces associated to the codes $C$ and $C^\eta$ \emph{do} give rise to isomorphic Riemann surfaces. 
It turns out that there are $6$ $\gen{\eta}$-orbits of the $E_8$ code, with $2$ orbits of sizes $2,4,$ and $8$, respectively. 
Label these orbits $\mathcal{O}_i^j$ where $i$ is the size of the orbit, and $j\in\{1,2\}$. 

Using the same methods as in previous examples, we can determine the structure of the Jacobian for these $6$ Riemann surfaces. It turns out, once again, that \emph{all} six of these Jacobians are completely reducible. In order to determine the isogeny classes of the Jacobians, we calculate the conductor of each elliptic curve factor and recall that two elliptic curves are isogenous exactly when they have the same conductor. 

The results are shown in the following two tables. In Table \ref{IsogenyFactors2}, the decomposition of each Jacobian as a product of elliptic curves is displayed. In Table \ref{EllipticFactor}, the data of each elliptic curve is supplied. The column labeled $j$ lists the $j$ invariant while the column $N$ lists the conductor of the elliptic curve.

\begin{center}
\begin{table}
\begin{tabular}{|c|c|}
\hline
{\bf Orbit}
& 
{\bf Jacobian}

\\
\hline
$\mathcal{O}_2^1$&
$E_1^3\times E_2^2\times E_3^4$
\\
\hline
$\mathcal{O}_2^2$&
$E_1^5\times E_3^2\times E_4^2$
\\
\hline
$\mathcal{O}_4^1$
&
$E_1^3\times E_3^2\times E_4^4$
\\
\hline
$\mathcal{O}_4^2$
&
$E_1\times E_2^6\times E_3^2$
\\
\hline
$\mathcal{O}_8^1$
&
$E_1^3\times E_2^2\times E_3^2\times E_4^2$
\\
\hline
$\mathcal{O}_8^2$
&
$E_1\times E_2^2\times E_4^6$
\\
\hline
\end{tabular}
\caption{Isogeny Factors}\label{IsogenyFactors2}
\end{table}
\end{center}

\begin{center}
\begin{table}
\begin{tabular}{|c|c|c|}
\hline
{\bf Curve}
& 
{\bf $j$}
&{\bf $N$}
\\
\hline
$E_1$&
$2^6\cdot 3^3$
&$2^5$
\\
\hline
$E_2$&
$2^{28}\cdot7^6$
&$2^2\cdot11^2\cdot23^2\cdot26497^2\cdot73609^2$
\\
\hline
$E_3$
&
$2^7$
&
$2^7$
\\
\hline
$E_4$
&
$2^{24}\cdot 7^6$
&
$13^2\cdot 241^2 \cdot 9843913^2$
\\
\hline
\end{tabular}
\caption{Elliptic Factor Data}\label{EllipticFactor}
\end{table}
\end{center}
	
\end{example}

\subsection{Morse Divisor Classes}
Now that we have an understanding of the Jacobians of the surfaces $X_\mathcal{A}$, we can study the Morse divisor classes defined in Section 3.2. 
As we saw in the examples above, when $N\leq 3$, there is not much of a story to tell. 
The Jacobian in this case is just a point, and this corresponds to the fact that for genus $0$ surfaces, the degree of a divisor class completely determines the linear equivalence class. 
Therefore, for $N\leq 3$, \emph{all} of the Morse divisor classes are linearly equivalent to the anti-canonical class. As it turns out, this phenomenon also occurs when $N=4$, as we see in the example below. 

\begin{example}

Let us show that all of the Morse divisors associated to $C=\gen{(1111)}$ are equivalent to the anti-canonical divisor. 
The associated Riemann surface $X_\mathcal{A}$ is an elliptic curve once we choose a base point, and is isomorphic to its Jacobian. 
Let $D_1$ and $D_2$ be two Morse divisors obtained from one another by raising or lowering a single vertex. 
Their difference $D$ is a degree $0$ divisor, and we will argue that it must be principal, i.e., trivial in $\textrm{Pic}^0(X_\mathcal{A})$, showing that they are equivalent. 
Since every height function is obtained from the valise height by a sequence of vertex raisings and  lowerings, this will show that all the Morse divisors are equivalent to the anti-canonical class. 

Suppose that $D_1$ is a Morse divisor corresponding to a height function in which the white vertex $w$ is a local minimum. Let $D_2$ be the Morse divisor associated to the height function obtained by raising $w$. Using $w$ as the origin for the group law on the elliptic curve, the difference of the divisors is a sum of point on the elliptic curve. To further aid in these computations, we will use the fact the Riemann surface is isomorphic to $\mathbf{C}/\mathbf{Z}[i]$. Under this isomorphism, the white vertices are identified with the four $2$-torsion points, the black vertices are identified with the translate of the $2$-torsion by $\frac{1}{4}(1+i)$, and the centers of the faces are identified with the remaining eight $4$-torsion points (see Figure \ref{QuotientExampleheight}).

We argue case-by-case. In light of the edge identifications indicated in Figure \ref{QuotientExampleheight}, we see that there are $0,2$, or $4$ adjacent faces that have the diamond configuration, and each possibility can occur. Once we know the configurations of the four adjacent faces, the height function is completely determined once we pick the value at the last remaining vertex, which corresponds to the (identified) corners in the figure. In each case, we may assign either $0$ or $2$ to this vertex; it follows that there are six cases to consider.

Consider the case for which all four adjacent faces have the diamond configuration, and the remaining white vertex has height $0$. 
Using Proposition \ref{DivRaise}, we see that 
$$D=b_1+b_2+b_3+b_4-(f_1+f_2+f_3+f_4),$$ where the $b_i$ denote the four black vertices, and the $f_i$ denote the centers of the four adjacent faces. 

Using our identification of the curve with $\mathbf{C}/\mathbf{Z}[i]$, we find that the sum of the four black vertices is $0$ in the elliptic curve, and similarly for the four centers of the faces. 
It follows that $D$ is trivial in $\textrm{Pic}^0(X_\mathcal{A})$, showing that $D_1$ and $D_2$ are equivalent to each other. 
The other cases are handled similarly, and are left to the interested reader. 

For the case of the $N=4$ hypercube, the exact same argument works, but we will now be working with $8$-torsion points instead of just $4$-torsion points. 
One argues case-by-case using Proposition \ref{DivRaise} to calculate the possible differences of Morse divisors and argues that they are trivial in $\textrm{Pic}^0(X_\mathcal{A})$.
\end{example}
 
As interesting as the above example is, this phenomenon is not one that holds in general, as we will see in the next example for the $N=5$ hypercube surface. 
Before working out the example, we recall the push-forward morphism on divisor groups. 
Given a map $f\colon X\to Y$ of Riemann surfaces, there is a natural push-forward  homomorphism between divisor groups defined by 
$$f_*(\sum_{P\in X}n_P\cdot P)=\sum_{P\in X}n_Pf(P).$$
This map descends to $\textrm{Pic}^0(X)$ and is a morphism of abelian varieties.

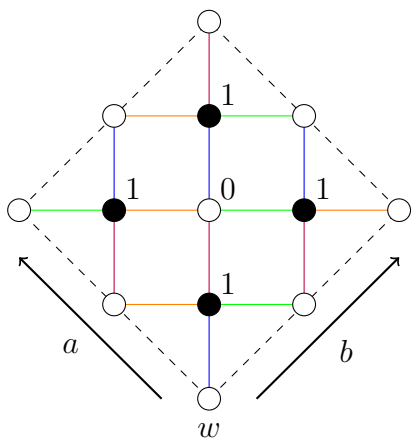
\begin{figure}
\centering
\begin{tikzpicture}[scale=2.5]

	\draw[color=blue] (-0.5,0)--(-0.5,0.5);
	\draw[color=blue] (0,-1)--(0,-0.5);
	\draw[color=blue] (0,0)--(0,0.5);
	\draw[color=blue] (0.5,0)--(0.5,0.5);
	
	\draw[color=purple] (-0.5,-0.5)--(-0.5,0);
	\draw[color=purple] (0,-0.5)--(0,0);
	\draw[color=purple] (0,0.5)--(0,1);
	\draw[color=purple] (0.5,-0.5)--(0.5,0);
	
	\draw[color=green] (-1,0)--(-0.5,0);
	\draw[color=green] (0,-0.5)--(0.5,-0.5);
	\draw[color=green] (0,0)--(0.5,0);
	\draw[color=green] (0,0.5)--(0.5,0.5);

	\draw[color=orange] (-0.5,-0.5)--(0,-0.5);
	\draw[color=orange] (-0.5,0)--(0,0);
	\draw[color=orange] (-0.5,0.5)--(0,0.5);
	\draw[color=orange] (0.5,0)--(1,0);

	\draw[dashed] (0,-1)--(1,0);
	\draw[dashed] (0,-1)--(-1,0);
	\draw[dashed] (1,0)--(0,1);
	\draw[dashed] (-1,0)--(0,1);
	
	\draw[->,thick] (0.25,-1)--node[anchor=north west] {$b$}(1,-0.25);
	\draw[->, thick] (-0.25,-1)--node[anchor=north east] {$a$}(-1,-0.25);

	\draw[fill=white] (-1,0) circle[radius=0.06];
	\draw[fill=white] (0,-1) node[below=5pt] {$w$} circle[radius=0.06];
	\draw[fill=white] (0,0) circle[radius=0.06] node[anchor=south west] {$0$};
	\draw[fill=white] (0,1) circle[radius=0.06];
	\draw[fill=white] (1,0) circle[radius=0.06];
	\draw[fill=white] (-0.5,-0.5) circle[radius=0.06];
	\draw[fill=white] (-0.5,0.5) circle[radius=0.06];
	\draw[fill=white] (0.5,0.5) circle[radius=0.06];
	\draw[fill=white] (0.5,-0.5) circle[radius=0.06];

	\draw[fill=black] (-0.5,0) circle[radius=0.06] node[anchor=south west] {$1$};
	\draw[fill=black] (0,-0.5) circle[radius=0.06] node[anchor=south west] {$1$};
	\draw[fill=black] (0,0.5) circle[radius=0.06] node[anchor=south west] {$1$};
	\draw[fill=black] (0.5,0) circle[radius=0.06] node[anchor=south west] {$1$};

\end{tikzpicture}
\caption{A fundamental domain for the surface corresponding to $C=\gen{(1111)}$. The central white vertex is a local minimum.}
\label{QuotientExampleheight}
\end{figure}

\begin{example}[$N=5$]
Consider the Riemann surface $X_5$. 
Let $D$ be the Morse divisor associated the valise height, and let $D_1$ be obtained by raising a single white vertex. 
We will argue that $D_1$ is not equivalent to $D$ in $\textrm{Pic}^0(X_5)$. 
To do this, we will consider the image of $D-D_1$ under the push-forward map associated to one of the maps $f\colon X\to E$, where $E$ is one of the elliptic curves appearing in the isogenous decomposition of the Jacobian of $X_5$ discussed in the previous section. 

Let $w$ denote the vertex being raised, and let $b_i,f_i$ for $i=1,\dots, 5$ denote the adjacent black vertices and centers of faces to $w$. 
The divisor difference is then
$$D-D_1=b_1+b_2+b_3+b_4+b_5-(f_1+f_2+f_3+f_4+f_5).$$

For concreteness, $\xi=e^{\frac{\pi i}{5}}$, let $\zeta=e^{\frac{2\pi i}{5}}$, and let $\xi_i=\xi\cdot \zeta^{i-1}$. 
Then $\xi_1,\xi_2,\xi_3,\xi_4,\xi_5$ are the $5$-th roots of $-1$, ordered in a counterclockwise fashion. 
The elliptic curve $E$ that we will work with will be branched over the first four roots of $-1$, and so we can take the following equation
$$E\colon\ y^2=(x-\xi_1)(x-\xi_2)(x-\xi_3)(x-\xi_4).$$

In order to work with $E$ computationally, it will be convenient to write it in Weierstrass form. 
To do this, we use a M\"{o}bius transformation to map $\xi_1,\xi_2,\xi_4$ to $0,1,\infty$ respectively. 
We obtain the following different model for $E$:
$$E\colon y^2=x(x-1)(x-(2+\xi^2-\xi^3)).$$

Under the map $X_5\to E$, the points $f_1,f_2,f_3,f_4$ will map to the $2$-torsion points on $E$, and the point $f_5$ will map to a point with $x$-coordinate given $\xi^3-\xi+1$, the image of $\xi_5$ under the M\"{o}bius transformation. 
Using sage \cite{sagemath}, one can verify that the image of $f_5$ is a point of order $4$ in the group law on $E$. 

On the other hand, each of the black vertices must map to one of the two points with $x$-coordinate $-\xi^3-\xi+1$, the image of $\infty$ under the M\"{o}bius transformation. 
These two points are additive inverses of each other, and we will have three of the black vertices mapping to one point, and two black vertices mapping to the other. 
It follows that image of the divisor $b_1+b_2+b_3+b_4+b_5$ is equal to one of these two points. 
Using sage, we see that both of these points have \emph{infinite} order in the group law on $E$. 
Therefore, the push-forward $f_*(D-D_1)$ is the sum of a point of infinite order and a point of order $4$, and is therefore not trivial. 
This shows that $D-D_1$ cannot be principal, and so $D$ and $D_1$ are not linearly equivalent. 
	
\end{example}

While the Morse divisor classes do not always give rise to the anti-canonical class, the above examples illustrate the fact that the set of all Morse divisor classes on a fixed Adinkra naturally give rise to a subset of points lying on the associated Jacobian. Note that this subset of points is not, in general, a subgroup of the Jacobian --- we saw in Example 4.3.2 that we can obtain points of infinite order in the Jacobian despite the fact that there are only finitely many height functions up to shifting, and thus only finitely many points arising from Morse divisor classes. 

\begin{proposition}
	Fix an Adinkra chromotopology $X_\mathcal{A}$, and let $D_v$ denote the Morse divisor associated to the valise height function. The mapping
	$$h\mapsto D_h-D_v$$ induces a well-defined map from the set of height-functions, up to shifting, to the Jacobian of $X_\mathcal{A}$.
\end{proposition}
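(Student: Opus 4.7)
The plan is to verify the two things required for well-definedness: first, that the divisor $D_h$ itself is invariant under an overall shift of $h$, so that $h \mapsto D_h$ descends to the quotient by shifting; and second, that the degree of $D_h - D_v$ vanishes, so that its linear equivalence class lies in $\textrm{Pic}^0(X_\mathcal{A})$, which is the Jacobian.

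First I would observe that the entire construction of the discrete Morse function associated to $h$, and hence of the multiplicities $m_p$ used to define $D_h$, depends only on the comparisons $h(v) < h(w)$ between heights of adjacent vertices of the extended mesh $M$. The extension of $h$ to the added grey vertex at the center of a bowtie face was defined as the average of the four neighboring heights, so it is also purely a function of the height differences. Replacing $h$ by $h+c$ for any constant $c\in\mathbf{Z}$ therefore preserves the upper link, lower link, and set of mixed edges at every vertex of $M$, and consequently preserves the classification of each $v\in M$ (minimum, maximum, regular, or saddle of a fixed multiplicity) according to Table \ref{critical}. Hence $m_p(h) = m_p(h+c)$ for all $p$, so $D_{h+c}=D_h$ as divisors, not merely as linear equivalence classes. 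This shows that $h\mapsto D_h$ descends to the set of height functions modulo shifting.

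Next I would invoke Remark \ref{MorseRemark}, which records that $\deg D_h = \chi(X_\mathcal{A})$ for every height function $h$; equivalently, this is Banchoff's formula (\ref{eq:EulerCrit}) applied to the triangular mesh $M$. In particular, applying this to both $h$ and to the valise height function $v$, we get
\[
\deg(D_h - D_v) \;=\; \chi(X_\mathcal{A}) - \chi(X_\mathcal{A}) \;=\; 0,
\]
so $D_h - D_v \in \textrm{Div}^0(X_\mathcal{A})$. Passing to the quotient by principal divisors, the linear equivalence class $[D_h - D_v]$ lies in $\textrm{Pic}^0(X_\mathcal{A})$, which by definition is the Jacobian of $X_\mathcal{A}$. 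Combining the two steps, the assignment $h\mapsto [D_h - D_v]$ is well-defined on shift-equivalence classes of height functions and takes values in the Jacobian, as claimed.

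There is essentially no main obstacle here: both assertions are formal consequences of the setup developed earlier in the section. The only point requiring care is the shift-invariance of the grey-vertex extension, and this is immediate from the averaging formula. I would simply make the above two paragraphs explicit and cite Table \ref{critical} and Remark \ref{MorseRemark} for the critical-point classification and the degree formula respectively.
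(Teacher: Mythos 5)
Your proposal is correct and follows essentially the same route as the paper: shift-invariance of $D_h$ (which the paper records in Remark \ref{MorseRemark} and the surrounding discussion, and which you usefully spell out via the shift-invariance of the height comparisons and of the averaged grey-vertex values) together with the constancy of $\deg D_h = \chi(X_\mathcal{A})$, so that $D_h - D_v$ defines a class in $\textrm{Pic}^0(X_\mathcal{A})$. No gaps; your version is just a slightly more explicit rendering of the paper's argument.
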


\begin{proof}
It was remarked earlier that the Morse divisor of a height function is left invariant if we shift the height-function by a constant. 
From the fact that each Morse divisor has the same degree, namely $\chi(X_\mathcal{A})$, it follows that the difference $D_h-D_v$ is a degree-$0$ divisor for every height function $h$. 
The class of this difference in $\textrm{Pic}(X_\mathcal{A})$ is therefore a well-defined point in the Jacobian. 	
\end{proof}

\section{Conclusion}

In this paper we have completed the canonical construction of a geometry associated with the $N$-extended one-dimensional super Poincar\'{e} algebra via Adinkra graphs.  That a naturally defined Riemann surface could be attached to the Adinkra chromotopology as in \cite{Doran:2015} was already surprising.  That the remaining two Adinkra structures, the odd-dashing and the height assignment, also correspond so cleanly to spin structures and discrete Morse functions/divisors, respectively, on this Riemann surface is remarkable.  Recall that, given an Adinkra, the invariant geometric object is an assignment of a Riemann surface (with spin structure and Morse function) to each rainbow.  Since the combinatorial data of the Adinkra graph determines all of these, there is a notion of compatibility between these (possibly nonisomorphic) Riemann surfaces with spin structure.  We have succeeded in showing that the 1D shadows of supermultiplets canonically define a geometric package consisting of collections of super Riemann surfaces with Morse divisors.

It is important to note that its derivation purely from the $N$-extended Poincar\'{e} superalgebra means that this geometry does not {\em a priori} have a pre-determined role to play in any particular supersymmetric physical theory.  Nevertheless, one can reasonably expect geometrized Adinkras to appear in some form in {\em every} context where the requisite amount of supersymmetry is present.  

First evidence for this is provided by the geometrized Adinkra itself.  Both the spin structure and Morse function restrict to Lagrangian (real) curves.  It is striking that this is exactly the data needed to construct the wrapped Fukaya category for the noncommutative homological mirror construction for punctured Riemann surfaces.  A full exploration of the connection between the fundamental building blocks of supersymmetry and the simplest fully-functional model of homological mirror symmetry is underway \cite{Doran2016}.

\section*{Appendix A: $R$-Symmetry Revisited}
Let $C\ins\mathbf{F}_2^N$ be a doubly-even code and let $\sigma\in S_N$ be a permutation that stabilizes the code. 
Fix a rainbow $R$ and denote by $R^\sigma$ the rainbow obtained by applying $\sigma$ to $R$. 
These choices give two chromotopologies $\mathcal{A}$ and $\mathcal{A}^\sigma$; in this appendix we show that that the Riemann surfaces $X_\mathcal{A}$ and $X_{\mathcal{A}^\sigma}$ are isomorphic. 
To prove this fact we will show that the monodromy representations for the two surfaces are globally conjugate. 
 
Since $C^\sigma=C$, the underlying Adinkra graphs are identical and we may therefore label the vertices with the elements of $\mathbf{F}_2^N/C$. 
If $\mathcal{C}\ins\mathbf{F}_2^N$ denotes the maximal even subcode, then the white vertices correspond to the cosets in $\mathcal{C}/C\ins\mathbf{F}_2^N/C$, and the black vertices correspond to the remaining cosets;
in what follows, we will use $W$ and $B$ to denote the set of white and black vertices respectively. 
Denote by $i_c$ the edge of color $i$ incident to the white vertex $w\in W$. 

For simplicity, we assume that the rainbow $R$ is given by $(1,2,\dots, N)$. 
Then, the monodromy over $0$ and $1$ for the Riemann surface $X_\mathcal{A}$ is given by 
$$\sigma_0=\prod_{w\in W}(1_w,\dots, N_w)\ \textrm{and}\ \sigma_1=\prod_{b\in B}(N_{b+e_N},\dots, 2_{b+e_2},1_{d+e_1}),$$
where $e_i$ is the standard basis vector of $\mathbf{F}_2^N$ \cite{Doran:2015}. 
These monodromy permutations are elements of the symmetric group on the set $E:=\{i_w|\ w\in W,\ i=1,\dots N\}$. The monodromy for $X_{\mathcal{A}^\sigma}$ is written down similarly using the rainbow $R^\sigma=(\sigma(1),\dots, \sigma(N))$. 
 We denote the corresponding permutations by $\tilde{\sigma_i}$. 

\begin{Proposition}
	The monodromies for $X_\mathcal{A}$ and $X_{\mathcal{A}^\sigma}$ are globally conjugate. That is, there is a permutation $\tau\in S_E$ satisfying $\tau\sigma_i\tau^{-1}=\tilde{\sigma_i}$. 
\end{Proposition}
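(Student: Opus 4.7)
The plan is to construct the conjugating permutation $\tau$ directly from the coordinate-permutation action of $\sigma$ on $\mathbf{F}_2^N$, and then verify the two required identities by matching cycle decompositions.

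First I would observe that because $\sigma$ stabilizes $C$, its natural coordinate-permutation action on $\mathbf{F}_2^N$ descends to a parity-preserving $\mathbf{F}_2$-linear bijection of $\mathbf{F}_2^N/C$, which I continue to denote by $\sigma$; in particular it sends the basis vector $e_j$ to $e_{\sigma(j)}$ and preserves the partition into white and black vertices. With this in hand, I would define
$$\tau \colon E \to E, \qquad \tau(i_w) = \sigma(i)_{\sigma(w)},$$
which is a bijection of $E$ because $\sigma$ is a bijection on $\mathbf{F}_2^N/C$ preserving $W$.

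Next I would verify the two conjugation relations by inspection on each cycle of the given decomposition. For $\sigma_0$, the cycles are indexed by white vertices and simply advance the color cyclically in the rainbow order; by construction $\tau$ carries the $\sigma_0$-cycle at $w$ to the $\tilde{\sigma}_0$-cycle at $\sigma(w)$ (whose colors, in order, are $\sigma(1),\dots,\sigma(N)$), giving $\tau \sigma_0 \tau^{-1}=\tilde{\sigma}_0$ immediately. For $\sigma_1$ I would use the explicit formula $\sigma_1(i_w)=(i-1)_{w+e_i+e_{i-1}}$ (indices mod $N$) for a white vertex $w$, and compute
$$\tau\bigl(\sigma_1(i_w)\bigr) = \sigma(i-1)_{\sigma(w)+e_{\sigma(i)}+e_{\sigma(i-1)}},$$
using linearity of $\sigma$. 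The corresponding expression $\tilde{\sigma}_1(\tau(i_w))$ is read off the $\tilde{\sigma}_1$-cycle at the black vertex $\sigma(w)+e_{\sigma(i)}$ in the rainbow $R^\sigma$, and it yields precisely the same edge.

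The main obstacle will be the bookkeeping at black vertices, where the cycle formula for $\sigma_1$ combines a color shift $i\mapsto i-1$ with a shift $w\mapsto w+e_i+e_{i-1}$ of the incident white vertex. What makes this verification go through is precisely that $\sigma$ acts $\mathbf{F}_2$-linearly on $\mathbf{F}_2^N/C$ as a coordinate permutation, so that it intertwines the color shift with the vertex shift compatibly. Once this check is complete the proposition follows, and combined with the standard dessin d'enfants dictionary between conjugacy classes of monodromy pairs and isomorphism classes of Belyi covers, this yields the desired isomorphism $X_\mathcal{A}\cong X_{\mathcal{A}^\sigma}$.
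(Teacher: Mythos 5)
Your proposal is correct and follows essentially the same route as the paper: you define exactly the same conjugating permutation (your induced coordinate-permutation action of $\sigma$ on $\mathbf{F}_2^N/C$ is the paper's map $\phi$, and your $\tau(i_w)=\sigma(i)_{\sigma(w)}$ is the paper's $\tau\colon i_c\mapsto\sigma(i)_{\phi(c)}$), and the verification via linearity, $e_i\mapsto e_{\sigma(i)}$, and matching the cycles of $\sigma_0$ and $\sigma_1$ edge-by-edge is the same computation the paper performs on whole cycles. No gaps.
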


\begin{proof}

Let $\phi$ be the automorphism of $\mathbf{F}_2^N$ obtained by permuting the basis vectors $e_i$ in accordance to $\sigma$. 
The map $\phi$ preserves the hamming weight and, by definition, $\phi(C)=C^\sigma$.  

Let $\tau$ be the permutation in $S_E$ given by 
$$\tau\colon i_c\mapsto \sigma(i)_{\phi(c)}.
$$
Then, we have
$$\tau\sigma_0\tau^{-1}=\prod_{w\in W}(\sigma(1)_{\phi(w)},\dots, \sigma(N)_{\phi(w)}).$$
Because the cycles in the above product are disjoint, and can therefore be reordered in the product, it is clear that the right hand side is $\tilde{\sigma_i}$. 

Similarly, we have
$$\tau\sigma_1\tau^{-1}=\prod_{b\in B}(\sigma(N)_{\phi(b+e_N)},\dots, \sigma(2)_{\phi(b+e_2)},\sigma(1)_{\phi(b+e_1)}).$$
Note that $$\sigma(i)_{\phi(b+e_i)}=\sigma(i)_{\phi(b)+\phi(e_i)}=\sigma(i)_{\phi(b)+e_{\sigma(i)}}.$$
It follows from this that $\tau\sigma_1\tau^{-1}=\tilde{\sigma_1}$, and so the monodromies are globally conjugate. 
	
\end{proof}

\section*{Appendix B: Proof of Corollary \ref{Cor:Paul} }

\begin{proof}[Proof of Corollary \ref{Cor:Paul}]
	The result is a consequence of the Kani-Rosen decomposition theorem \cite{Kani1989}. 
	Indeed, if $K_1$ and $K_2$ are two distinct subgroups of index $2$ containing the doubly even code $C$, then $K_1$ and $K_2$ commute with each other since they are subgroups of $\mathcal{C}$, and the group generated by $K_1$ and $K_2$ is all of $\mathcal{C}$. We have $K_1K_2/C =\mathcal{C}/C$ since $K_1K_2 = \mathcal{C}$.  
	In particular, the subgroups $K_i/C$ commute and the quotient by the group generated by them is the beachball $B_N$ of genus $0$. 
	
	It remains to show that the genus of the associated Riemann surface is equal to the sum of the genera of the hyperelliptic subcovers. 
	The following argument is a modification of a proof of this fact found by P. Green. 
	Let $\{c_i\}$ denote the usual generating set for the maximally even code $\mathcal{C}$. 
	As described earlier, the index $2$ subgroups of $\mathcal{C}$ that contain $C$ are in one-to-one correspondence with the non-trivial homomorphisms $\phi\colon\mathcal{C}/C\to\mathbf{F}_2$, of which there are precisely $2^{N-k}-1$, where $k=\dim C$. 
	For each such $\phi$, the hyperelliptic curve $X_\phi$ is branched over the $i$-th roots of $-1$ for which $\phi(c_i)=1$.
	Let $\delta_{\phi,i}$ be $0$ or $1$ (as integers) in accordance with whether $\phi(c_i)$ is $0$ or $1$ (in $\mathbf{F}_2)$. 
	Then, the previous remark implies that
	$$g(X_\phi)=\frac{1}{2}\left(\sum_i \delta_{\phi,i}-2\right).$$
	
	On the other hand, since no $c_i$ lies in the doubly even code $C$, it follows that there are precisely $2^{N-k-1}$ homomorphisms $\phi$ satisfying $\phi(c_i)=1$. 
	That is, 
	$$\sum_{\phi}\delta_{\phi,i}=2^{N-k-2},$$ where the sum is taken over all non-trivial homomorphisms $\phi$. 
	
	Now compute the sum of the genera:
	\begin{eqnarray*}
		\sum_\phi g(X_\phi)&=&\sum_\phi\left(\frac{1}{2}\left(\sum_{i}\delta_{\phi,i}-2\right)\right)\\
		&=&\frac{1}{2}\left(\sum_\phi\sum_i\delta_{\phi,i}\right)-(2^{N-k}-1)\\
		&=&2^{N-k-3}\cdot N-2^{N-k}+1\\
		&=&2^{N-k-3}\cdot(N-4)+1,
	\end{eqnarray*}
	which is precisely the genus of the Riemann surface associated to $C$.
	\end{proof}

\bibliographystyle{hplain}

\begin{thebibliography}{10}

\bibitem{Banchoff:1970}
T.F. Banchoff.
\newblock {Critical Points and Curvature for Embedded Polyhedral Surfaces}.
\newblock {\em Am. Math. Monthly}, 70(5):475--485, May 1970.

\bibitem{Beinekea}
L.~W. Beineke and F.~Harary.
\newblock {The Genus of the $n$-Cube}.
\newblock {\em Canad. J. Math.}, 17:494--496, 1965.

\bibitem{Bloch:2010}
E.~D. {Bloch}.
\newblock {Polyhedral Representation of Discrete Morse Functions on Regular CW
  Complexes and Posets}.
\newblock {\em ArXiv e-prints}, August 2010, 1008.3724.

\bibitem{Hidalgo2015}
M.~Carvacho, R.~A. Hidalgo, and S.~Quispe.
\newblock {Isogenous Decomposition of the Jacobian of Generalized Fermat
  Curves}.
\newblock {\em Quart. J. Math.}, (00):1--24, 2016, 1507.02903.

\bibitem{Doran2016}
M.~Chinen, C.~F. Doran, A.~Harder, and J.~Kostiuk.
\newblock From {S}upermultiplets to {M}irror {S}ymmetry.
\newblock {\em Paper in preparation}.

\bibitem{Cimasoni:2007}
D.~{Cimasoni} and N.~{Reshetikhin}.
\newblock {Dimers on Surface Graphs and Spin Structures. I}.
\newblock {\em Comm. Math. Phys}, 275:187--208, October 2007,
  arXiv:math-ph/0608070.

\bibitem{sagemath}
The~Sage Developers.
\newblock {\em {S}ageMath, the {S}age {M}athematics {S}oftware {S}ystem
  ({V}ersion 7.3)}, 2016.
\newblock {\tt http://www.sagemath.org}.

\bibitem{Donagi:2013}
R.~Donagi and E.~Witten.
\newblock {Supermoduli Space is not Projected}.
\newblock {\em String Math 2012, Proc. Symp. Pure Math}, (90):19--72, 2015.

\bibitem{Doran2005}
C.~F. Doran, M.~G. Faux, S.~J. Gates, T.~Hubsch, K.~M. Iga, and G.~D.
  Landweber.
\newblock {On Graph-Theoretic Identifications of Adinkras, Supersymmetry
  Representations and Superfields}.
\newblock {\em Int. J. Mod. Phys}, A22:869--930, 2007, 0512016.

\bibitem{Doran2017}
C.~F. Doran, M.~G. Faux, S.~J. {Gates}, Jr., T.~H{\"u}bsch, K.~M. Iga, G.~D.
  Landweber.
\newblock Off-Shell Supersymmetry and Filtered Clifford Supermodules 
\newblock {\em Algebra and Representation Theory} 2017

\bibitem{Doran:2011}
C.~F. Doran, M.~G. Faux, S.~J. {Gates}, Jr., T.~H{\"u}bsch, K.~M. Iga, G.~D.
  Landweber, and R.~L. Miller.
\newblock {Codes and Supersymmetry in One Dimension}.
\newblock {\em Adv. Theor. Math. Phys.}, 15:1909--1970, 2011, 1108.4124.

\bibitem{Doran:2015}
C.~F. {Doran}, K.~M. {Iga}, J.~{Kostiuk}, G.~D. {Landweber}, and
  S.~{M\'{e}ndez-Diez}.
\newblock {Geometrization of $N$-Extended $1$-Dimensional Supersymmetry
  Algebras, I.}
\newblock {\em Adv. Theor. Math. Phys}, 19(5):1043--1113, 2015.

\bibitem{Doran2012b}
C.~F. Doran, K.~M. Iga, and G.~D. Landweber.
\newblock {An application of Cubical Cohomology to Adinkras and Supersymmetry
  Representations}.
\newblock {\em To appear in Ann. Inst. Henri Poincar\'{e} Comb. Phys. Interact. (4)}, 2017.

\bibitem{Faux:2004}
M.~G. Faux and S.~J. {Gates}, Jr.
\newblock {Adinkras: a graphical technology for supersymmetric representation
  theory}.
\newblock {\em Phys. Rev.}, D71:065002, 2005, hep-th/0408004.

\bibitem{Forman:1998a}
Robin Forman.
\newblock {Morse Theory for Cell Complexes}.
\newblock {\em Advances in Mathematics}, 134:90--145, 1998.

\bibitem{Forman:1998b}
Robin Forman.
\newblock {Witten-Morse Theory For Cell Complexes}.
\newblock {\em Topology}, 37(5):945—979, 1998.

\bibitem{Johnson:1980}
D.~Johnson.
\newblock Spin structures and quadratic forms on surfaces.
\newblock {\em J. London Math. Soc. (2)}, 22(2):365--373, 1980.

\bibitem{Kani1989}
E.~Kani and M.~Rosen.
\newblock {Idempotent Relations and Factors of Jacobians}.
\newblock {\em Mathematische Annalen}, 284(2):307--327, 1989.

\bibitem{Ni:2004}
X.~Ni, M.~Garland, and J.~C. Hart.
\newblock Fair morse functions for extracting the topological structure of a
  surface mesh.
\newblock {\em ACM Trans. Graph.}, 23(3):613--622, August 2004.

\bibitem{Witten:2012}
E.~Witten.
\newblock {Notes On Super Riemann Surfaces And Their Moduli}.
\newblock 2012, 1209.2459.



\end{thebibliography}

\end{document}